%
%
%

\documentclass{article}
\usepackage[margin=1in]{geometry}
\usepackage{amsthm,amsmath,amssymb}
\usepackage{graphicx}
\usepackage{natbib}
\usepackage[hidelinks,breaklinks]{hyperref}
\pdfsuppresswarningpagegroup=1

\newtheorem{theorem}{Theorem}[]
\newtheorem{lemma}[theorem]{Lemma}

\newtheorem{remark1}[theorem]{Remark}
\newenvironment{remark}{\begin{remark1} \rm}{\end{remark1}}

\DeclareMathOperator{\nint}{nint}
\DeclareMathOperator*{\argmin}{argmin}
\DeclareMathOperator{\bigoh}{\mathcal{O}}
\def\Id{\rm Id}

\title{An efficient algorithm for integer lattice reduction}
\author{Fran\c{c}ois Charton, Kristin Lauter, Cathy Li, and Mark Tygert}

\begin{document}

\maketitle

\begin{abstract}
A lattice of integers is the collection of all linear combinations
of a set of vectors for which all entries of the vectors are integers
and all coefficients in the linear combinations are also integers.
Lattice reduction refers to the problem of finding a set of vectors
in a given lattice such that the collection of all integer linear combinations
of this subset is still the entire original lattice and so that
the Euclidean norms of the subset are reduced.
The present paper proposes simple, efficient iterations for lattice reduction
which are guaranteed to reduce the Euclidean norms of the basis vectors
(the vectors in the subset) monotonically during every iteration.
Each iteration selects the basis vector for which projecting off
(with integer coefficients) the components of the other basis vectors
along the selected vector minimizes the Euclidean norms
of the reduced basis vectors.
Each iteration projects off the components along the selected basis vector
and efficiently updates all information required for the next iteration
to select its best basis vector and perform the associated projections.
\end{abstract}

\section{Introduction}
\label{intro}

Lattices of integers are common tools in cryptography and number theory,
among other areas, as reviewed by~\cite{cassels}, \cite{peikart}, and others.
A lattice of integers in $m$ dimensions consists
of all linear combinations of a set of $n$ vectors,
with all coefficients in the linear combinations being integers
and all $m$ entries of each of the $n$ vectors being integers.
Thus, the lattice is a collection of infinitely many vectors of integers.

The set of $n$ vectors whose linear combinations form the lattice
is known as a ``basis'' for the lattice.
Traditionally the vectors in the basis are required to be linearly independent,
but throughout the present paper, the term ``basis'' will abuse terminology
slightly in omitting any requirement to be linearly independent.
The term ``reduced basis'' refers
to any unimodular integer linear transformation of a basis.
(A unimodular integer linear transformation is multiplication
with an invertible $n \times n$ matrix such that all entries of the matrix
and its inverse are integers; the entries of the inverse of a matrix
of integers are integers if and only if the absolute value
of the determinant of the matrix is 1.)
The goal of lattice reduction (the topic of the present paper)
is to minimize the Euclidean norms of the vectors in the reduced basis.

Finding a nonzero vector in the lattice whose Euclidean norm is least
(or nearly least) is a common use of lattice reduction. To see the connection,
fix any positive real number $p$.
Any reduced basis which minimizes the sum of the $p$-th powers
of the Euclidean norms of the vectors in the reduced basis
must include a shortest vector as one of the basis vectors.
Indeed, if no basis vector is the shortest vector,
then the sum of the $p$-th powers of the Euclidean norms
can be made smaller by replacing with the shortest vector
one of the basis vectors whose projection on the shortest vector is nonzero.
``Projection'' is defined as follows.

The projection of a column vector $v$ onto a column vector $w$ is $c \cdot w$,
with the scalar coefficient $c$ defined to be the real number
\begin{equation}
c = \frac{v^\top w}{w^\top w},
\end{equation}
where $v^\top$ denotes the transpose of $v$
(and $w^\top$ denotes the transpose of $w$).
Needless to say, $v^\top w$ is the inner product between $v$ and $w$,
while $w^\top w$ is the inner product of $w$ with itself, which is of course
the square of the Euclidean norm of $w$.
Unfortunately, even if all entries of both $v$ and $w$ are integers,
subtracting off this projection $c \cdot w$ from $v$ directly
would result in a vector, $v - c \cdot w$, whose entries may not be integers.
Fortunately, all entries of the vector $v - \nint(c) \cdot w$
would remain integers, and Subsection~\ref{theory} below proves that
the Euclidean norm of $v - \nint(c) \cdot w$ is less than or equal to
the Euclidean norm of $v$. (Here and throughout the present paper,
$\nint(c)$ denotes the result of rounding $c$ to the nearest integer.)
In fact, if both $v$ and $w$ are in a lattice of integers,
then $v - \nint(c) \cdot w$ is in the same lattice of integers
and is shorter (or at least no longer) than $v$.
Theorem~\ref{theorem} below elaborates.

Let us denote by $a^0_1$, $a^0_2$, \dots, $a^0_n$ the initial basis vectors,
each being a column vector of $m$ integers.
The present paper proposes an iterative algorithm that reduces the basis
from iteration $i$ to iteration $i+1$ from $a^i_1$, $a^i_2$, \dots, $a^i_n$
to $a^{i+1}_1$, $a^{i+1}_2$, \dots, $a^{i+1}_n$, starting with $i=0$.
Each iteration $i$ of the algorithm selects an index $k$
that minimizes the Euclidean norms $\| a^i_j - c^i_{j,k} \cdot a^i_k \|$
for all $j \ne k$, where the scalar projection coefficient $c^i_{j,k}$ is
\begin{equation}
\label{rounded}
c^i_{j,k} = \nint\left( \frac{(a^i_j)^\top a^i_k}{(a^i_k)^\top (a^i_k)} \right)
\end{equation}
for $j \ne k$ and
\begin{equation}
c^i_{j,k} = 0
\end{equation}
for $j = k$.
Specifically, the index $k$ is chosen to minimize the sum of the $p$-th powers
of the Euclidean norms, that is, $k$ minimizes
$\sum_{j = 1}^n \| a^i_j - c^i_{j,k} \cdot a^i_k \|^p$.
(If $p = 2$, then this sum is the square of the Frobenius norm of the matrix.)
Iteration $i$ constructs the matrix for the next iteration as
$a^{i+1}_j = a^i_j - c^i_{j,k} \cdot a^i_k$.

{\it Prima facie}, considering all possible indices in order to minimize
the sum of the $p$-th powers appears to be computationally expensive.
However, the sum being minimized can be evaluated efficiently
via the Gram matrix whose entries are $g^i_{j,k} = (a^i_j)^\top a^i_k$,
as can the projection coefficients $c^i_{j,k}$ defined in~(\ref{rounded}).
Moreover, updating the entries of the Gram matrix from one iteration
to the next is also computationally efficient.

The algorithm of the present paper can complement others for lattice reduction,
such as the ``LLL'' algorithm introduced by~\cite{lenstra-lenstra-lovasz}.
Two entire books about the LLL algorithm are those
of~\cite{nguyen-vallee} and~\cite{bremner}.
The numerical examples presented below report the results
of running the algorithm of~\cite{lenstra-lenstra-lovasz}
in conjunction with the algorithm of the present paper.
The implementation of the LLL algorithm used here is
significantly more efficient (albeit less robust to worst-case,
adversarial examples devised for applications outside cryptography)
than the others' reviewed by~\cite{schnorr-euchner} and~\cite{stehle};
see Subsection~\ref{implementation} below
for details.\footnote{Permissively licensed open-source codes implementing
both the algorithm of the present paper and the classical LLL algorithm
of~\cite{lenstra-lenstra-lovasz} are available
at \url{https://github.com/facebookresearch/latticer}}

The primary purpose of the algorithm of the present paper is to polish
the outputs of other algorithms for lattice reduction;
on its own, the algorithm of the present paper tends to get stuck
in local minima, with the iterations attaining an equilibrium
that is far away from optimally minimizing the sum of the $p$-th powers.
The convergence is monotonic, but need not arrive at the optimal minimum
when fully converged.

The remainder of the present paper has the following structure:
Section~\ref{methods} describes and analyzes the algorithm in detail. 
Section~\ref{results} illustrates the algorithm via several numerical examples,
comparing and combining the proposed algorithm with the classic method
of~\cite{lenstra-lenstra-lovasz}.
Section~\ref{conclusion} draws some conclusions.
Appendix~\ref{poor} sketches several seemingly natural alternatives
that performed poorly in numerical experiments;
this appendix also points to other authors' variations of the LLL algorithm.
Appendix~\ref{further} supplements the figures of Section~\ref{results}
with further figures.

\section{Methods}
\label{methods}

This section elaborates the algorithm of the present paper.
First, Subsection~\ref{algorithm} details the algorithm and estimates
its computational costs.
Then, Subsection~\ref{theory} proves that the algorithm
converges monotonically, strictly reducing the sum of the $p$-th powers
of the Euclidean norms of the basis vectors during every iteration
(and hence halting after a finite number of iterations).

\subsection{Algorithm}
\label{algorithm}

This subsection describes the algorithm of the present paper in detail.

Consider the $n$ column vectors $a^0_1$, $a^0_2$, \dots, $a^0_n$,
each of size $m \times 1$.
The proposed scheme makes no assumption on the relative sizes of $m$ and $n$
--- $m$ can be less than, equal to, or greater than $n$.
Calculate the entries of the symmetric square Gram matrix 
\begin{equation}
\label{Gram0}
g^0_{j,k} = (a^0_j)^\top (a^0_k)
\end{equation}
for $j = 1$, $2$, \dots, $n$, and $k = 1$, $2$, \dots, $n$,
where $(a^0_j)^\top$ denotes the transpose of $a^0_j$.
(Of course, $(a^0_j)^\top (a^0_k)$ is simply the inner product
between $a^0_j$ and $a^0_k$.)
This costs $\bigoh(mn^2)$ operations.

Iterations, $i = 0$, $1$, $2$, \dots, will maintain the relation
\begin{equation}
\label{Gram}
g^i_{j,k} = (a^i_j)^\top (a^i_k)
\end{equation}
for $j = 1$, $2$, \dots, $n$, and $k = 1$, $2$, \dots, $n$.

Now repeat all of the following steps, again and again,
moving from $i = 0$ to $i = 1$ to $i = 2$ and so on:

Calculate the projection coefficients
\begin{equation}
c^i_{k,k} = 0
\end{equation}
for $k = 1$, $2$, \dots, $n$, and
\begin{equation}
\label{coeffs}
c^i_{j,k} = \nint\left( \frac{(a^i_j)^\top a^i_k}
                             {(a^i_k)^\top (a^i_k)} \right)
          = \nint\left( \frac{g^i_{j,k}}{g^i_{k,k}} \right)
\end{equation}
for $j = 1$, $2$, \dots, $n$, and $k = 1$, $2$, \dots, $n$ with $k \ne j$,
where $\nint$ denotes the nearest integer.
This costs $\bigoh(n^2)$ operations.

The sum of the $p$-th powers of the Euclidean norms
when projecting off the $k$-th vector is
\begin{equation}
\label{sos}
s^i_k = \sum_{j=1}^n \| a^i_j - c^i_{j,k} \, a^i_k \|^p
= \sum_{j=1}^n \left( g^i_{j,j} + (c^i_{j,k})^2 \, g^i_{k,k}
- 2 c^i_{j,k} \, g^i_{j,k} \right)^{p/2}
\end{equation}
for $k = 1$, $2$, \dots, $n$.
The iterations are likely to work better by starting with $p = 2$
and only later running with, say, $p = 1$.
This costs $\bigoh(n^2)$ to compute.

Define $\tilde{\imath}$ to be the index such that $s^i_{\tilde{\imath}}$
is minimal, that is,
\begin{equation}
\tilde{\imath} = \argmin_{1 \le k \le n} s^i_k.
\end{equation}
This costs $\bigoh(n)$ to calculate.

Project off the $\tilde{\imath}$-th vector to update every vector
\begin{equation}
\label{ortho}
(a^{i+1}_k)_j
= (a^i_k)_j - c^i_{k,\tilde{\imath}} \, (a^i_{\tilde{\imath}})_j
\end{equation}
for $j = 1$, $2$, \dots, $m$, and $k = 1$, $2$, \dots, $n$,
where $(a^i_k)_j$ denotes the $j$-th entry of $a^i_k$.
This costs $\bigoh(mn)$.

Update the Gram matrix via the relation
\begin{equation}
\label{Gram_update}
g^{i+1}_{j,k}
= (a^{i+1}_j)^\top (a^{i+1}_k)
= (a^i_j - c^i_{j,\tilde{\imath}} \, a^i_{\tilde{\imath}})^\top
  (a^i_k - c^i_{k,\tilde{\imath}} \, a^i_{\tilde{\imath}})
= g^i_{j,k}
+ c^i_{j,\tilde{\imath}} \, c^i_{k,\tilde{\imath}}
\, g^i_{\tilde{\imath},\tilde{\imath}}
- c^i_{j,\tilde{\imath}} \, g^i_{\tilde{\imath},k}
- c^i_{k,\tilde{\imath}} \, g^i_{j,\tilde{\imath}}
\end{equation}
for $j = 1$, $2$, \dots, $n$, and $k = 1$, $2$, \dots, $n$.
This costs $\bigoh(n^2)$.

The total cost per iteration is $\bigoh(mn+n^2)$ operations.
Notice that only the precomputation of the Gram matrix in~(\ref{Gram0})
and~(\ref{ortho}) explicitly involve the individual entries of the vectors;
all other steps in the iterations involve only the entries of the Gram matrix.

\begin{remark}
Another possibility is to replace the sum in~(\ref{sos})
with a maximum, replacing~(\ref{sos}) with
\begin{equation}
s^i_k = \max_{1 \le j \le n} \| a^i_j - c^i_{j,k} \, a^i_k \|^2
= \max_{1 \le j \le n} \left( g^i_{j,j} + (c^i_{j,k})^2 \, g^i_{k,k}
- 2 c^i_{j,k} \, g^i_{j,k} \right)
\end{equation}
for $k = 1$, $2$, \dots, $n$.
However, using the maximum may fail to force any but the longest vectors
in the reduced basis to become shorter.
\end{remark}

\begin{remark}
The Gram matrix is symmetric, that is, $g^i_{j,k} = g^i_{k,j}$
for all $i = 0$, $1$, $2$, \dots,
for $j = 1$, $2$, \dots, $n$, and $k = 1$, $2$, \dots, $n$.
Calculating $g^i_{j,k}$ for only $j \le k$ suffices to fill the entire matrix
for all $j = 1$, $2$, \dots, $n$, and $k = 1$, $2$, \dots, $n$.
This can save computational costs in~(\ref{Gram0}) and~(\ref{Gram_update}).
\end{remark}

\begin{remark}
Cryptography can benefit from reductions to collections of basis vectors
that include all the unit basis vectors, each multiplied by the prime order
of a finite field, in addition to the other basis vectors.
This essentially formalizes the concept of lattice reduction
over the finite field. 
Instead of working directly on a collection of basis vectors
$\left( \begin{array}{c|c} A & q \cdot \Id \end{array} \right)$
in this way, where the columns of $A$ form the initial collection
of basis vectors, $q$ is the order of the finite field,
and $\Id$ is the identity matrix, the classical methods for lattice reduction
--- such as the LLL algorithm of~\cite{lenstra-lenstra-lovasz} ---
must operate instead on 
\begin{equation}
\left( \begin{array}{c|c} A   & q \cdot \Id \\\hline
                          \Id & 0           \end{array} \right),
\end{equation}
effectively doubling the dimension of the basis vectors.
\end{remark}

\subsection{Theory}
\label{theory}

The purpose of this subsection is to state and prove Theorem~\ref{theorem},
elaborating the facts stated in the fourth paragraph of the introduction,
Section~\ref{intro}.

The following lemma is helpful in the proof of the subsequent lemma.
\begin{lemma}
Suppose that $r$ is a real number. Then
\begin{equation}
\label{simplified}
(\nint(r))^2 - 2 \nint(r) \, r \le 0.
\end{equation}
\end{lemma}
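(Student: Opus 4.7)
The plan is to let $k = \nint(r)$ and rewrite the left-hand side of~(\ref{simplified}) as $k(k - 2r)$, then show that $k$ and $k - 2r$ have opposite signs (or one of them vanishes). The key fact to exploit is the defining property of nearest-integer rounding, namely $|r - k| \le 1/2$.

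First I would dispose of the trivial case $k = 0$, for which the product $k(k - 2r)$ is zero. For $k \ne 0$, the inequality $|k| \ge 1$ combined with $|r - k| \le 1/2$ forces $r$ to have the same sign as $k$ and to be bounded away from zero by $1/2$. Concretely, if $k \ge 1$ then $r \ge k - 1/2 \ge 1/2$, and if $k \le -1$ then $r \le k + 1/2 \le -1/2$.

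Next I would bound $k - 2r$. Writing $k - 2r = (k - r) - r$ and using $|k - r| \le 1/2$, in the case $k \ge 1$ we get $k - 2r \le 1/2 - r \le 0$, while $k > 0$, so $k(k - 2r) \le 0$. Symmetrically, when $k \le -1$ we get $k - 2r \ge -1/2 - r \ge 0$, while $k < 0$, so again $k(k - 2r) \le 0$. Combining the three cases yields~(\ref{simplified}).

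I do not anticipate any real obstacle here; the lemma is essentially a restatement of the fact that $\nint(r)$ lies on the same side of the origin as $r$ (once it is nonzero) and differs from $r$ by at most $1/2$. The only thing to be careful about is the sign bookkeeping when $k$ is negative, which is why I would handle the positive and negative cases separately rather than trying to produce a single sign-free chain of inequalities.
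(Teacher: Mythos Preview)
Your proposal is correct and is essentially the same argument as the paper's: both factor the left-hand side as $\nint(r)\cdot(\nint(r)-2r)$ and use $|\nint(r)-r|\le 1/2$ to show the two factors have opposite signs. The only cosmetic difference is that you split cases on $k=\nint(r)$ while the paper splits on ranges of $r$; your split has the minor advantage of absorbing the boundary values $r=\pm 1/2$ without separate mention.
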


\begin{proof}
Since the sign of $\nint(r)$ is the same as the sign of $r$,
the sign of the left-hand side of~(\ref{simplified}) is the sign of $r$
times the sign of
\begin{equation}
\label{final}
\nint(r) - 2 r.
\end{equation}
Now, the sign of~(\ref{final}) is opposite to the sign of $r$:
if $r > 1/2$, then $\nint(r) - 2r \le r+1/2-2r = 1/2-r < 0$;
if $r < -1/2$, then $\nint(r) - 2r \ge r-1/2-2r = -1/2-r > 0$;
if $|r| < 1/2$, then $\nint(r) - 2r = -2r$, whose sign is opposite to $r$'s.
Combining these observations yields~(\ref{simplified}).
\end{proof}

The following lemma is helpful in the proof of the subsequent theorem.
\begin{lemma}
Suppose that $c^i_{j,k}$ is the coefficient defined in~(\ref{coeffs})
and $g^i_{j,k}$ is the entry of the Gram matrix defined in~(\ref{Gram})
for $i = 0$, $1$, $2$, \dots, $j = 1$, $2$, \dots, $n$,
and $k = 1$, $2$, \dots, $n$. Then
\begin{equation}
\label{update}
(c^i_{j,k})^2 \, g^i_{k,k} - 2 c^i_{j,k} \, g^i_{j,k} \le 0
\end{equation}
for $i = 0$, $1$, $2$, \dots, $j = 1$, $2$, \dots, $n$,
and $k = 1$, $2$, \dots, $n$.
\end{lemma}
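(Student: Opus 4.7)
The plan is to reduce the claim directly to the previous lemma by a change of variable. Writing the inequality as
\[
g^i_{k,k}\Bigl[(c^i_{j,k})^2 - 2\,c^i_{j,k}\,\frac{g^i_{j,k}}{g^i_{k,k}}\Bigr] \le 0,
\]
the natural thing to set is $r = g^i_{j,k}/g^i_{k,k}$, at which point $c^i_{j,k} = \nint(r)$ by the defining formula~(\ref{coeffs}), and the bracketed quantity becomes exactly $(\nint(r))^2 - 2\,\nint(r)\,r$, which the preceding lemma shows is $\le 0$.

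First I would dispose of the degenerate cases. When $j = k$ the coefficient $c^i_{k,k}$ is defined to be $0$, so the left-hand side of~(\ref{update}) vanishes and the inequality is trivial. For $j \ne k$ with $g^i_{k,k} = 0$, the vector $a^i_k$ is the zero vector, hence $g^i_{j,k}=0$ as well; in this case one must note that the formula~(\ref{coeffs}) implicitly requires $g^i_{k,k} \ne 0$, and $c^i_{j,k}$ is taken to be $0$ (or the algorithm simply never selects such an index), so both terms on the left-hand side of~(\ref{update}) vanish.

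The main case is $j \ne k$ with $g^i_{k,k} > 0$. Here $g^i_{k,k} = (a^i_k)^\top a^i_k = \|a^i_k\|^2 > 0$, so I can legitimately set $r = g^i_{j,k}/g^i_{k,k}$ and rewrite $g^i_{j,k} = r\,g^i_{k,k}$. Substituting into the left-hand side of~(\ref{update}) factors out a nonnegative $g^i_{k,k}$, leaving $g^i_{k,k}\bigl((\nint(r))^2 - 2\,\nint(r)\,r\bigr)$. The previous lemma applied to this $r$ then gives the desired bound.

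There is essentially no obstacle here: the result is a one-line algebraic rewrite once the previous lemma is in hand. The only subtlety worth flagging is the degenerate case $g^i_{k,k} = 0$, which needs either a convention on $c^i_{j,k}$ or a note that the algorithm never divides by zero in practice.
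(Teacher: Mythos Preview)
Your proof is correct and follows essentially the same route as the paper: factor out the nonnegative $g^i_{k,k}$, set $r = g^i_{j,k}/g^i_{k,k}$ so that $c^i_{j,k} = \nint(r)$, and invoke the preceding lemma. The only difference is that you are more explicit about the degenerate cases $j=k$ and $g^i_{k,k}=0$, which the paper glosses over.
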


\begin{proof}
Combining~(\ref{coeffs}) and the fact that
$g^i_{k,k} = (a^i_k)^\top (a^i_k) \ge 0$
yields that the sign of the left-hand side of~(\ref{update})
is the same as the sign of
\begin{equation}
(c^i_{j,k})^2 - 2 c^i_{j,k} \, \frac{g^i_{j,k}}{g^i_{k,k}}
= (\nint(f^i_{j,k}))^2 - 2 \nint(f^i_{j,k}) \, f^i_{j,k},
\end{equation}
where
\begin{equation}
f^i_{j,k} = \frac{g^i_{j,k}}{g^i_{k,k}}.
\end{equation}
The lemma follows from~(\ref{simplified}) with $r = f^i_{j,k}$.
\end{proof}

The following theorem states that projection using the coefficients
defined in~(\ref{coeffs}) never increases the Euclidean norm.
This implies that the above algorithm never increases the Euclidean norm
of any of the basis vectors at any time; the sum of the $p$-th powers
of the Euclidean norms therefore converges monotonically to a local minimum,
for every possible (positive) value of $p$ simultaneously.
\begin{theorem}
\label{theorem}
Suppose that $c^i_{j,k}$ is the coefficient defined in~(\ref{coeffs})
for $i = 0$, $1$, $2$, \dots, $j = 1$, $2$, \dots, $n$,
and $k = 1$, $2$, \dots, $n$.
Then
\begin{equation}
\label{mono}
\| a^i_j - c^i_{j,k} \, a^i_k \| \le \| a^i_j \|
\end{equation}
for $i = 0$, $1$, $2$, \dots, $j = 1$, $2$, \dots, $n$,
and $k = 1$, $2$, \dots, $n$.
\end{theorem}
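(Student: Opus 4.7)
The plan is to reduce (\ref{mono}) directly to the inequality already established in the preceding lemma, with no extra work beyond an expansion of an inner product.

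First I would square both sides of (\ref{mono}); since the Euclidean norm is nonnegative, it suffices to prove
\[
\|a^i_j - c^i_{j,k}\, a^i_k\|^2 \le \|a^i_j\|^2.
\]
Expanding the left-hand side using the bilinearity of the inner product yields
\[
\|a^i_j - c^i_{j,k}\, a^i_k\|^2
= (a^i_j)^\top (a^i_j) + (c^i_{j,k})^2\, (a^i_k)^\top (a^i_k) - 2 c^i_{j,k}\, (a^i_j)^\top (a^i_k),
\]
and substituting the definition (\ref{Gram}) of the Gram-matrix entries converts this into
\[
\|a^i_j - c^i_{j,k}\, a^i_k\|^2 = g^i_{j,j} + (c^i_{j,k})^2\, g^i_{k,k} - 2 c^i_{j,k}\, g^i_{j,k}.
\]
Since $\|a^i_j\|^2 = g^i_{j,j}$, the desired inequality collapses to
\[
(c^i_{j,k})^2\, g^i_{k,k} - 2 c^i_{j,k}\, g^i_{j,k} \le 0,
\]
which is exactly the content of (\ref{update}). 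Applying the preceding lemma then finishes the proof; the case $k = j$ is handled automatically because $c^i_{j,j} = 0$ by definition, making both sides of (\ref{mono}) equal.

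There is no real obstacle at this stage: the genuinely geometric fact — that rounding the real coefficient $v^\top w / w^\top w$ to the nearest integer never lengthens $v - c\, w$ — has already been extracted by the previous two lemmas, the first of which reduces everything to the elementary one-variable inequality (\ref{simplified}). The theorem itself is then only a matter of expanding the square, identifying the resulting quantities with Gram-matrix entries, and invoking the lemma.
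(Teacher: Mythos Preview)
Your proposal is correct and matches the paper's proof essentially line for line: square both sides, expand the left-hand side in terms of Gram-matrix entries, and reduce to the inequality~(\ref{update}) supplied by the preceding lemma. The only addition is your explicit remark about the diagonal case $k=j$, which is harmless and consistent with the paper's conventions.
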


\begin{proof}
The square of the right-hand side of~(\ref{mono}) is
\begin{equation}
\label{rhs}
\| a^i_j \|^2 = g^i_{j,j}.
\end{equation}
The square of the left-hand side of~(\ref{mono}) is
\begin{equation}
\label{lhs}
\| a^i_j - c^i_{j,k} \, a^i_k \|^2
= g^i_{j,j} + (c^i_{j,k})^2 \, g^i_{k,k} - 2 c^i_{j,k} \, g^i_{j,k}.
\end{equation}
Formula~(\ref{update}) shows that~(\ref{lhs})
is less than or equal to~(\ref{rhs}).
\end{proof}

\section{Results and Discussion}
\label{results}

This section presents the results
of several numerical experiments.\footnote{Permissively licensed open-source
software that automatically reproduces all results reported
in the present paper is available at
\url{https://github.com/facebookresearch/latticer}}
Subsection~\ref{figures} describes the figures.
Subsection~\ref{examples} constructs the examples whose numerical results
the figures report.
Subsection~\ref{implementation} provides details of the implementation
and computer system used.
Subsection~\ref{discussion} discusses the empirical results.

\subsection{Description of the figures}
\label{figures}

This subsection describes empirical results, illustrating the algorithms
via Figures~\ref{p2time1-1e-15}--\ref{p2err1-1e-15-31}.
Supplementary figures are available in Appendix~\ref{further}.

The figures refer to the algorithm of~\cite{lenstra-lenstra-lovasz} as ``LLL.''
Subsection~\ref{implementation} details the especially efficient implementation
used here. The figures refer to the algorithm of the present paper as ``ours.''

The figures report on two kinds of experiments.
The first kind runs the LLL algorithm followed by ours,
once in each of ten trials. Each of the ten trials permutes the basis vectors
at random, with a different random permutation. The points plotted
in the figures are the means of the ten trials. The plotted bars range
from the associated minimum over the ten trials to the associated maximum
(not displaying the standard deviation that error bars sometimes report).
The figures refer to this single randomized permutation followed by LLL
followed by ours as ``run only once.''

The second kind of experiment runs the following sequence ten times
in succession: (1) random permutation of the basis vectors
followed by (2) LLL followed by (3) the algorithm of the present paper,
with each of the first nine times feeding its output into the input
of the next time. The figures refer to this sequence of ten
as ``run repeatedly.''

The figures report the reduction either in the Frobenius norm
of the matrix of basis vectors or in the minimum of the Euclidean norms
of the basis vectors. (The Frobenius norm is the square root of the sum
of the squares of the entries of the matrix.)
The figures report the reduction due to LLL run on its own,
as well as the further reduction due to post-processing
the output from LLL via the algorithm of the present paper.
Therefore, the full reduction in norm is the product
of the reported fractions remaining after reduction,
as all runs polish the outputs of the LLL algorithm via ours.
The fraction reduced by ours that is reported for the sequence run repeatedly
pertains only to the very last run of ours,
whereas the fraction reduced by LLL reported
for the same sequence run repeatedly pertains to the entire series of ten runs
of LLL followed by ours, aside from separating out the very last run of ours.

The horizontal axes report the dimension $n$ of the matrix
whose columns are the initial basis vectors being reduced
(the matrix is $n \times n$).
The figure captions' $\delta$ is that used
in the so-called ``Lov\'asz criterion'' of the LLL algorithm.
The figure captions' $p$ is the power used in~(\ref{sos})
for the algorithm of the present paper.
The figure captions' $q$ is an integer characterizing the size of the entries
of the initial basis vectors prior to reduction;
the following subsection provides further details about $q$.

Appendix~\ref{further} presents further figures,
reporting results for different values of $\delta$ and $p$.
The conclusions that can be drawn from the further figures
appear to be broadly consistent with those presented in the present section.

\subsection{Description of the examples}
\label{examples}

This subsection details the particular examples whose numerical results
the figures of the previous subsection report.

To construct the initial basis vectors being reduced
as related to a finite field whose order is a large prime integer $q$,
the examples consider $q = 2^{13} - 1$ and $q = 2^{31} - 1$
(two well-known Mersenne primes).
The matrix whose columns are the initial basis vectors for the examples
takes the form
\begin{equation}
\label{matexample}
A = \left( \begin{array}{c|c} R & q \cdot \Id \\\hline
                              \Id & 0 \end{array} \right),
\end{equation}
where ``$\Id$'' denotes the identity matrix,
``$0$'' denotes the matrix whose entries are all zeros,
and $R$ denotes the pseudorandom matrix whose entries are drawn
independent and identically distributed from the uniform distribution
over the integers $-(q-1)/2$, $-(q-3)/2$, \dots, $(q-3)/2$, $(q-1)/2$.
This special form of matrix effectively views the entries of $R$ modulo $q$,
as unimodular integer linear transformations acting on $A$ from the right
can add any integer multiple of $q$ to any entry of $R$.
The captions to the figures give the values of $q$ considered.

The dimension of the matrix $R$ in~(\ref{matexample}) is
$(2 \ell) \times \ell$, where $\ell = 2$, $4$, $8$, \dots, $128$.
Hence the dimension of the matrix $A$ in~(\ref{matexample}) whose columns
are the initial basis vectors is $n \times n$, with $n = 3 \ell$,
so that $n = 6$, $12$, $24$, \dots, $384$.
The horizontal axes of the figures give the values of $n$
associated with the corresponding points in the plots.

\subsection{Implementation details}
\label{implementation}

This subsection details the implementations used in the reported results.

The implementation of the LLL algorithm of~\cite{lenstra-lenstra-lovasz}
used here is entirely in IEEE standard double-precision arithmetic,
with acceleration via basic linear algebra subroutines (BLAS)
and enhanced accuracy via re-orthogonalization.
BLAS originates from~\cite{lawson-hanson-kincaid-krogh},
\cite{blas}, and others, with implementation in Apple's Accelerate Framework
of Xcode for MacOS Ventura 13.4 in the experiments reported here
(and compiled using Clang LLVM with the highest optimization flag,
{\tt -O3}, set). All experiments ran on a 2020 MacBook Pro
with a 2.3~GHz quad-core Intel Core i7 processor
and 3.733~GHz LPDDR4X SDRAM.
For the greatest portability, the implementation calls {\tt random()}
to generate random numbers; many compiler distributions
implement {\tt random()} via the BSD linear-congruential generator
(such compilers include our implementation's defaults
of MacOS Clang referenced by {\tt gcc} under Xcode
and GNU {\tt gcc} under Linux). The results reported here pertain
to this class of pseudorandom numbers.

Re-orthogonalization combats round-off errors,
as reviewed by~\cite{leon-bjorck-gander}.
Namely, all orthogonalization gets repeated until only at most the last digit
of any result changes, and then one additional round of orthogonalization
occurs (just to be sure).

Furthermore, upon any swapping of basis vectors in the LLL algorithm,
the implementation recomputes the swapped orthogonal basis vectors from scratch
for numerical stability. Recomputing from scratch incurs
extra computational expense, but no more than the computational cost incurred
by the size reduction which accompanies every swap.

The combination of re-orthogonalization and recomputing from scratch
swapped orthogonal basis vectors may not be sufficient to handle
the hardest possible examples constructed purposefully to test resistance
to round-off errors, but is sufficient to handle the random examples
of interest in cryptography (since cryptosystems necessarily
must generate bases at random in order to be secure).
Methods that take advantage of floating-point arithmetic
while also being able to tackle worst-case adversarial examples
are overviewed by~\cite{schnorr-euchner} and~\cite{stehle}.
For the more limited scope of cryptographic applications, however,
the enhanced accuracy of re-orthogonalization and recomputing from scratch
swapped orthogonal basis vectors enables all computations to leverage BLAS
for dramatic accelerations. The implementation of the algorithm
of the present paper also takes advantage of this acceleration,
primarily for calculating the initial Gram matrix
(other steps of the algorithm benefit less from BLAS than LLL does,
at least in the implementation reported here).

Figure~\ref{p2time1-1e-15} indicates that both LLL and the algorithm
of the present paper have running-times proportional to roughly $n^3$,
with LLL being over an order of magnitude slower.
The cost per iteration is proportional to $n^2$, as discussed
in Section~\ref{methods}; the numbers of iterations required appear
to be proportional to $n$ for the examples reported here.

\subsection{Discussion}
\label{discussion}

This subsection discusses the numerical results reported in the figures.

As mentioned in the last paragraph of the preceding subsection,
the cost per iteration of either the LLL algorithm or ours is proportional
to $n^2$. Figure~\ref{p2time1-1e-15} indicates that both LLL and ours
run in time proportional to roughly $n^3$
(with LLL being over an order of magnitude slower),
in accord with the numbers of iterations taken to reach equilibrium
being proportional to the dimension $n$ in the experiments
of the present section.
(Figure~\ref{p2time1-1e-1} shows the same, but for $\delta = 1 - 10^{-1}$
rather than $\delta = 1 - 10^{-15}$, where $\delta$ is the parameter
in the so-called ``Lov\'asz criterion'' of the LLL algorithm.)

Figures~\ref{p2err1-1e-15-13once}--\ref{p2err1-1e-15-31}
indicate that both the LLL algorithm and ours reduce norms
as a function of the dimension $n$ similarly for different sizes
of the entries of the basis vectors; specifically, changing $q = 2^{13} - 1$
to $q = 2^{31} - 1$ has little effect on how the results vary
as a function of $n$. Repeated runs that could in principle help jump out
of local minima appear little more effective than simply taking the best
of several single runs (with each run randomizing the order
of the basis vectors differently).

In all cases, LLL reduces norms far more than ours.
The algorithm of the present paper is suitable only for polishing the results
of another algorithm (such as LLL); the algorithm of the present paper
is very fast and often reduces norms beyond what other algorithms achieve,
but is ineffective on its own for anything other than demonstrating
that a given basis can be reduced further.
Ours appears to be more useful for reducing the Frobenius norm
than the minimum of the Euclidean norms of the basis vectors,
though the algorithm does reduce both (or at least never increases either,
as guaranteed by the theory of Subsection~\ref{theory}
and illustrated in all the figures).

Appendix~\ref{further} presents several more figures, with different settings
of parameters; all further corroborate the results discussed
in the present subsection.
And, naturally, the algorithm and its analysis generalize to lattices formed
from linear combinations of vectors whose entries are real numbers
(not necessarily integers), with the coefficients in the linear combinations
being integers. However, the case involving real numbers may pose challenges
due to round-off errors.

The dual of a lattice of integers is in general such a lattice
of real vectors. The least-possible maximum of the Euclidean norms
of the vectors in a basis for the dual lattice yields bounds
on the Euclidean norm of the shortest nonzero vector in the original lattice,
courtesy of the transference of~\cite{banaszczyk},
as highlighted by~\cite{regev} and others.
The algorithm of the present paper can reduce the maximum
of the Euclidean norms toward the least possible.

\begin{figure}
\begin{centering}
{\includegraphics[width=0.495\textwidth]{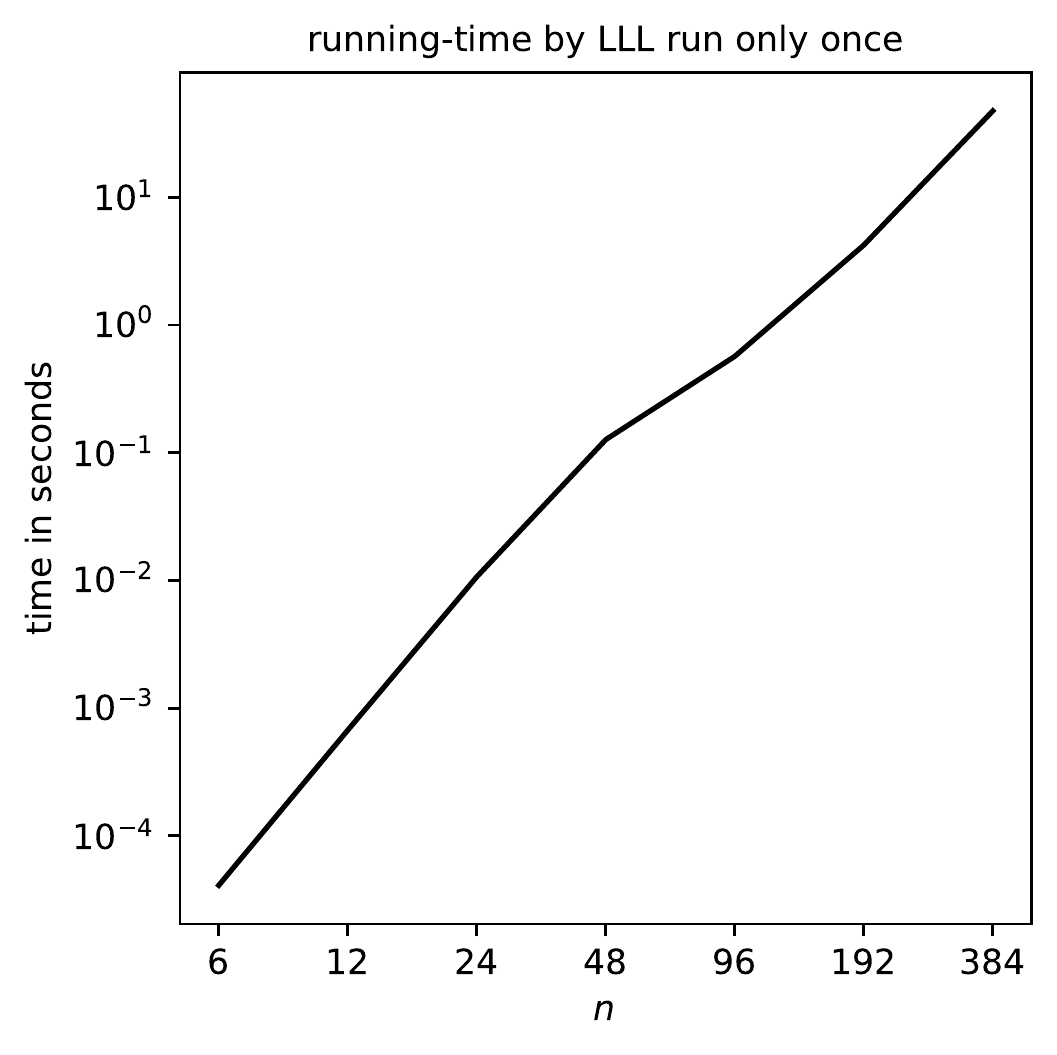}}
{\includegraphics[width=0.495\textwidth]{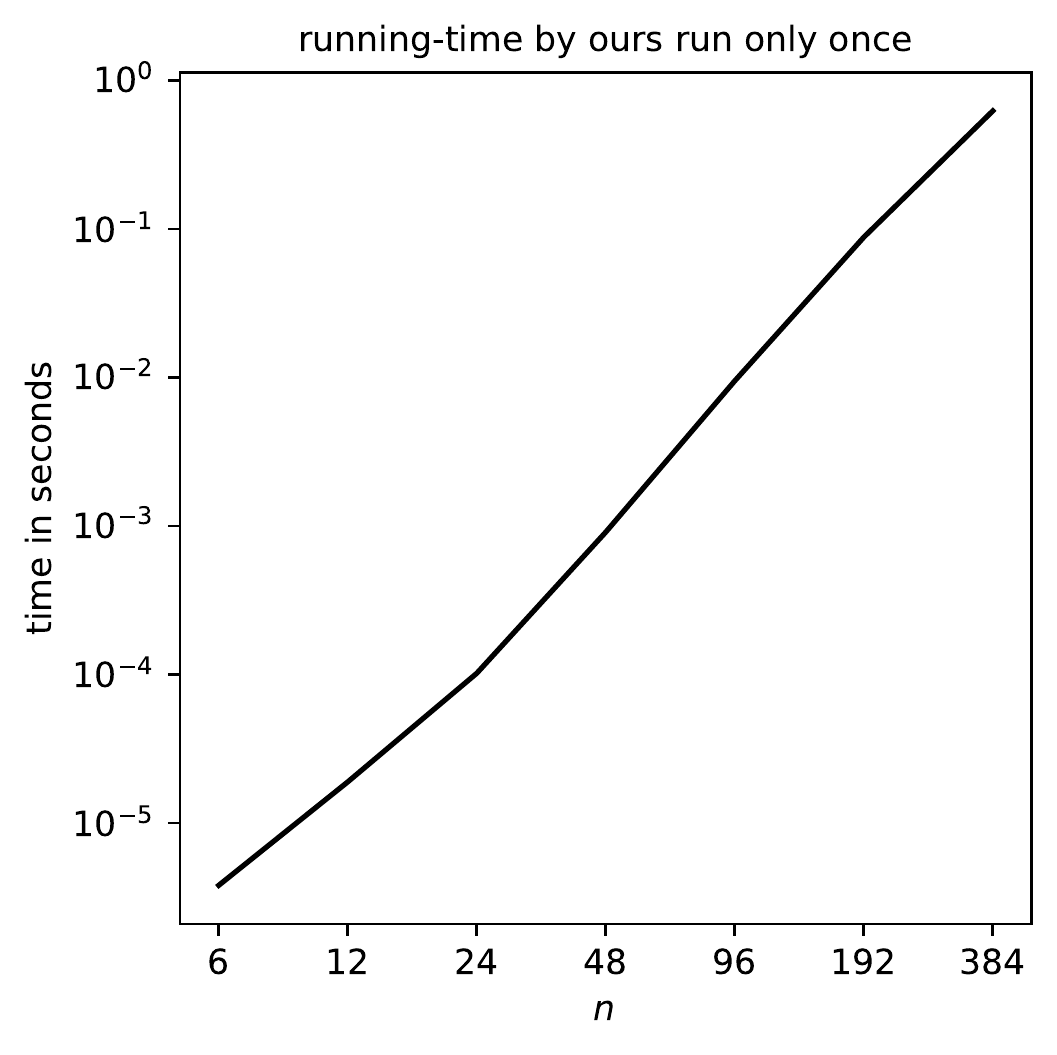}}

{\includegraphics[width=0.495\textwidth]{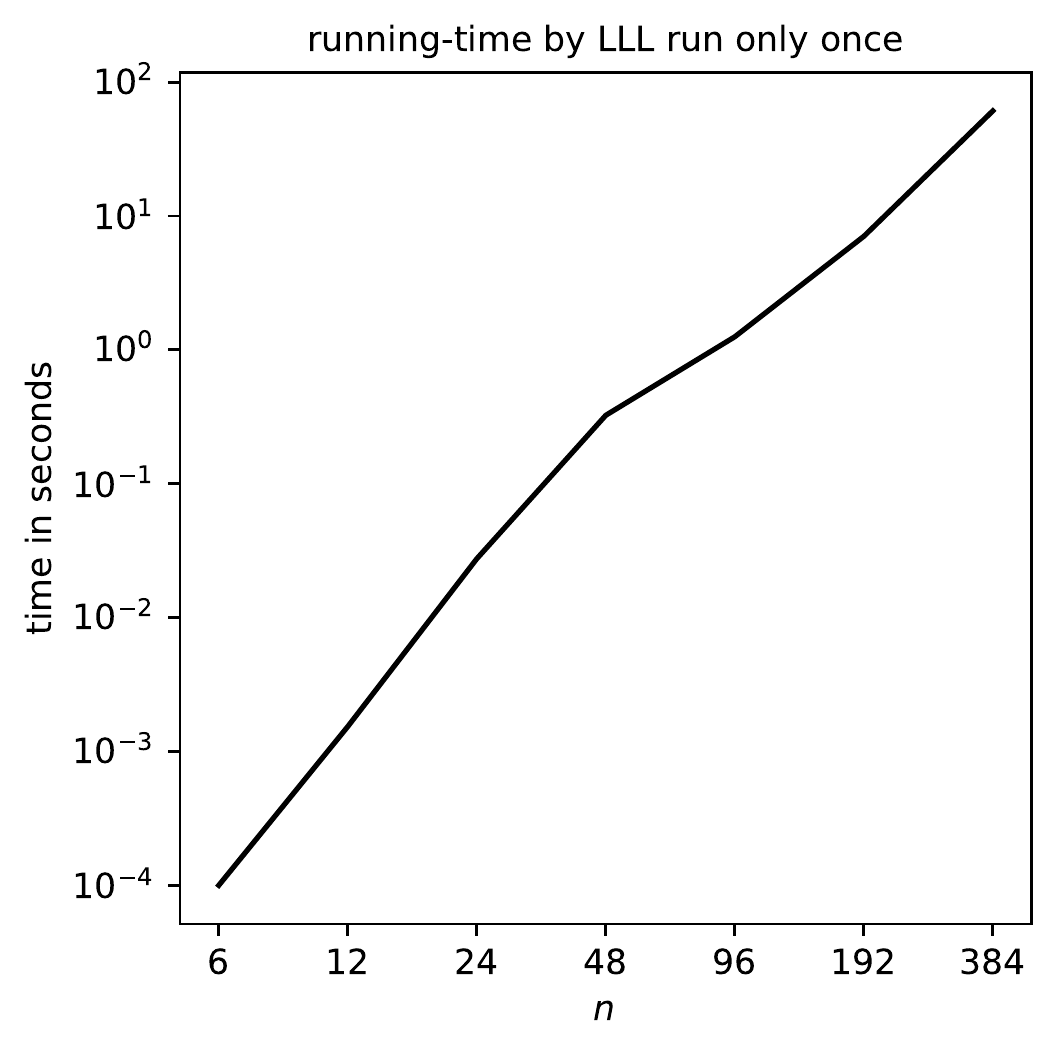}}
{\includegraphics[width=0.495\textwidth]{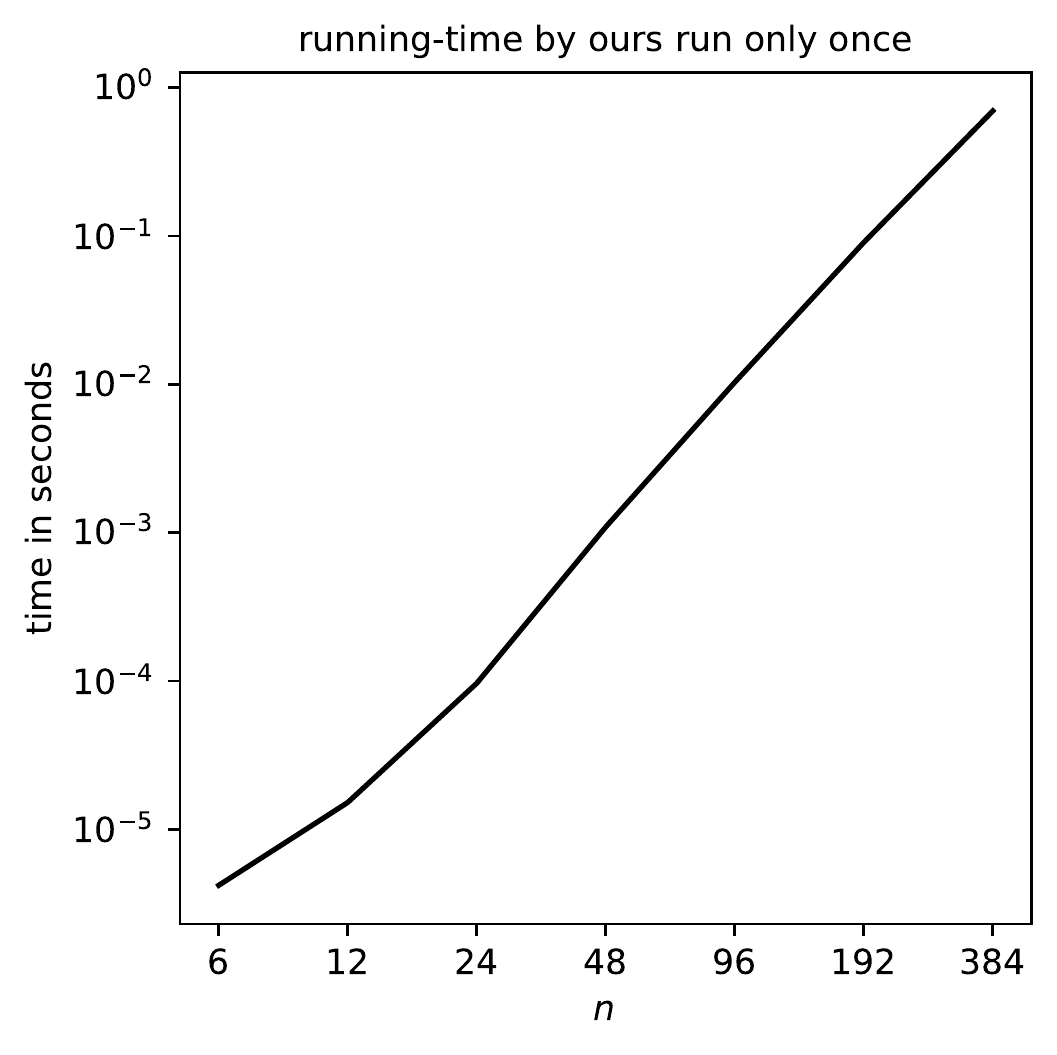}}

\end{centering}
\caption{$\delta = 1-10^{-15}$, $p = 2$;
         the upper plots are for $q = 2^{13} - 1$,
         the lower plots are for $q = 2^{31} - 1$}
\label{p2time1-1e-15}
\end{figure}

\begin{figure}
\begin{centering}
{\includegraphics[width=0.495\textwidth]{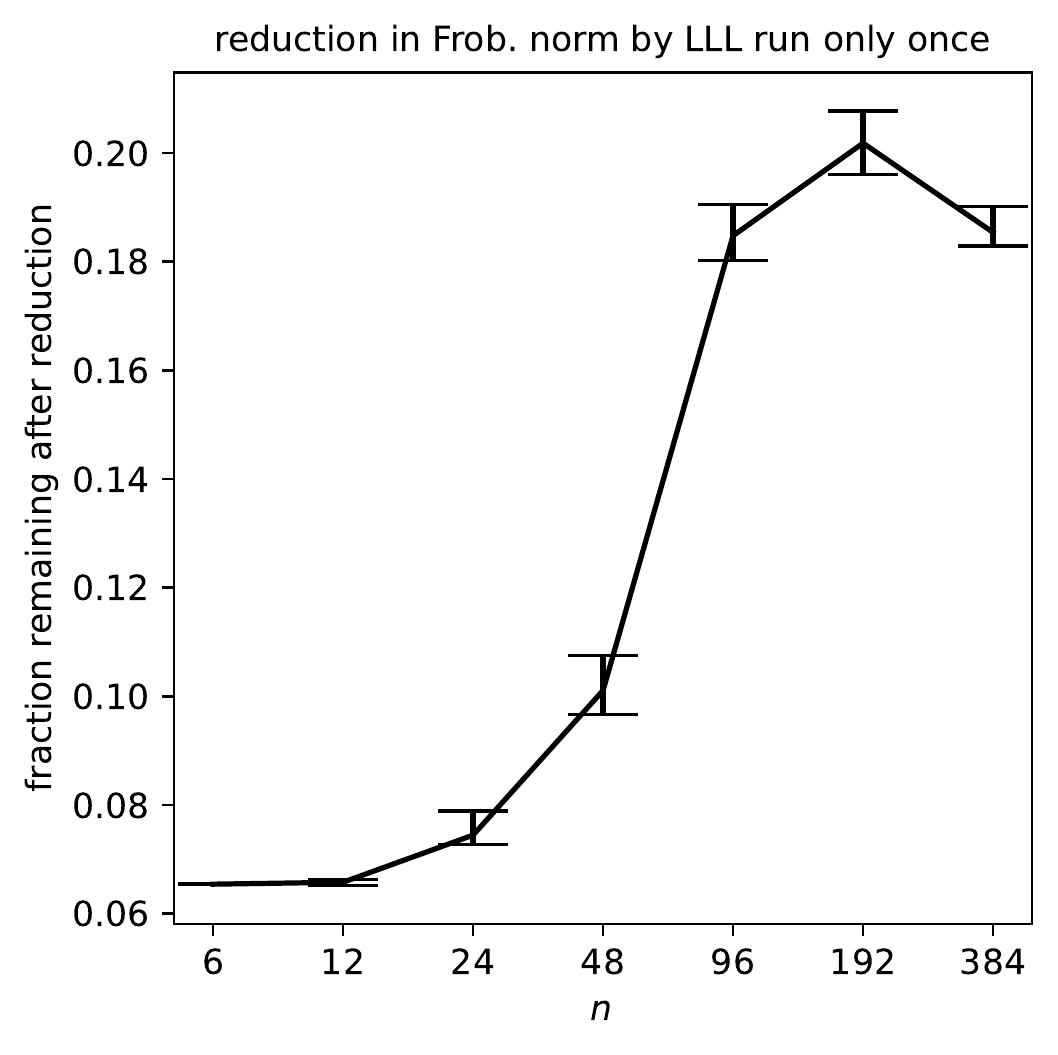}}
{\includegraphics[width=0.495\textwidth]{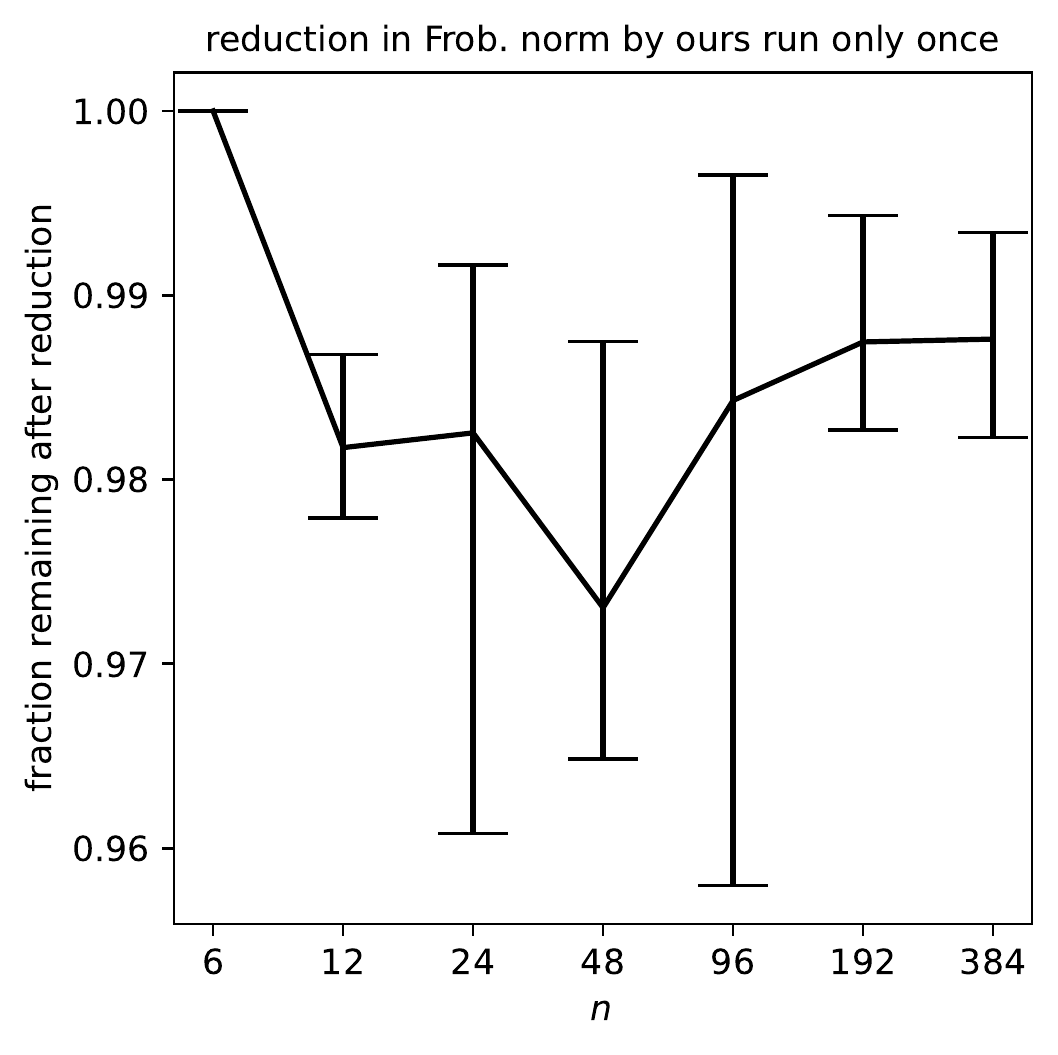}}

{\includegraphics[width=0.495\textwidth]{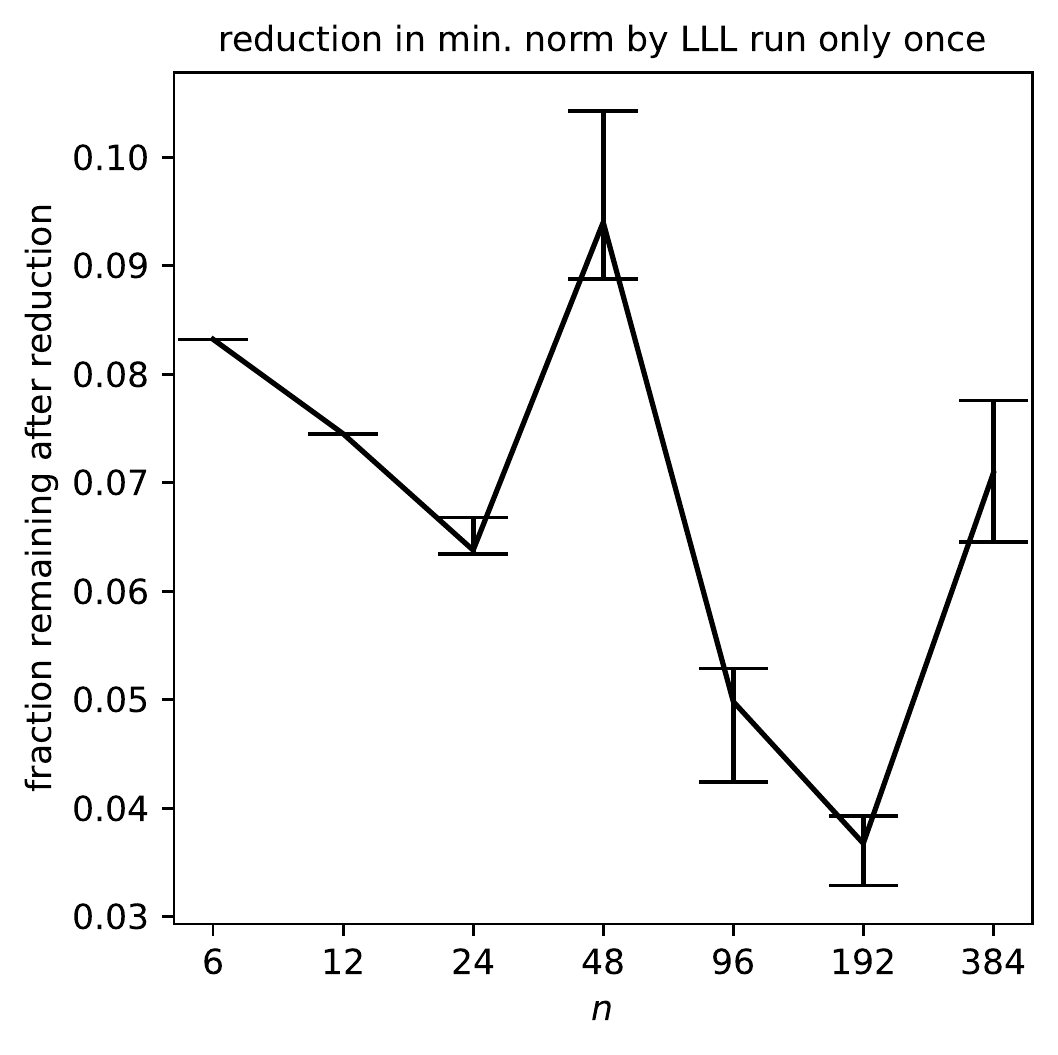}}
{\includegraphics[width=0.495\textwidth]{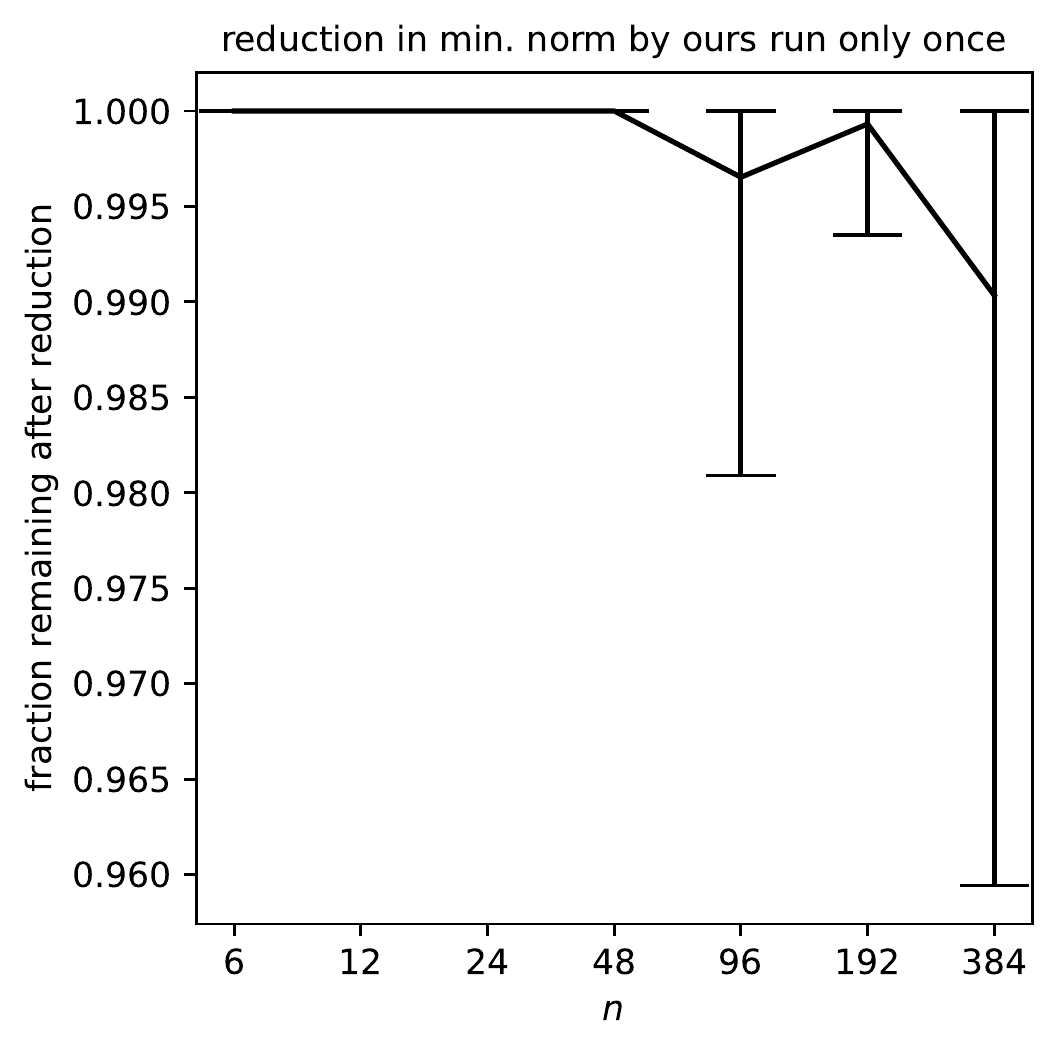}}

\end{centering}
\caption{$\delta = 1-10^{-15}$, $p = 2$, $q = 2^{13} - 1$}
\label{p2err1-1e-15-13once}
\end{figure}

\begin{figure}
\begin{centering}
{\includegraphics[width=0.495\textwidth]{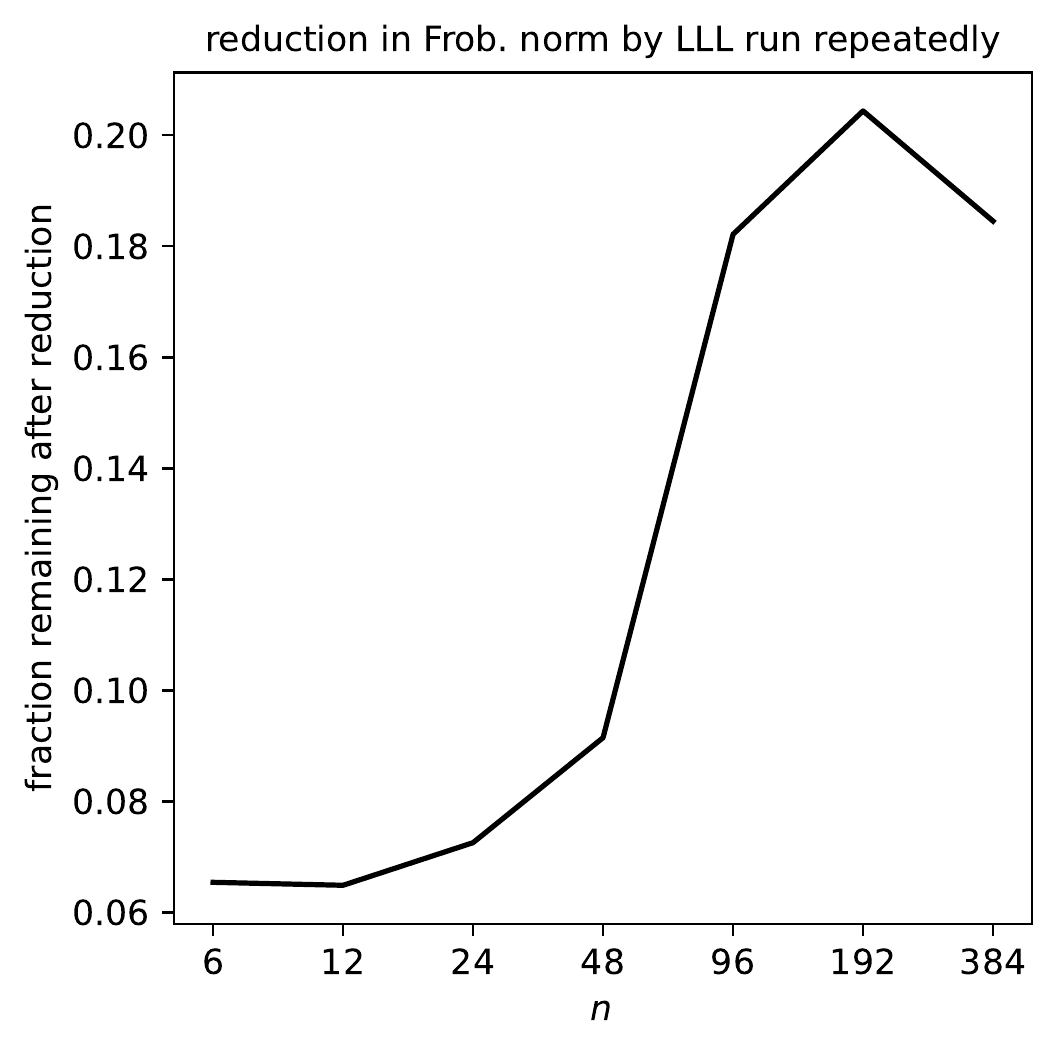}}
{\includegraphics[width=0.495\textwidth]{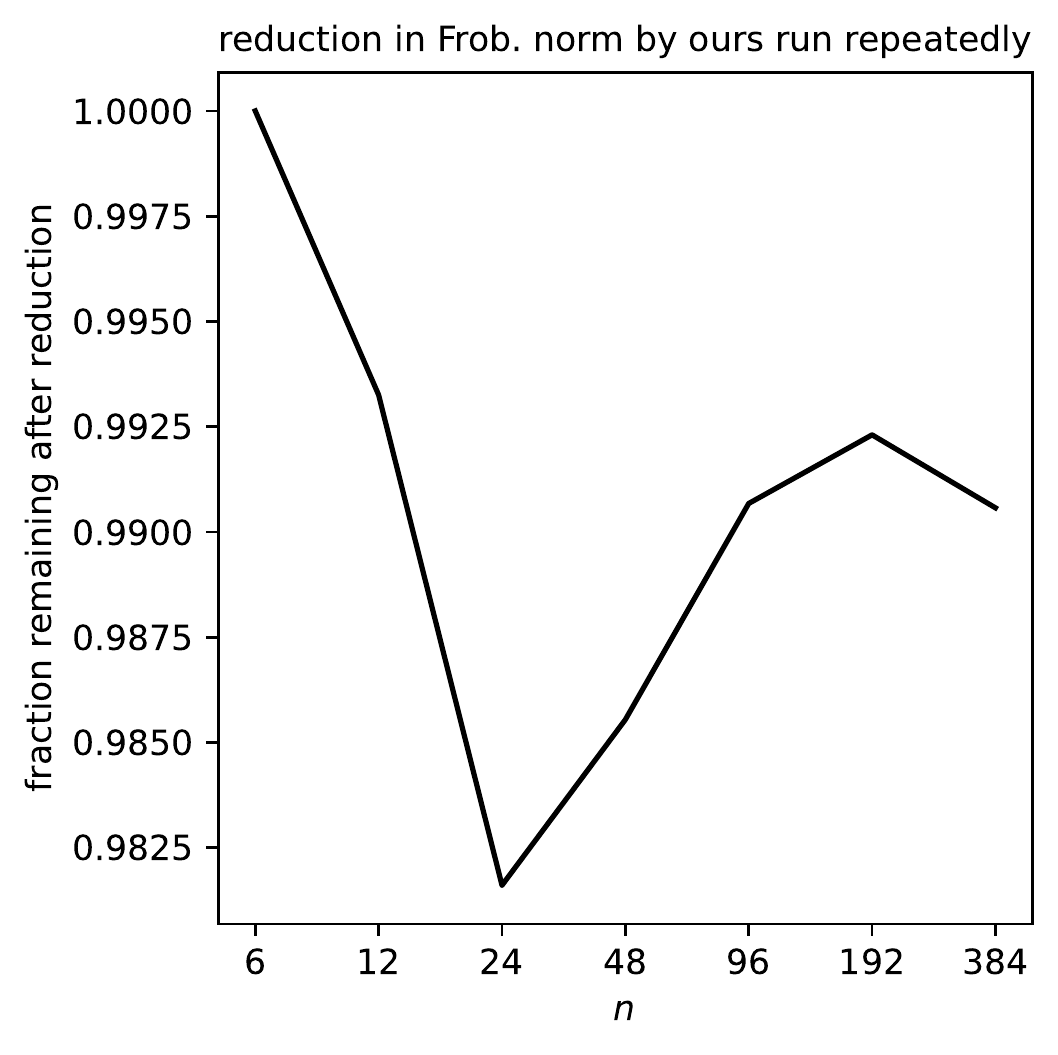}}

{\includegraphics[width=0.495\textwidth]{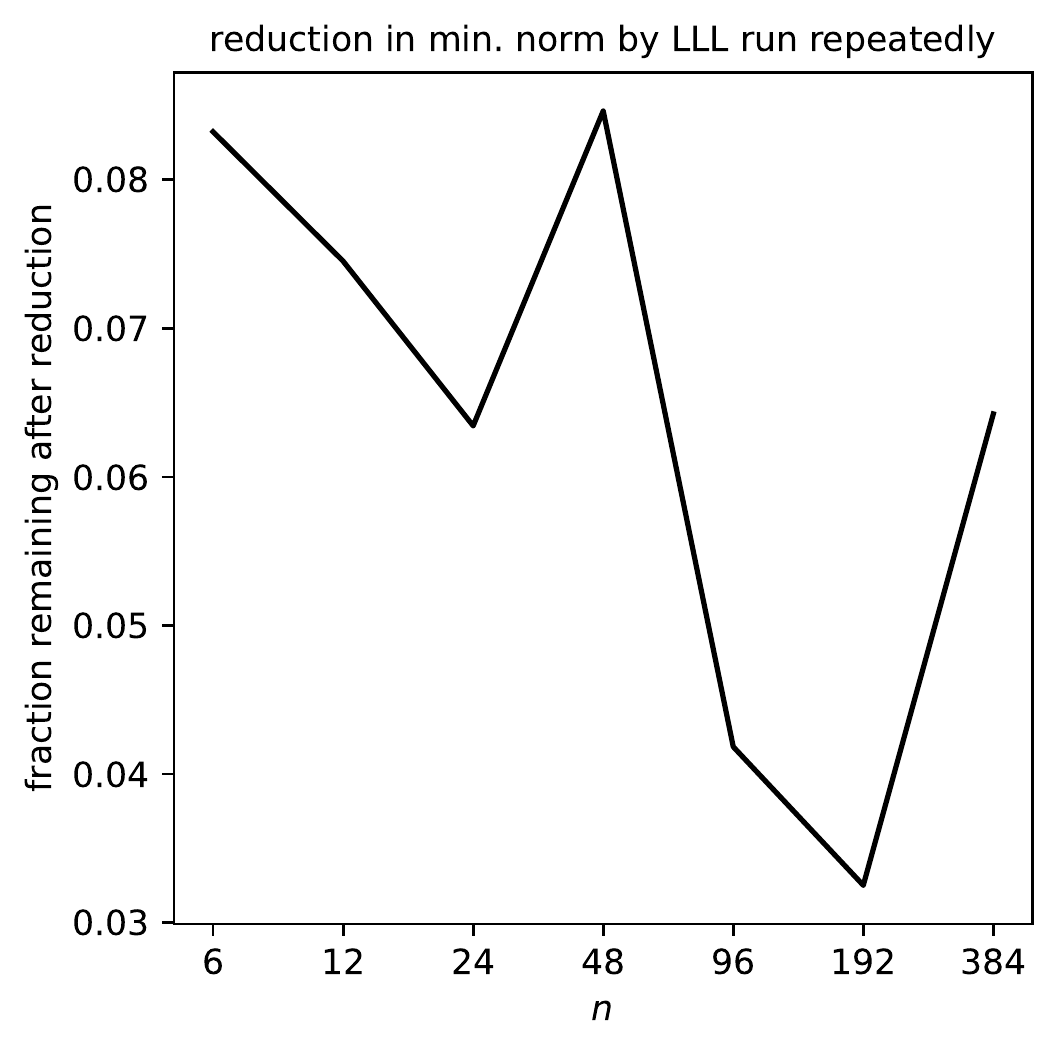}}
{\includegraphics[width=0.495\textwidth]{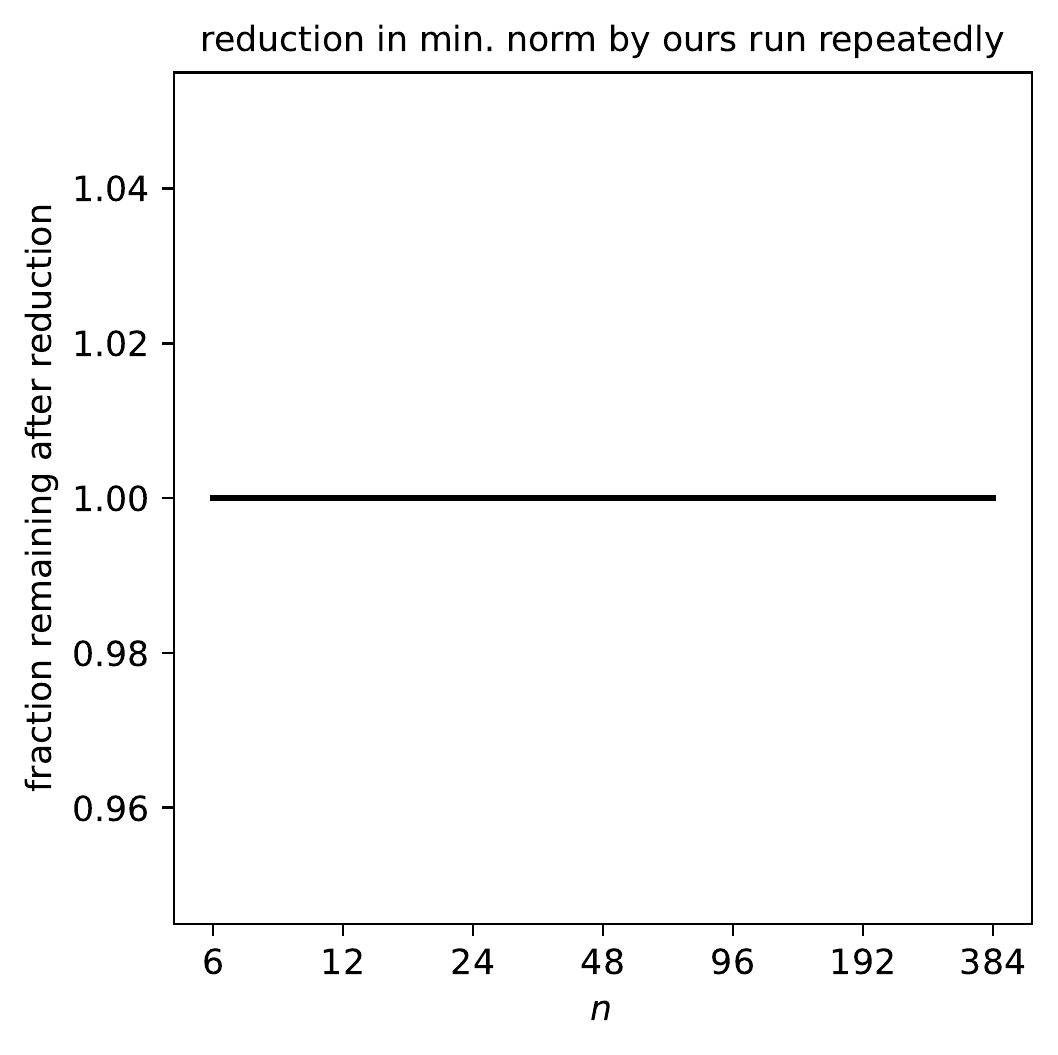}}

\end{centering}
\caption{$\delta = 1-10^{-15}$, $p = 2$, $q = 2^{13} - 1$}
\end{figure}

\begin{figure}
\begin{centering}
{\includegraphics[width=0.495\textwidth]{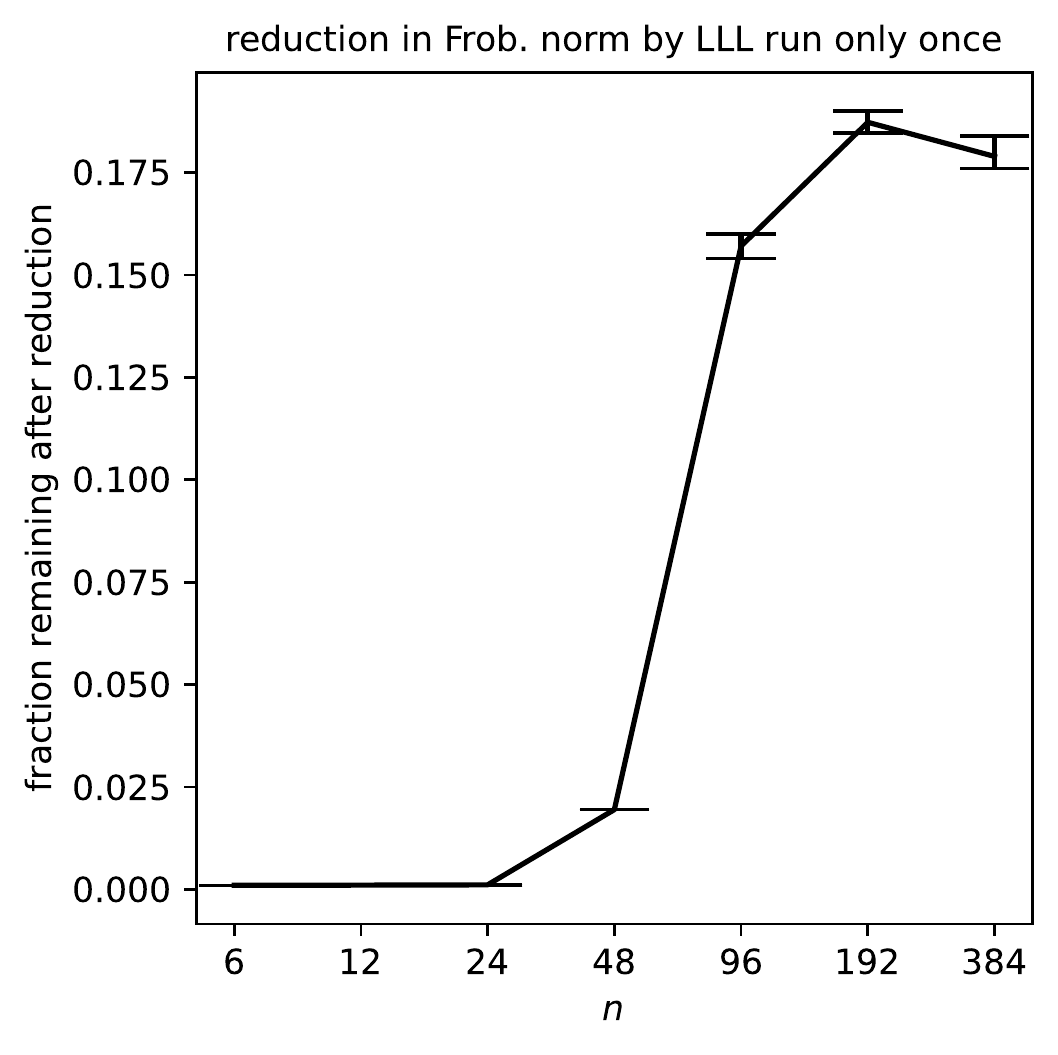}}
{\includegraphics[width=0.495\textwidth]{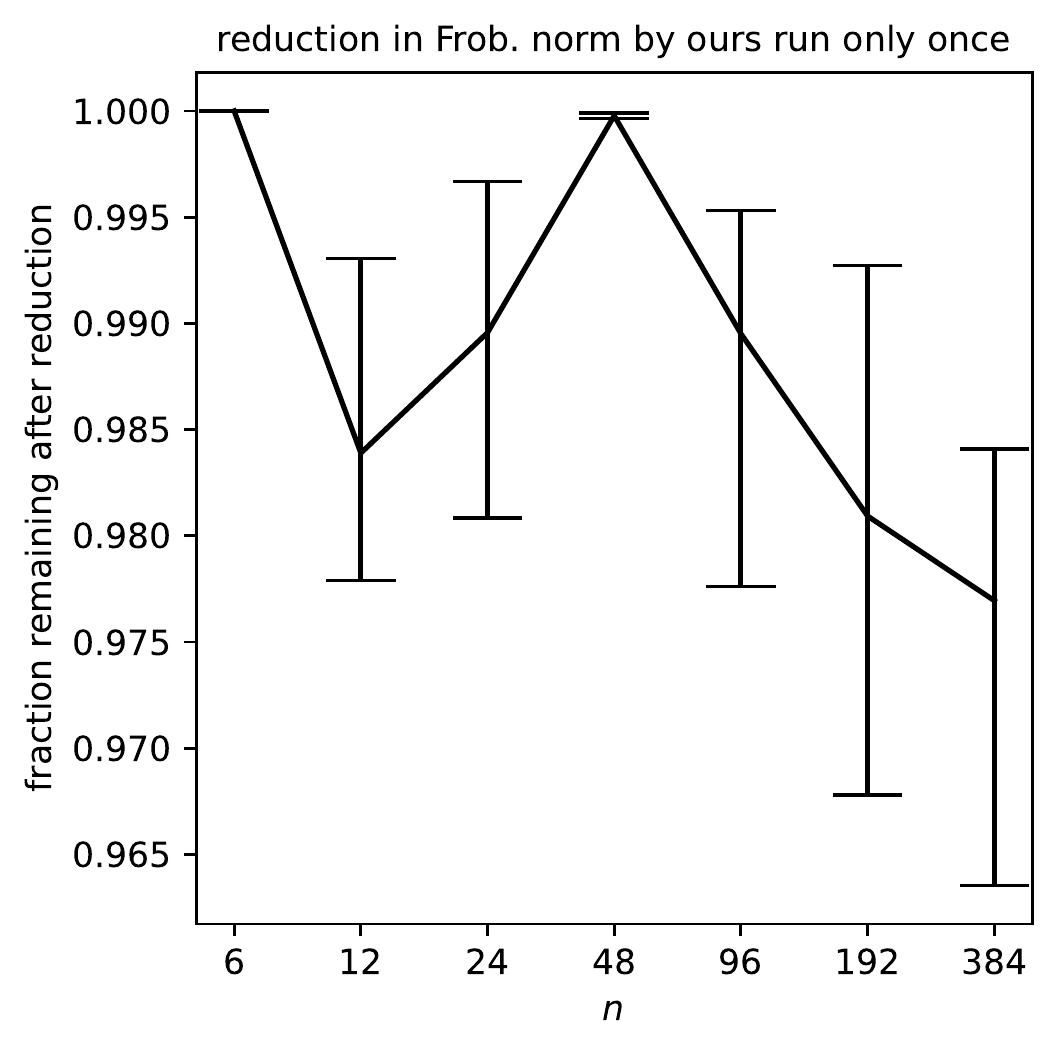}}

{\includegraphics[width=0.495\textwidth]{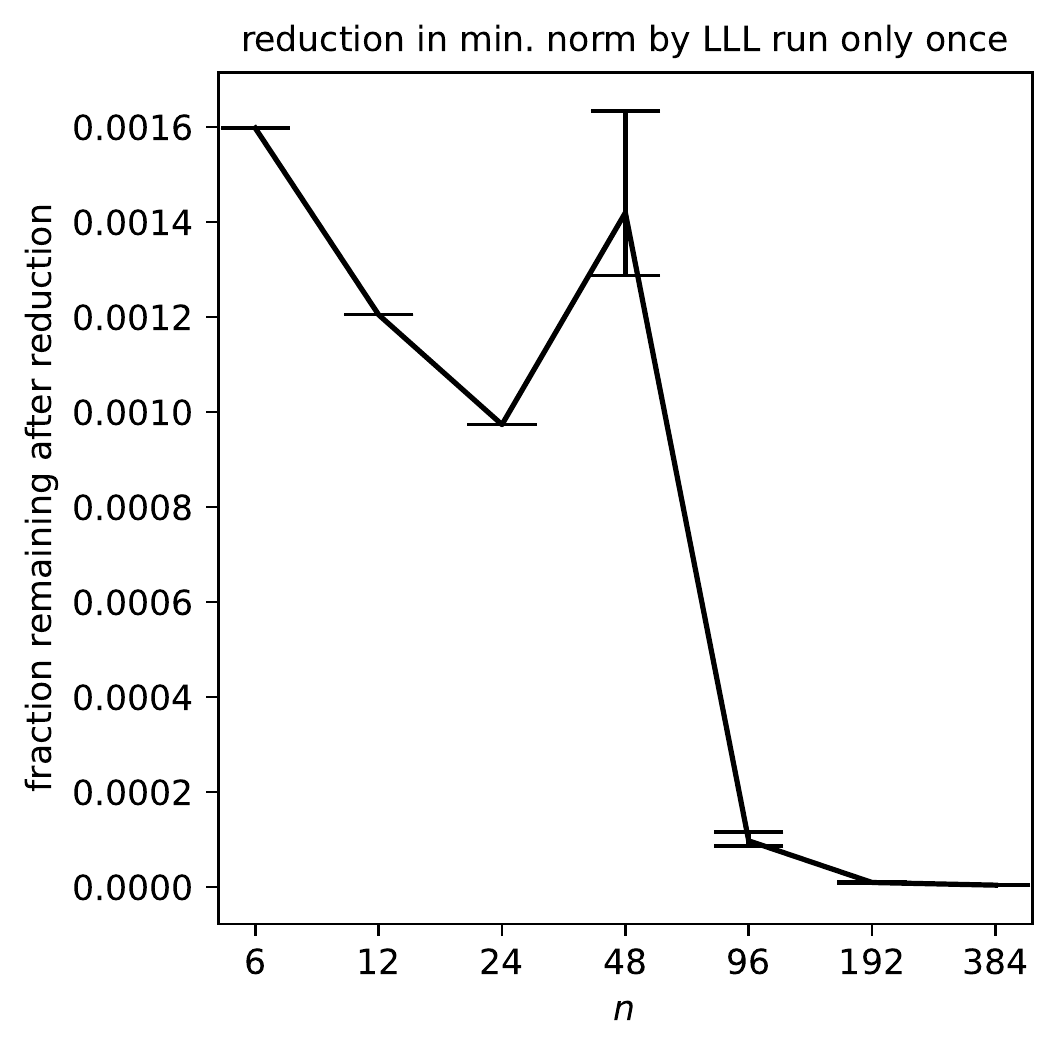}}
{\includegraphics[width=0.495\textwidth]{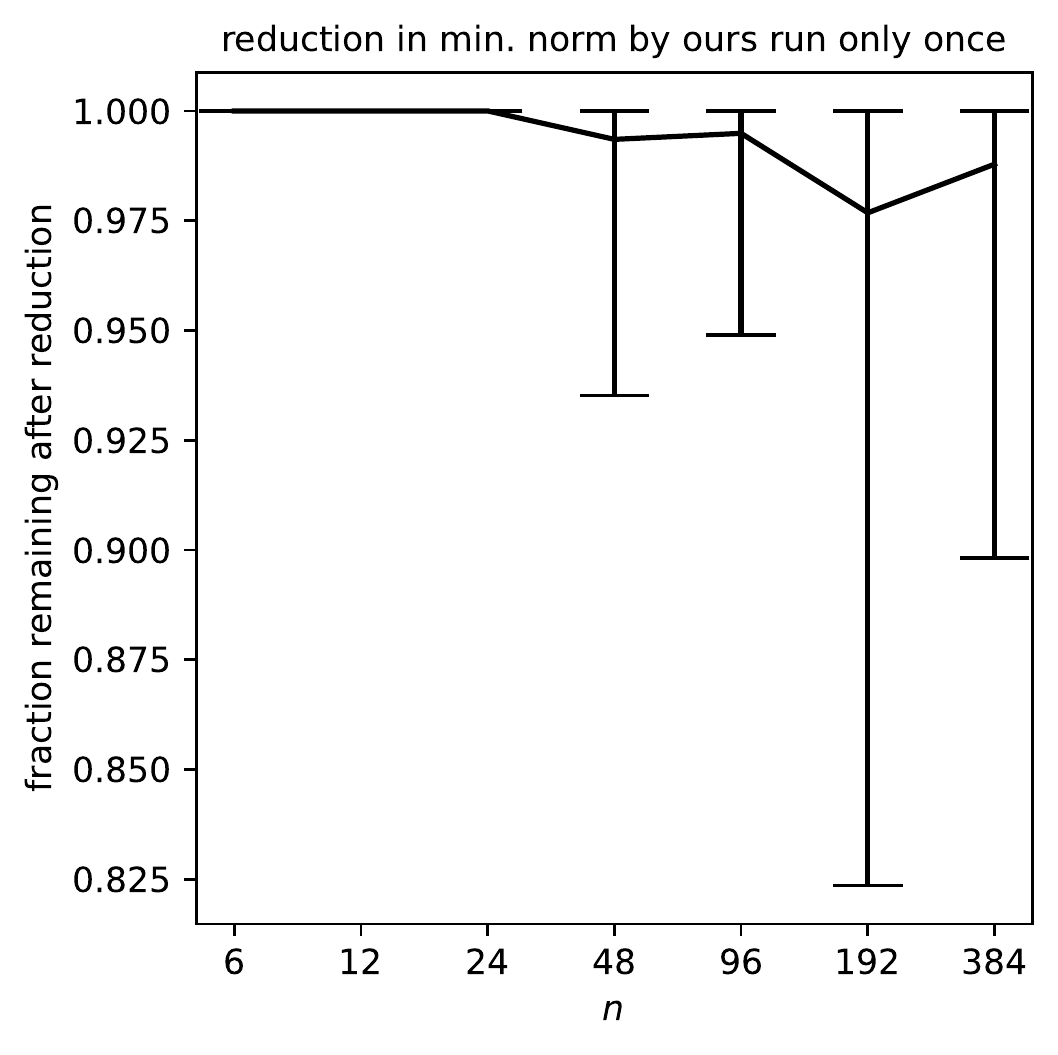}}

\end{centering}
\caption{$\delta = 1-10^{-15}$, $p = 2$, $q = 2^{31} - 1$}
\end{figure}

\begin{figure}
\begin{centering}
{\includegraphics[width=0.495\textwidth]{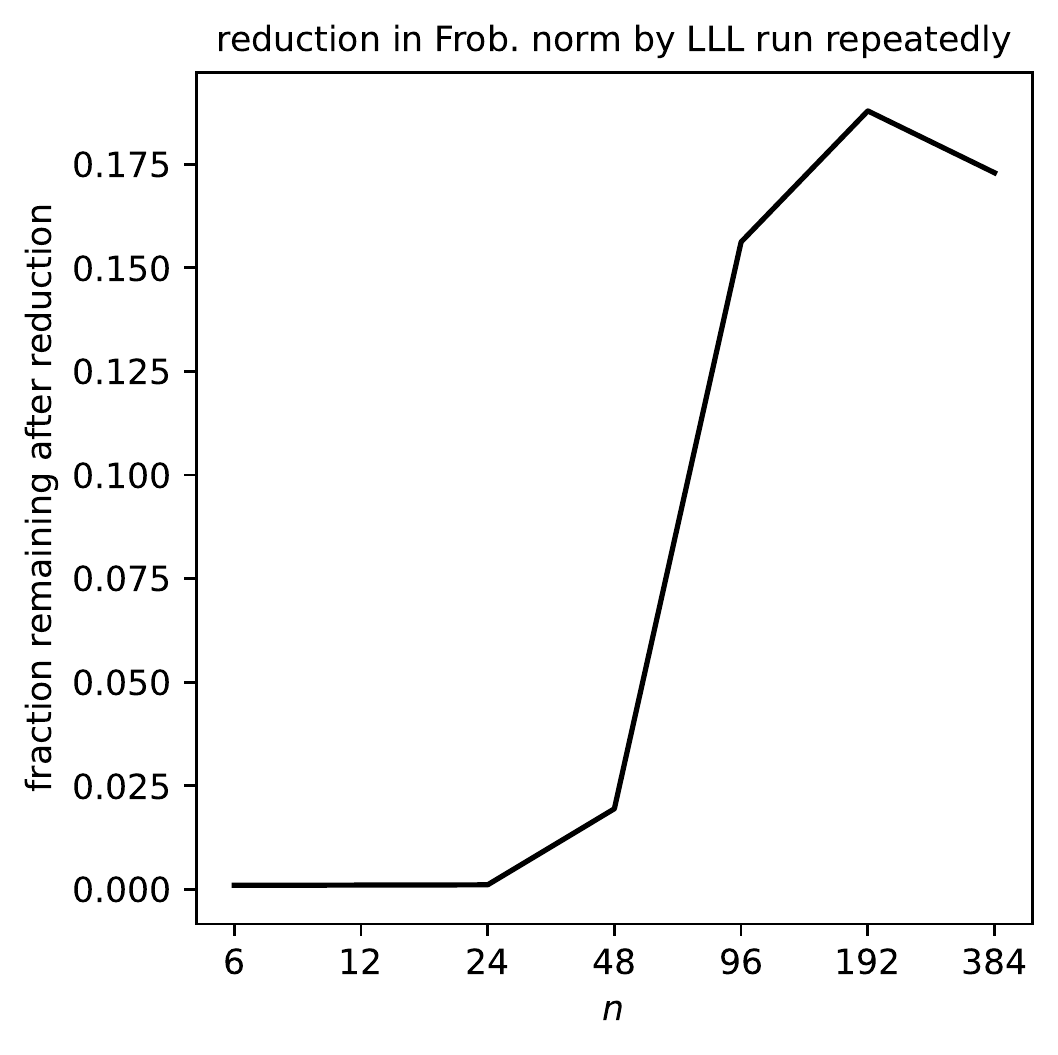}}
{\includegraphics[width=0.495\textwidth]{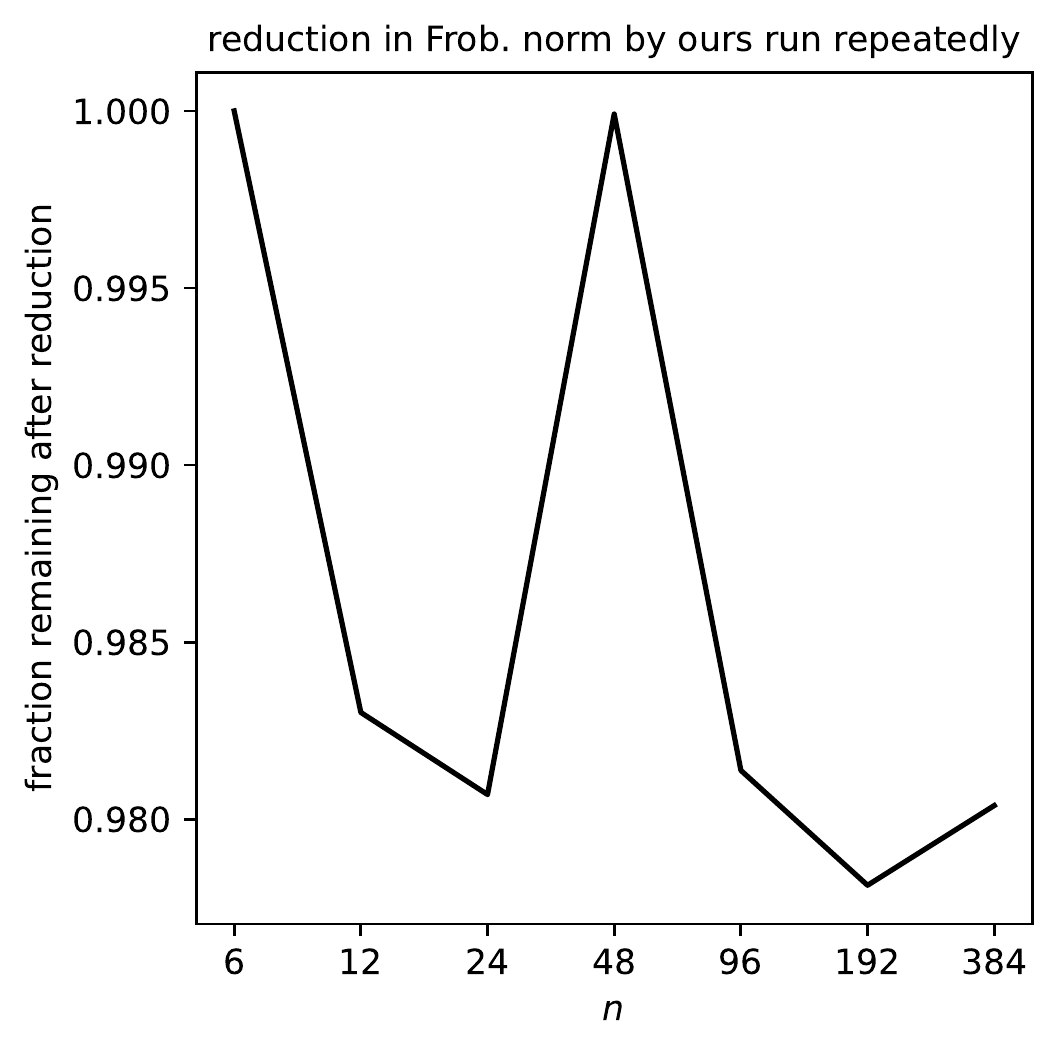}}

{\includegraphics[width=0.495\textwidth]{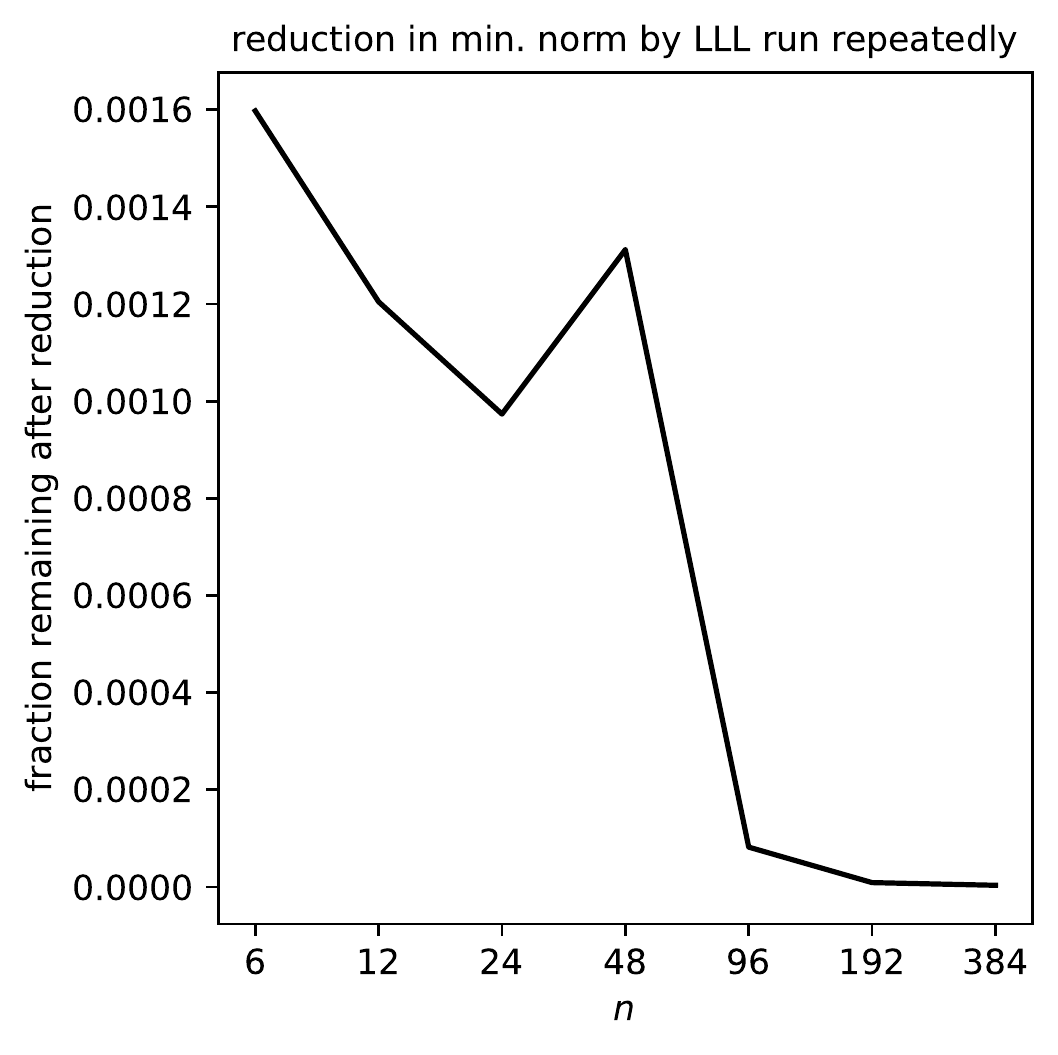}}
{\includegraphics[width=0.495\textwidth]{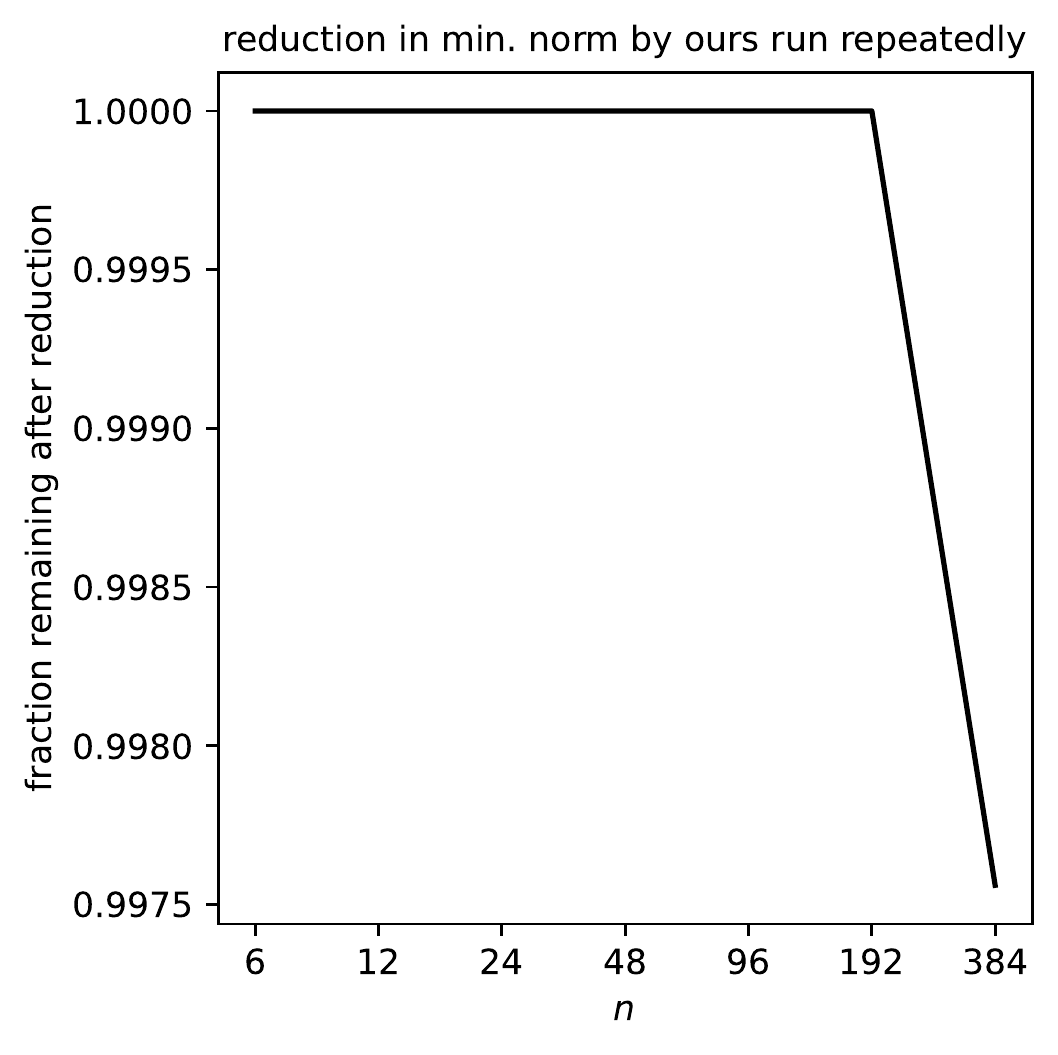}}

\end{centering}
\caption{$\delta = 1-10^{-15}$, $p = 2$, $q = 2^{31} - 1$}
\label{p2err1-1e-15-31}
\end{figure}

\section{Conclusion}
\label{conclusion}

The present paper proposes a very simple and efficient scheme
for lattice reduction. The scheme reduces the Euclidean norms
of the basis vectors monotonically, as guaranteed by rigorous proofs.
Fortunately, the algorithm of the present paper runs much faster
than even a highly optimized implementation of the most classical baseline,
the LLL algorithm of~\cite{lenstra-lenstra-lovasz}.
Unfortunately, the proposed algorithm reduces norms far less than LLL
if not used in conjunction with a method such as LLL.
The main use of the proposed scheme should therefore be to polish the outputs
of another algorithm (such as LLL).
On its own, the algorithm of the present paper tends to get stuck
in rather shallow local minima, with the iterations reaching an equilibrium
that is far away from optimally minimizing the Euclidean norms
of the basis vectors. Convergence is monotonic and hence guaranteed,
but equilibrium tends to attain without reaching the global optimum.
The algorithm of the present paper is extremely efficient computationally,
however, so can be suitable for rapidly burnishing the outputs
of other algorithms for lattice reduction.

\section*{Acknowledgements}

We would like to thank Zeyuan Allen-Zhu, Kamalika Chaudhuri, Mingjie Chen,
Evrard Garcelon, Matteo Pirotta, Jana Sotakova, and Emily Wenger.

\clearpage

\appendix
\section{Poorly performing alternatives}
\label{poor}

This appendix mentions two possible modifications that lack
the firm theoretical grounding of the algorithm presented above
and performed rather poorly in numerical experiments.
These modifications may be natural, yet seem not to work well.
Subsection~\ref{multiple} considers adding to a basis vector
multiple other basis vectors simultaneously,
such that the full linear combination would minimize the Euclidean norm
if the coefficients in the linear combination did not have to be rounded
to the nearest integers.
Subsection~\ref{modified} considers a modified Gram-Schmidt procedure.

\subsection{Combining multiple vectors simultaneously}
\label{multiple}

One possibility is to choose a basis vector at random, say $a^i_j$,
and add to that vector the linear combination of all other basis vectors
which minimizes the Euclidean norm of the result, with the coefficients
in the linear combination rounded to the nearest integers.
That is, choose an index $j$ uniformly at random,
and calculate real-valued coefficients $c^i_{j,k}$ such that the Euclidean norm
$\| a^i_j - \sum_{k=1}^n c^i_{j,k} \cdot a^i_k \|$ is minimal,
where $c^i_{j,j} = 0$. Then, construct
$a^{i+1}_j = a^i_j - \sum_{k=1}^n \nint(c^i_{j,k}) \cdot a^i_k$.

Repeating the process for multiple iterations, $i = 0$, $1$, $2$, \dots,
would appear reasonable. However, this scheme worked well empirically only
when the number $n$ of basis vectors was very small, at least
when the number $m$ of entries in each of the basis vectors was equal to $n$.
Rounding the coefficients $c^i_{j,k}$ to the nearest integers is too harsh
for this process to work well.

\subsection{Modified Gram-Schmidt process}
\label{modified}

Another possibility is to run the classical Gram-Schmidt procedure
on the basis vectors, while subtracting off from all vectors
not yet added to the orthogonal basis the projections
onto the current pivot vector. In this modified Gram-Schmidt scheme,
each iteration chooses as the next pivot vector the residual basis vector
for which adding that basis vector to the orthogonal basis would
minimize the sum of the $p$-th powers of the Euclidean norms
of the reduced basis vectors. The iteration orthogonalizes the pivot vector
against all previously chosen pivot vectors and then subtracts off
(from all residual basis vectors not yet chosen as pivots)
the projection onto the orthogonalized pivot vector,
with the coefficients in the projections rounded to the nearest integers.

This scheme strongly resembles the LLL algorithm
of~\cite{lenstra-lenstra-lovasz}, but with a different pivoting strategy
(using modified Gram-Schmidt). Numerical experiments indicate that
the modified Gram-Schmidt performs somewhat similarly to
yet significantly worse than the classical LLL algorithm.
Omitting the bubble-sorting of the LLL algorithm
via the so-called ``Lov\'asz criterion'' spoils the scheme.

This scheme is also reminiscent of the variants of the LLL algorithm
with so-called ``deep insertions,'' as developed
by~\cite{schnorr-euchner}, \cite{fontein-schneider-wagner},
\cite{yasuda-yamaguchi}, and others.
LLL with deep insertions performs much better, however,
both theoretically and practically.
Other modifications to the LLL algorithm,
notably the BKZ and BKW methods reviewed by~\cite{nguyen-vallee} and others,
also perform much better than the modified Gram-Schmidt.

\section{Further figures}
\label{further}

This appendix presents figures analogous to those of Section~\ref{results},
but using different values of the parameters $\delta$ and $p$
detailed in Subsection~\ref{figures}.
Figures~\ref{p2time1-1e-1}--\ref{p2err1-1e-1-31} are the same
as Figures~\ref{p2time1-1e-15}--\ref{p2err1-1e-15-31},
but with $\delta = 1 - 10^{-1}$ instead of $\delta = 1 - 10^{-15}$.
Figures~\ref{pstime1-1e-15}--\ref{pserr1-1e-1-31} are the same
as Figures~\ref{p2time1-1e-15}--\ref{p2err1-1e-1-31} for $n = 192$,
but with varying values of $p$ rather than just $p = 2$.

\begin{figure}
\begin{centering}
{\includegraphics[width=0.495\textwidth]{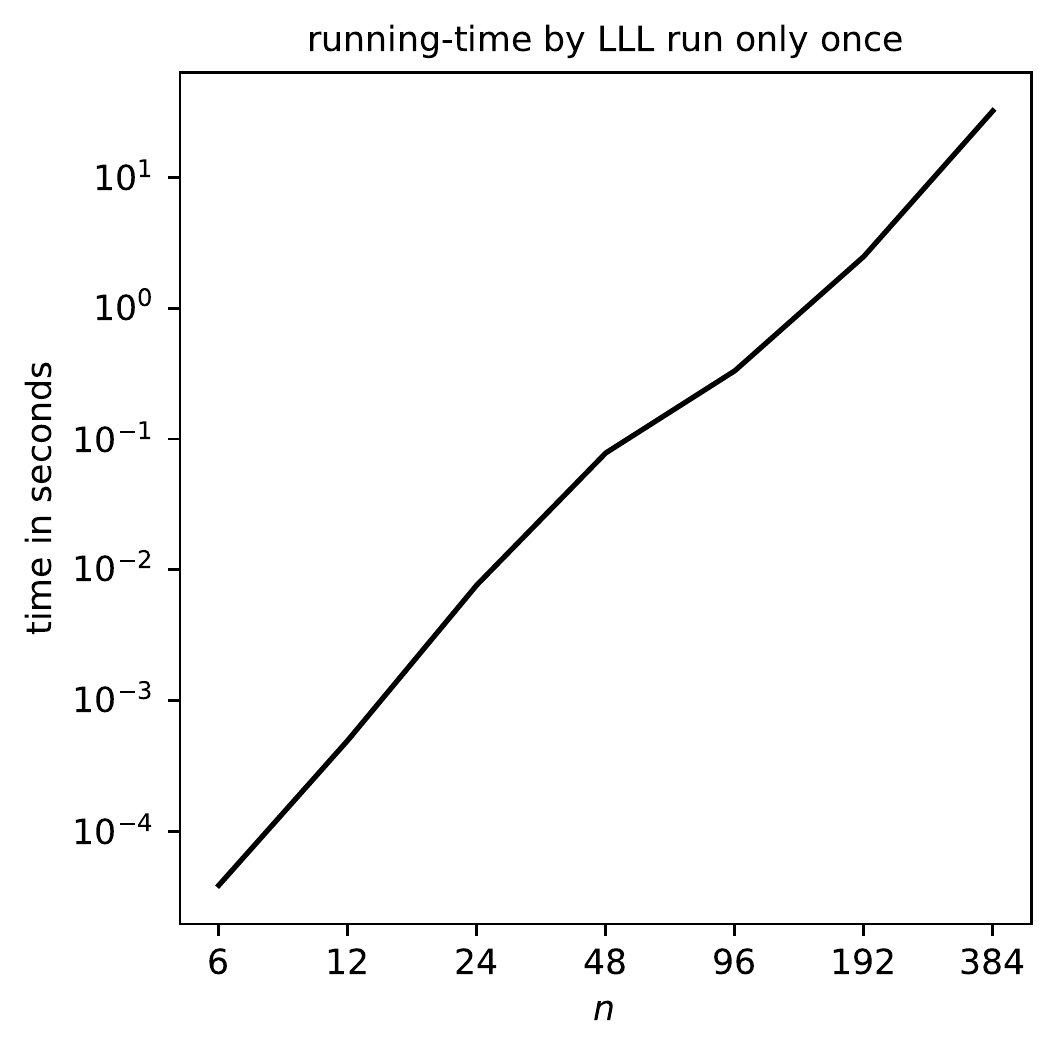}}
{\includegraphics[width=0.495\textwidth]{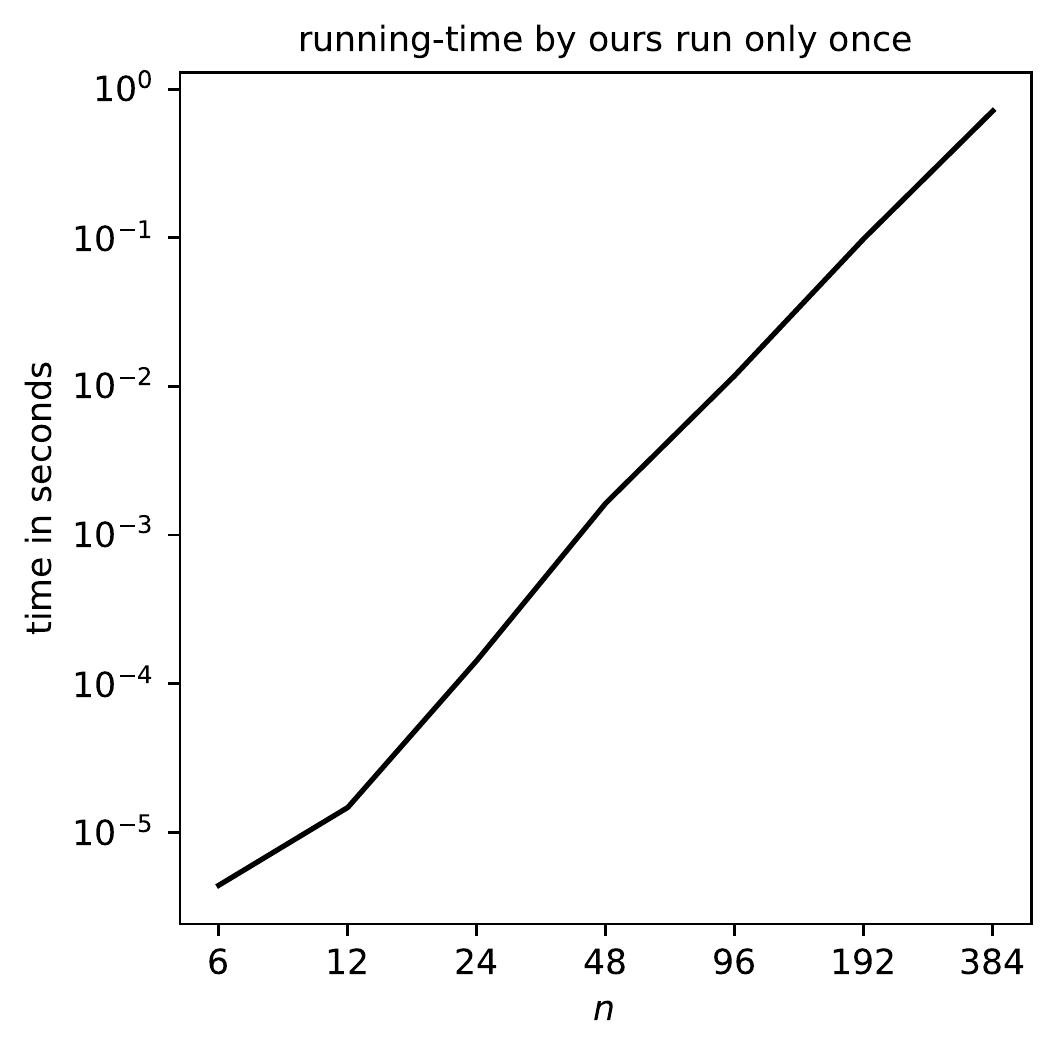}}

{\includegraphics[width=0.495\textwidth]{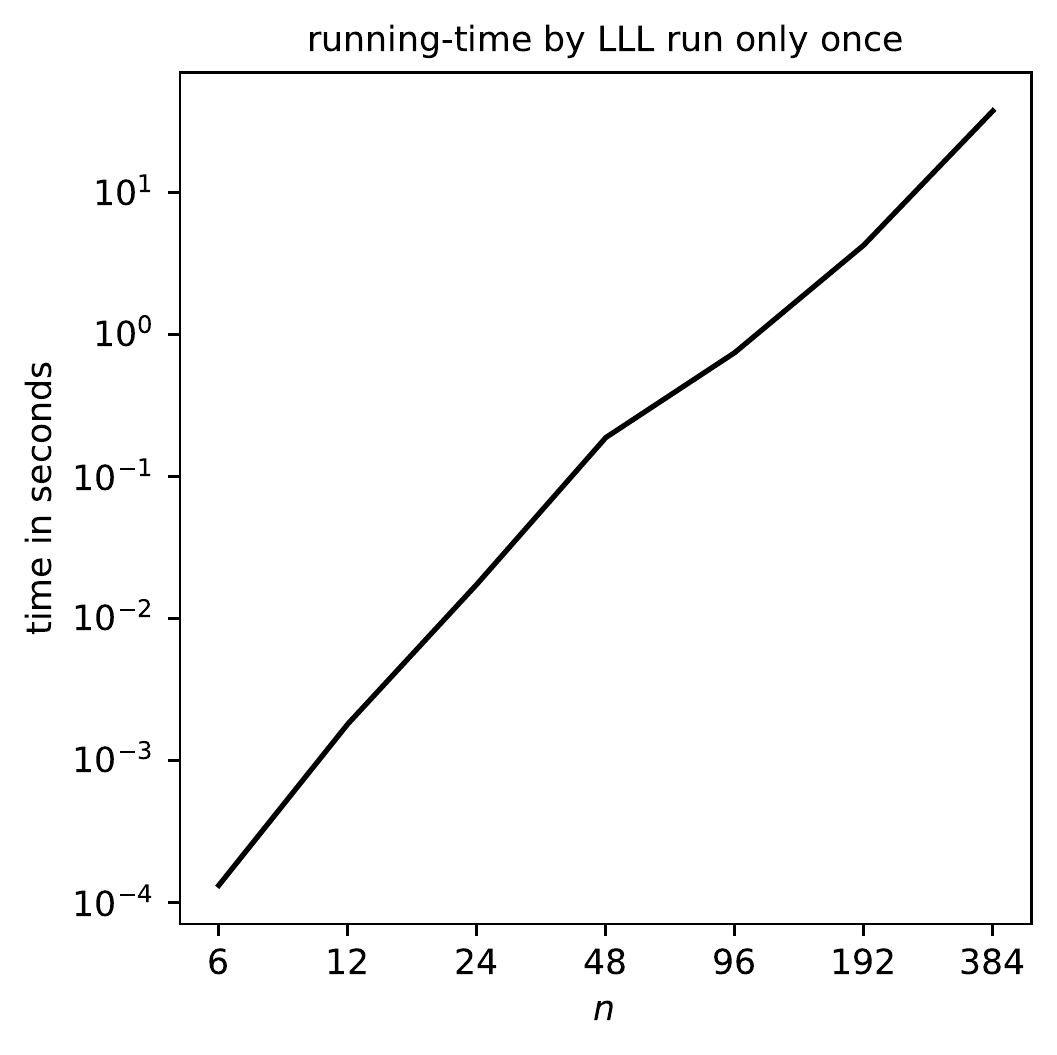}}
{\includegraphics[width=0.495\textwidth]{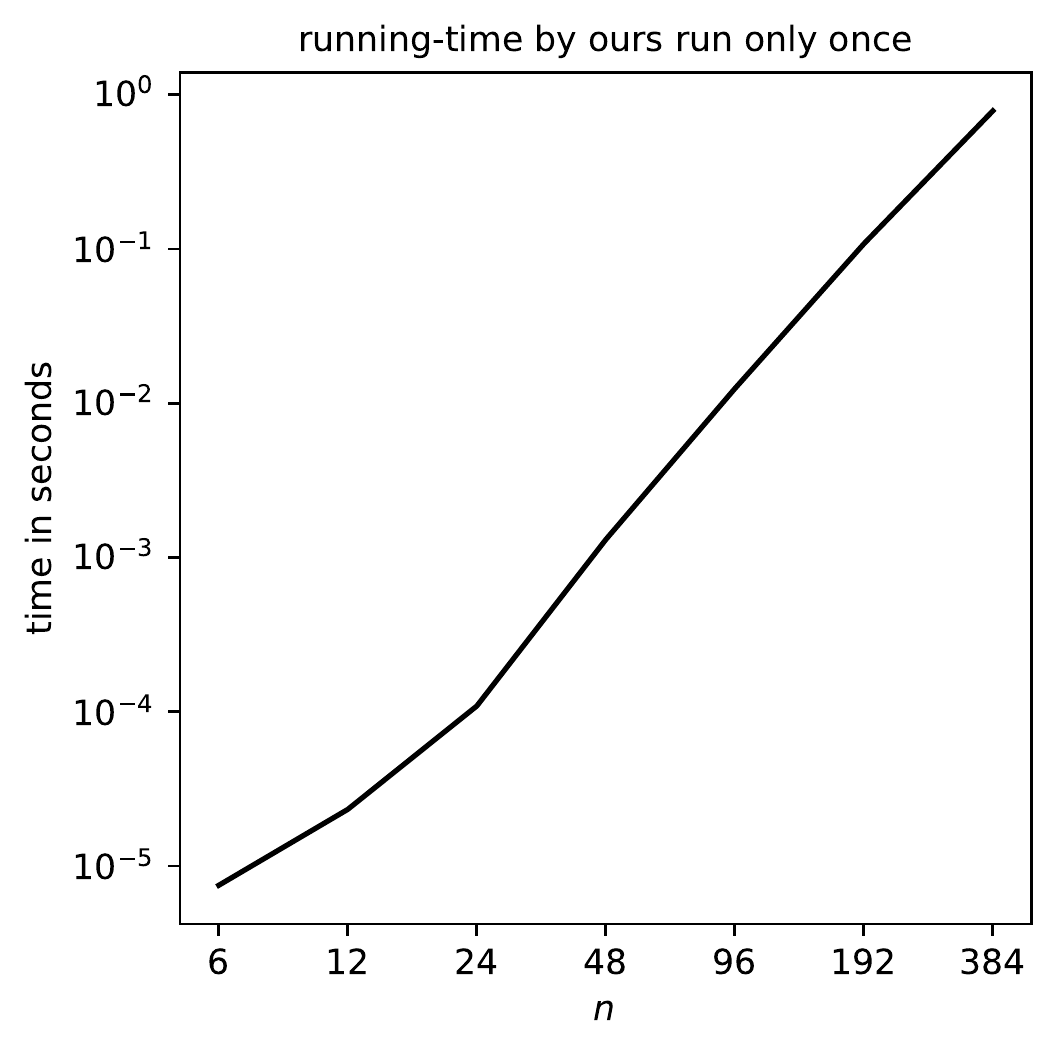}}

\end{centering}
\caption{$\delta = 1-10^{-1}$, $p = 2$;
         the upper plots are for $q = 2^{13} - 1$,
         the lower plots are for $q = 2^{31} - 1$}
\label{p2time1-1e-1}
\end{figure}

\begin{figure}
\begin{centering}
{\includegraphics[width=0.495\textwidth]{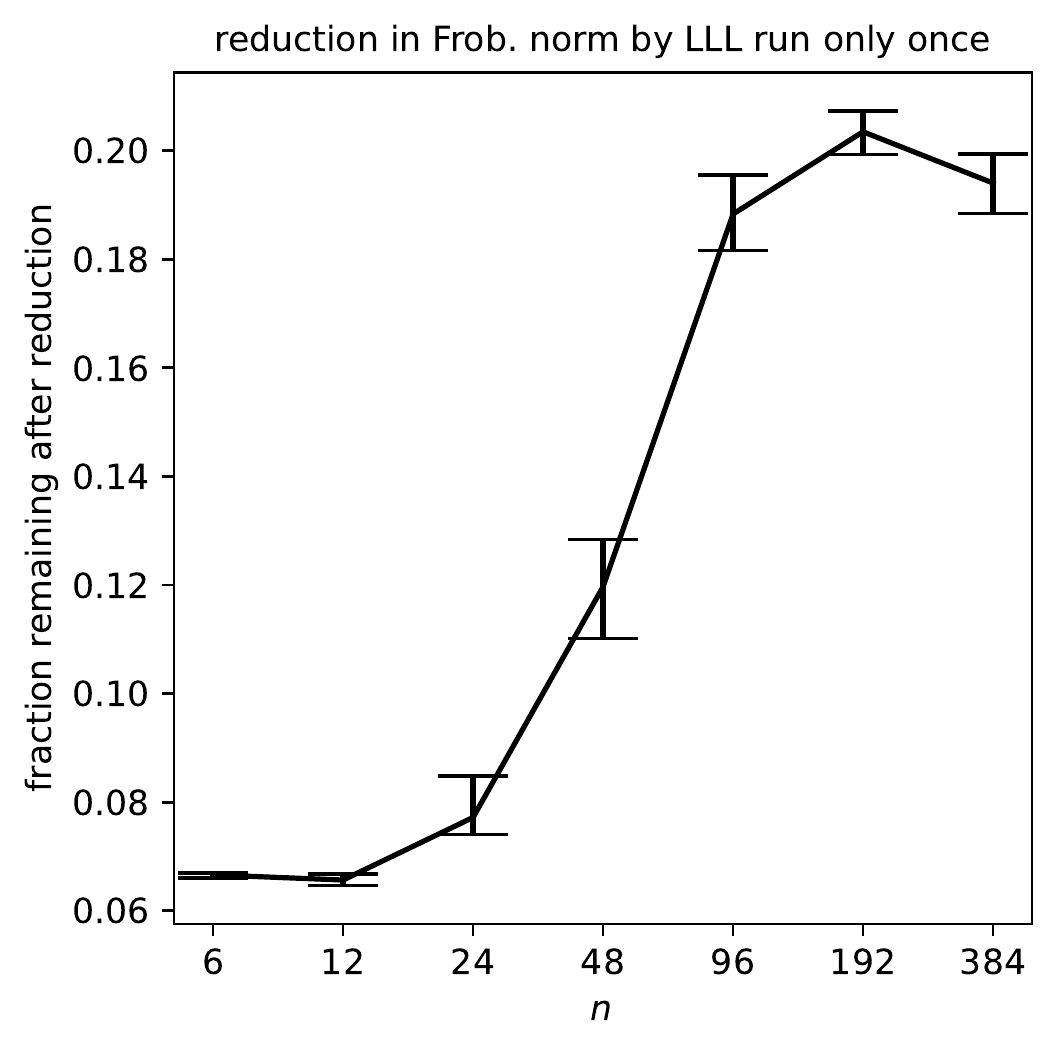}}
{\includegraphics[width=0.495\textwidth]{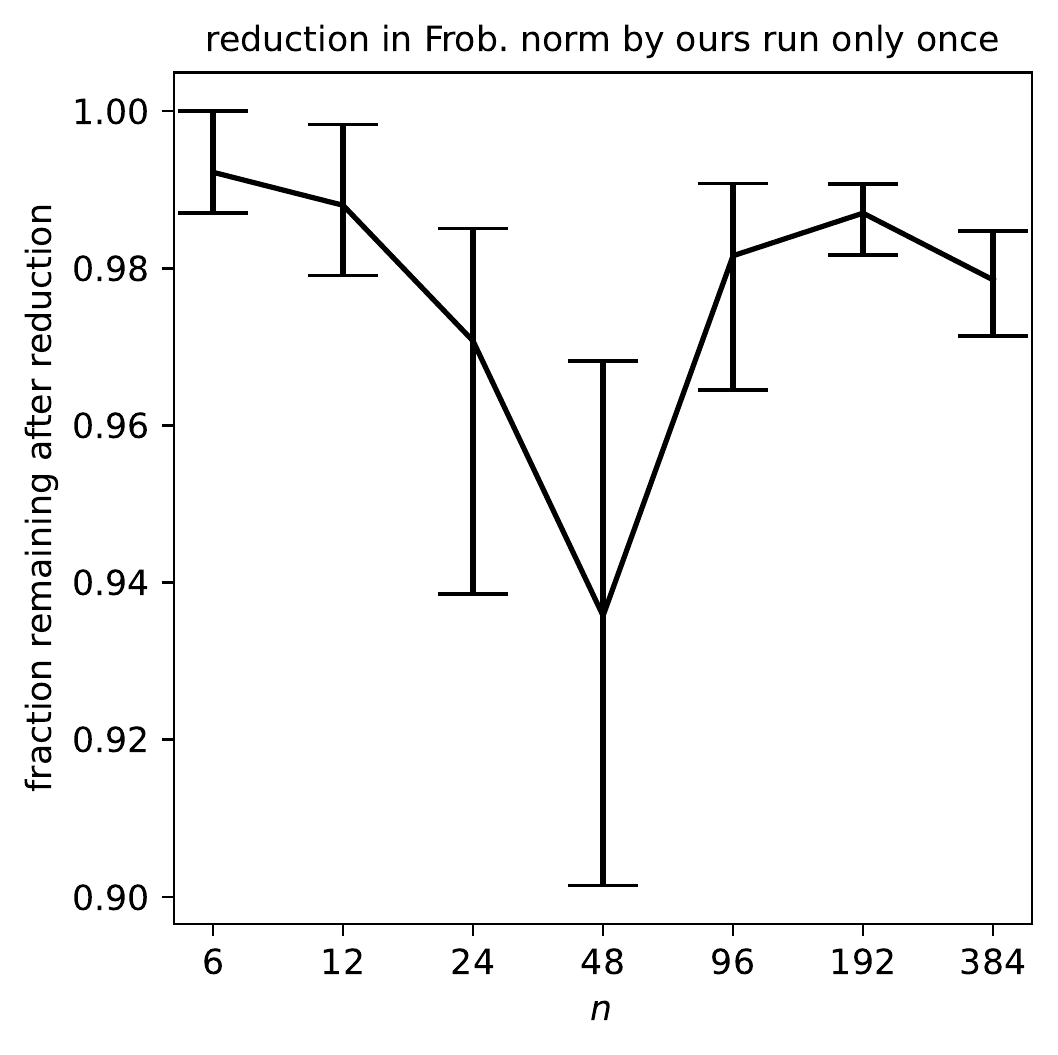}}

{\includegraphics[width=0.495\textwidth]{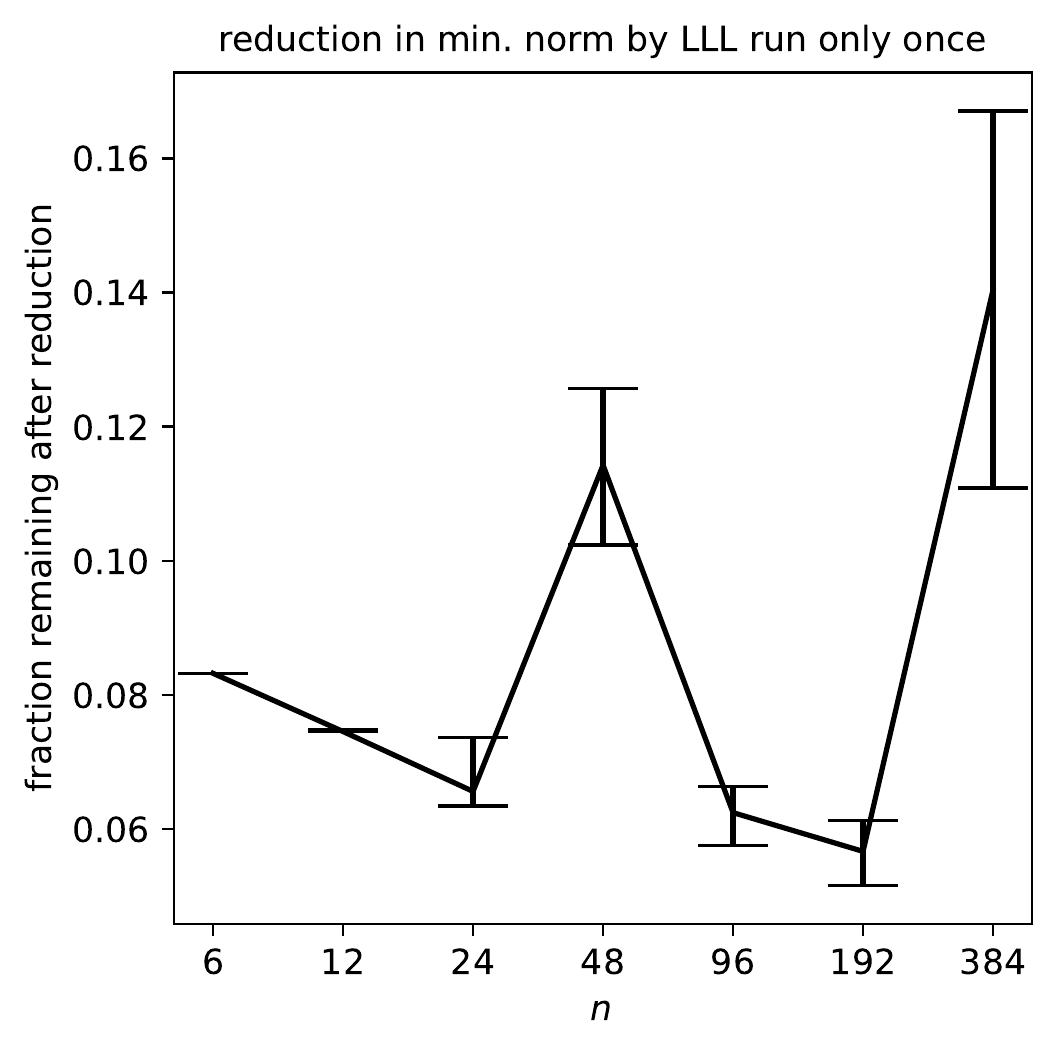}}
{\includegraphics[width=0.495\textwidth]{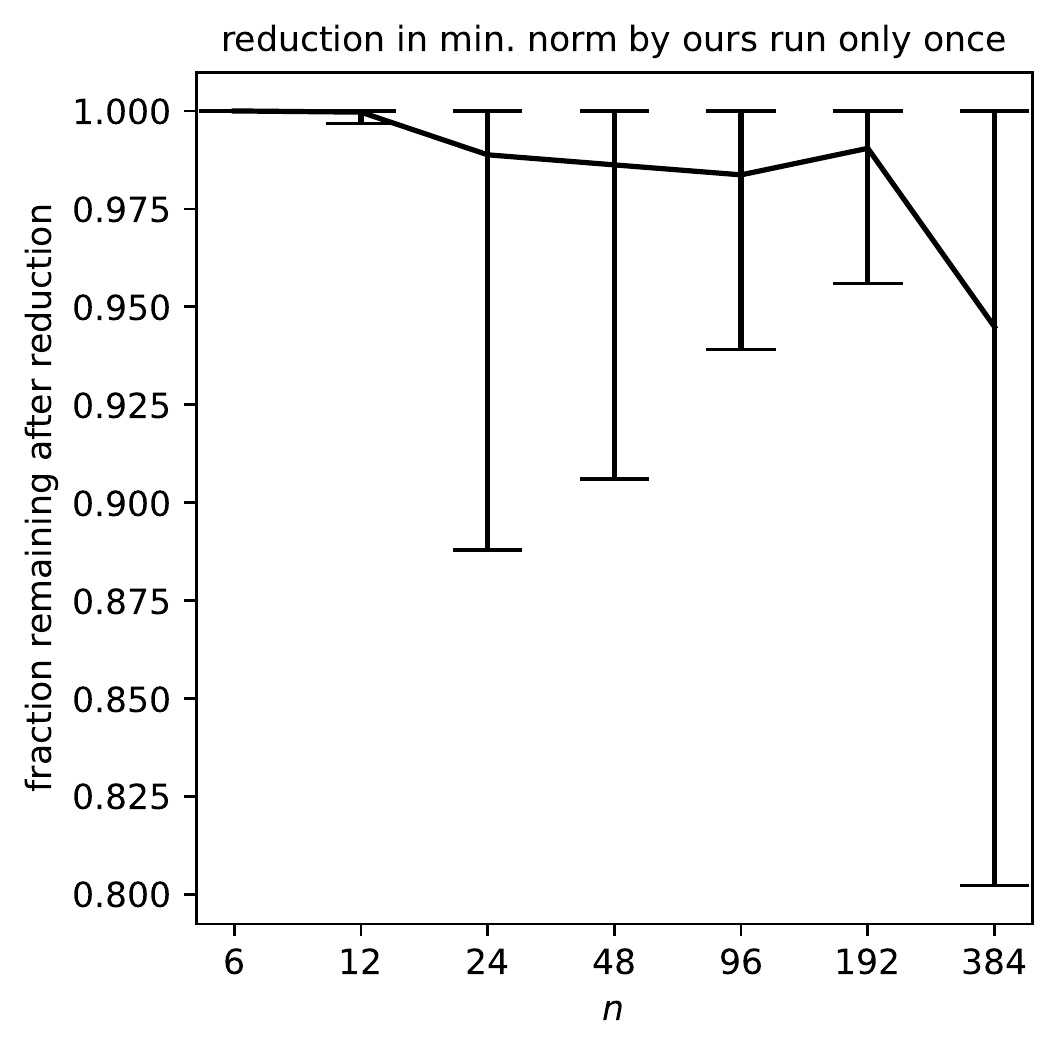}}

\end{centering}
\caption{$\delta = 1-10^{-1}$, $p = 2$, $q = 2^{13} - 1$}
\end{figure}

\begin{figure}
\begin{centering}
{\includegraphics[width=0.495\textwidth]{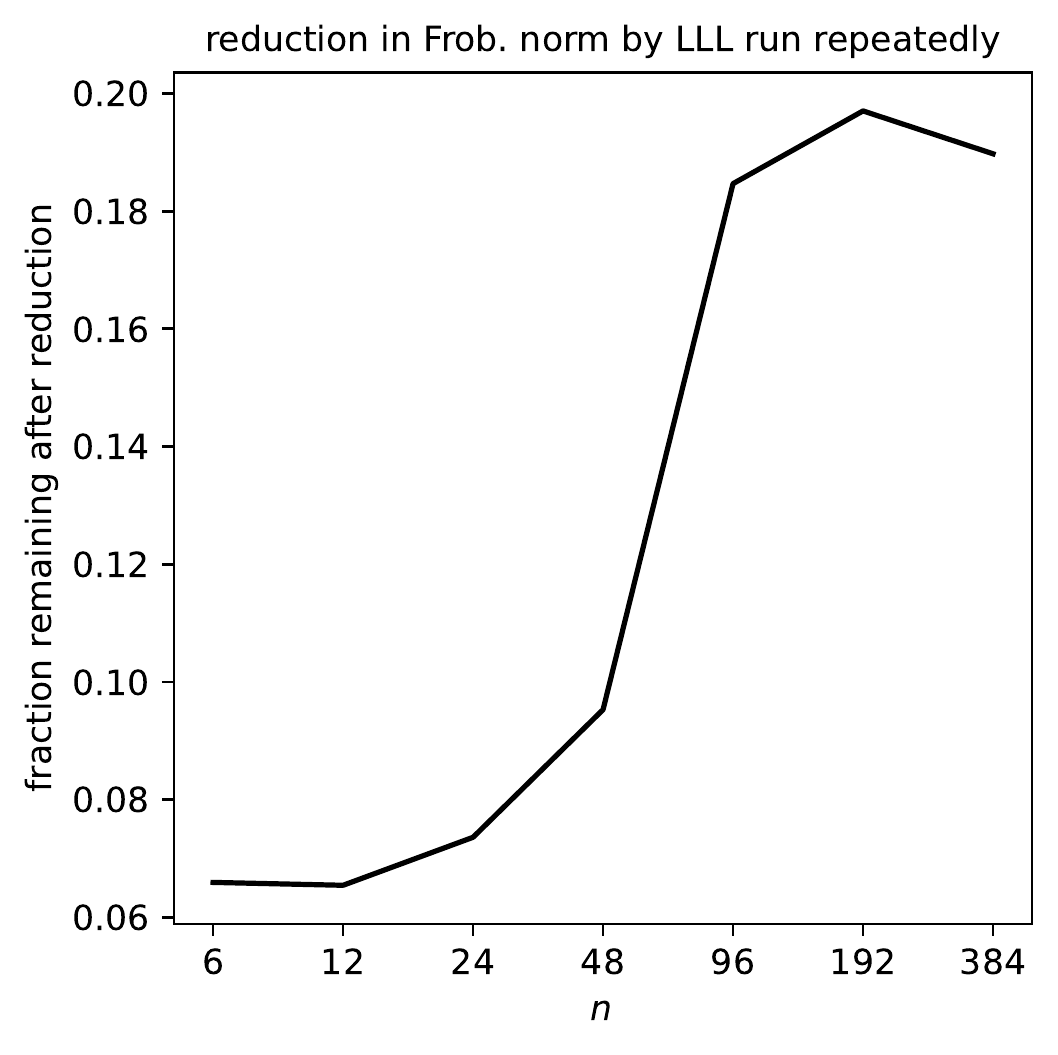}}
{\includegraphics[width=0.495\textwidth]{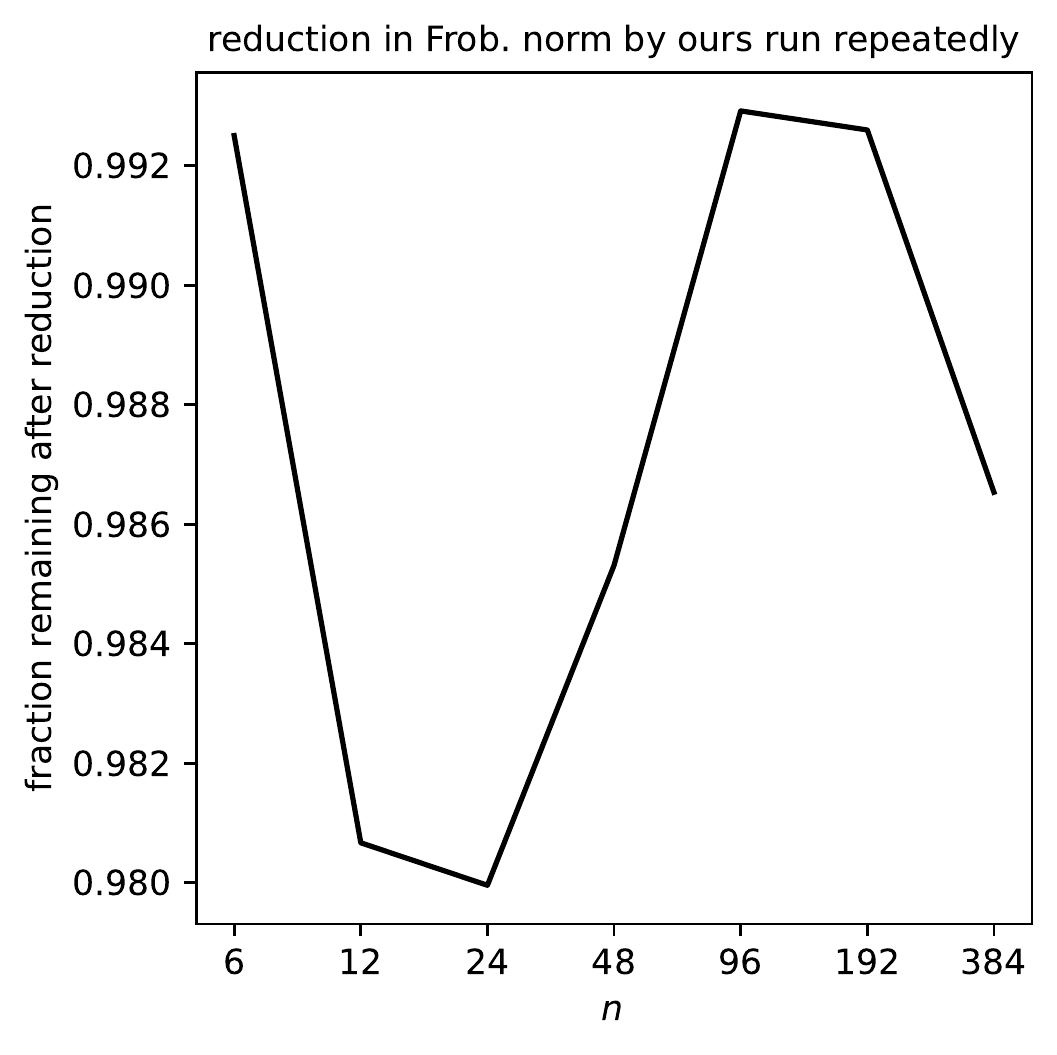}}

{\includegraphics[width=0.495\textwidth]{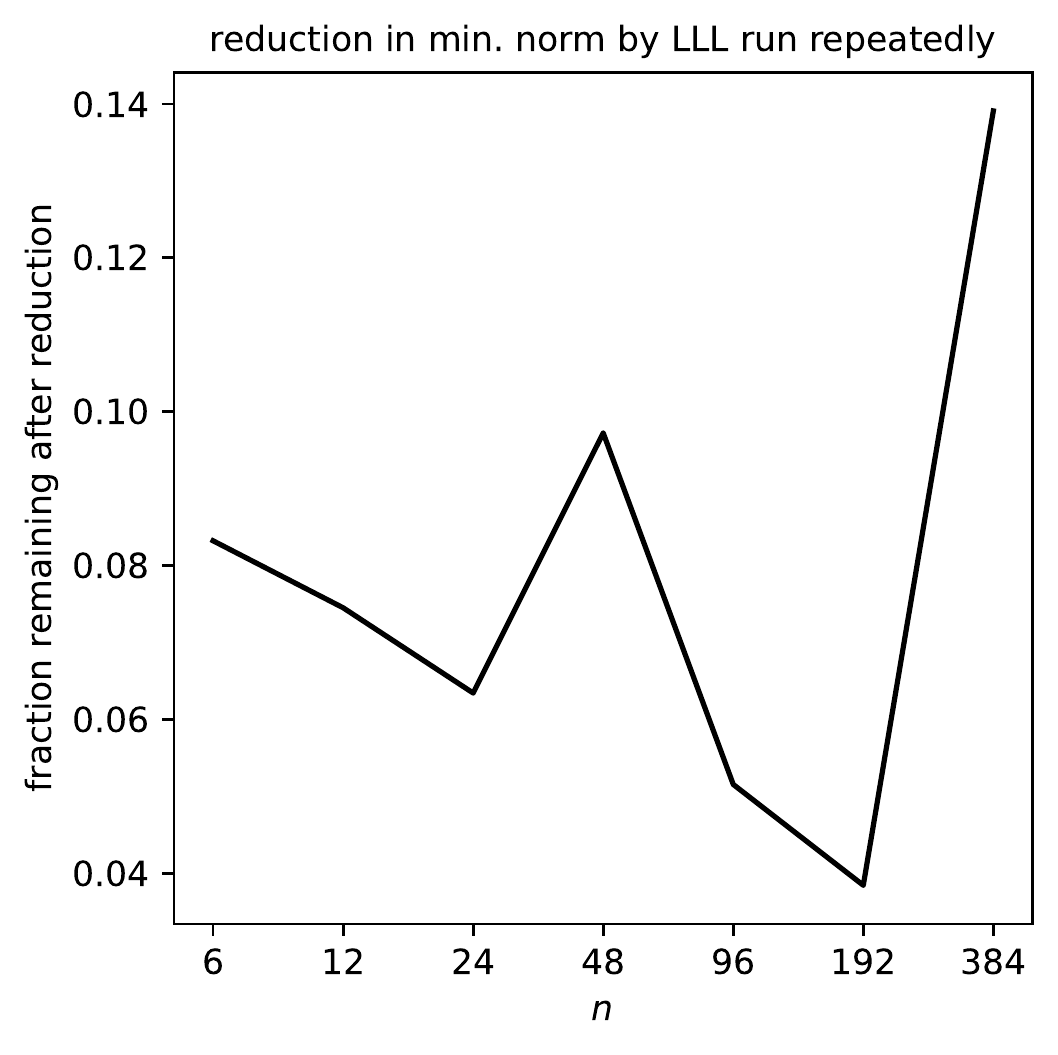}}
{\includegraphics[width=0.495\textwidth]{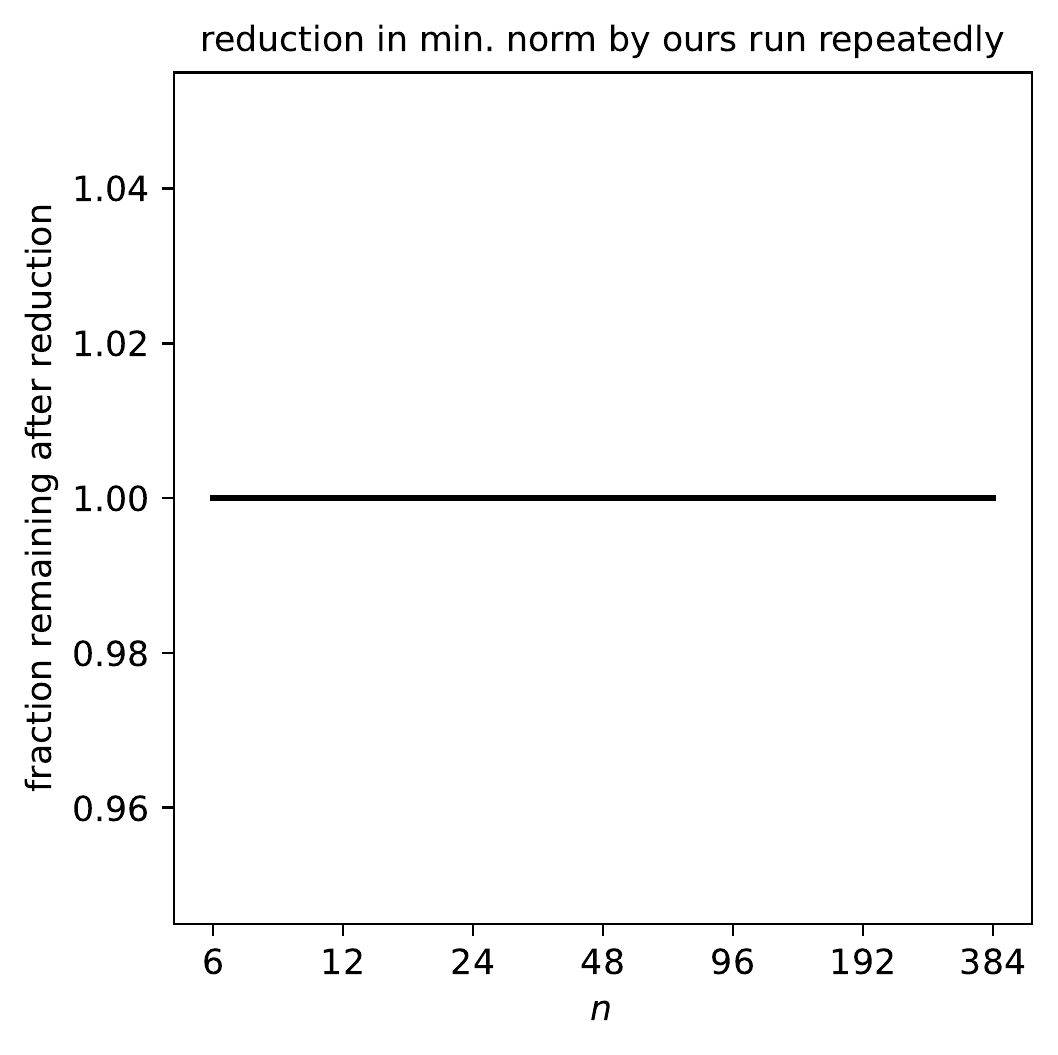}}

\end{centering}
\caption{$\delta = 1-10^{-1}$, $p = 2$, $q = 2^{13} - 1$}
\end{figure}

\begin{figure}
\begin{centering}
{\includegraphics[width=0.495\textwidth]{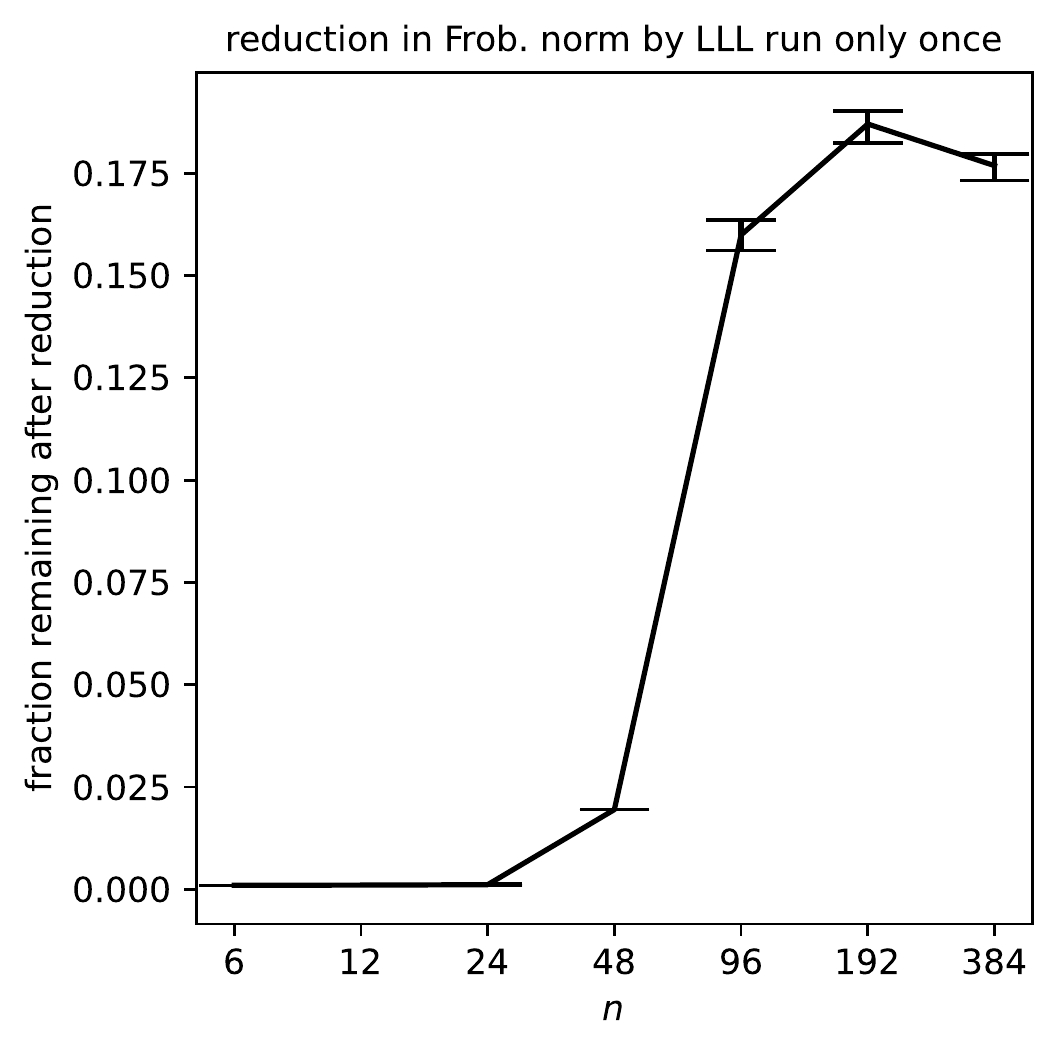}}
{\includegraphics[width=0.495\textwidth]{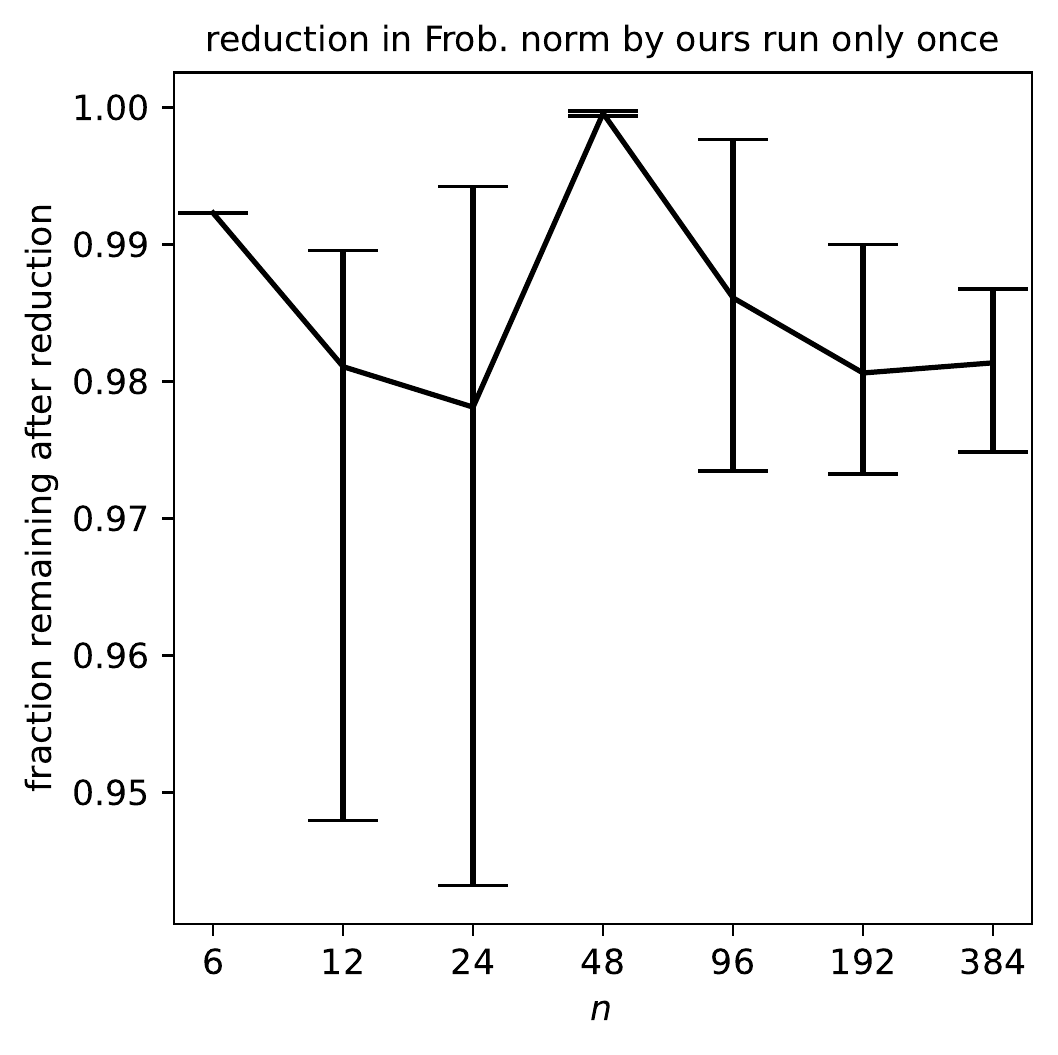}}

{\includegraphics[width=0.495\textwidth]{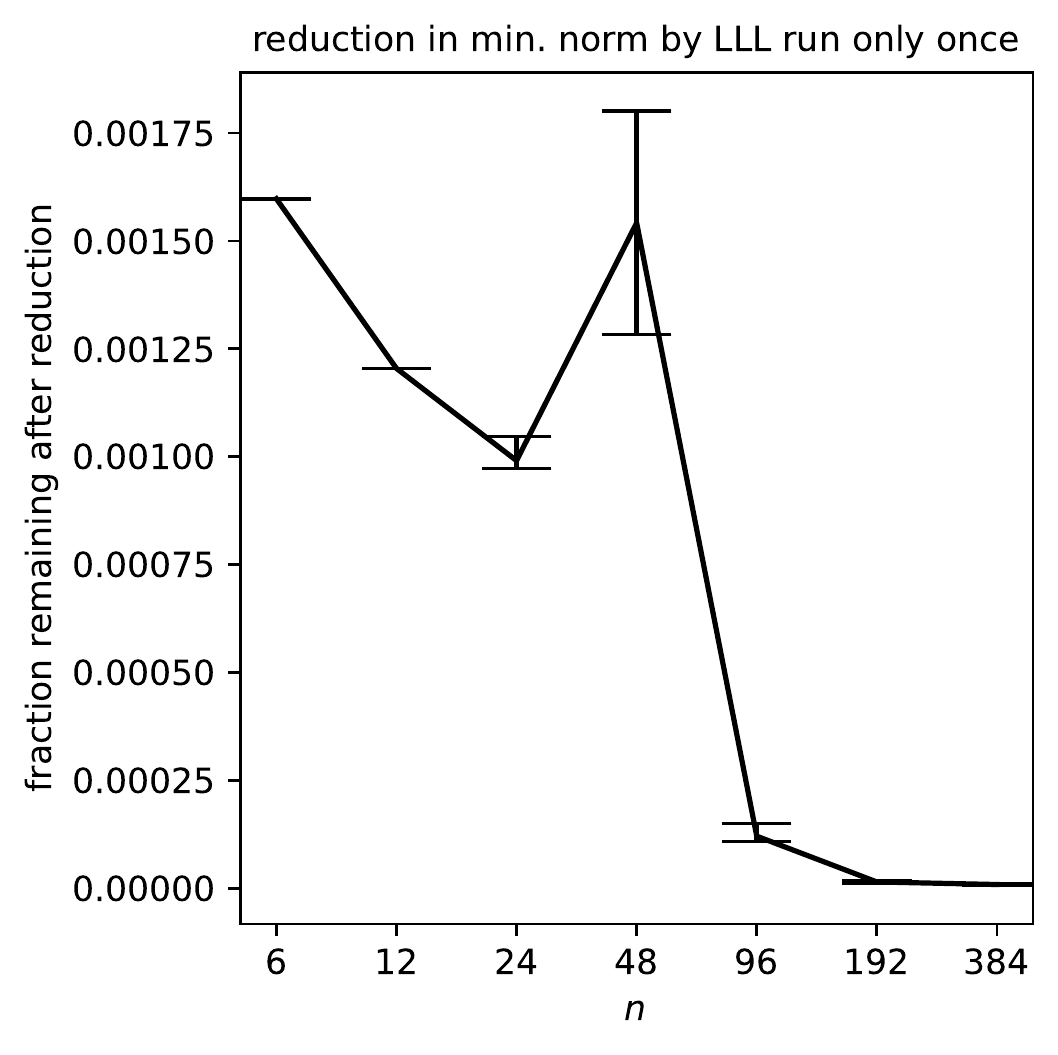}}
{\includegraphics[width=0.495\textwidth]{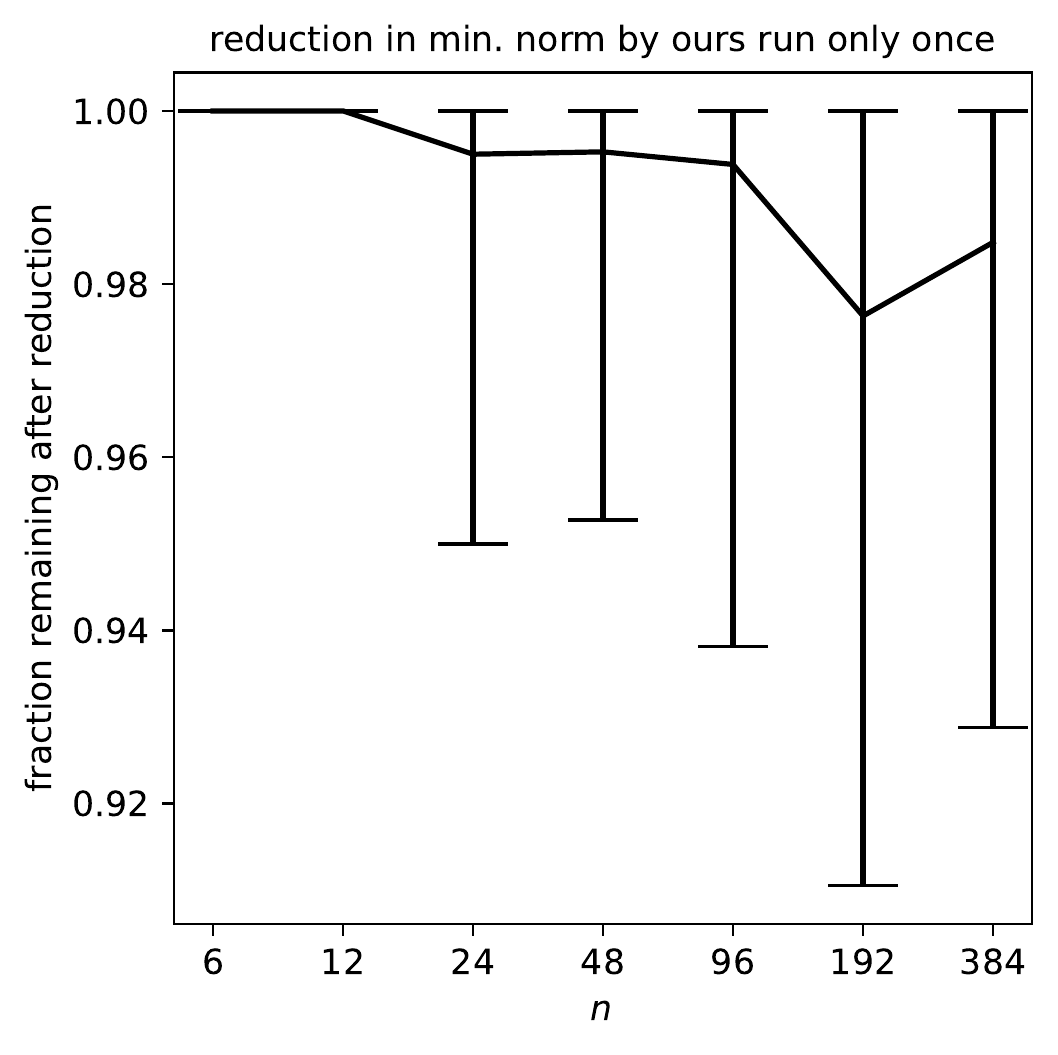}}

\end{centering}
\caption{$\delta = 1-10^{-1}$, $p = 2$, $q = 2^{31} - 1$}
\end{figure}

\begin{figure}
\begin{centering}
{\includegraphics[width=0.495\textwidth]{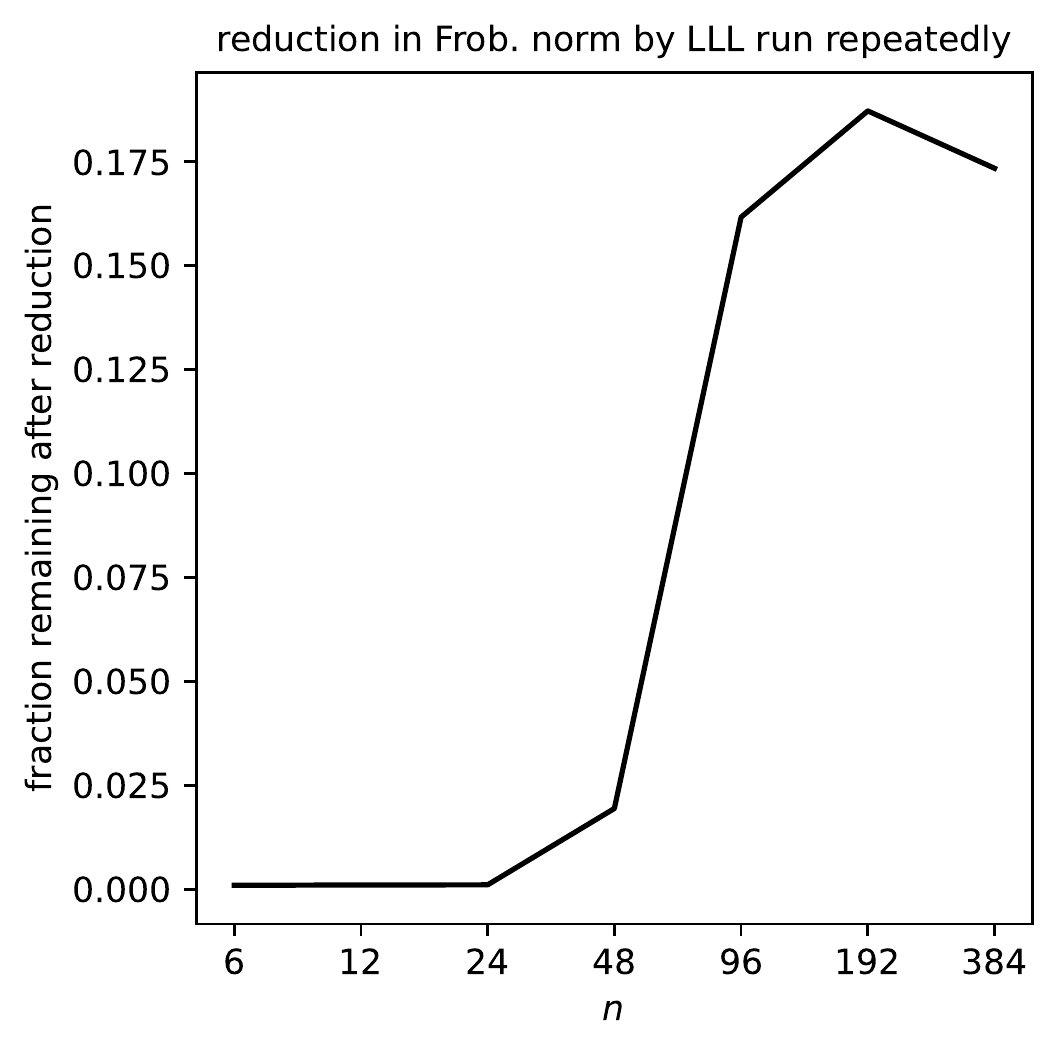}}
{\includegraphics[width=0.495\textwidth]{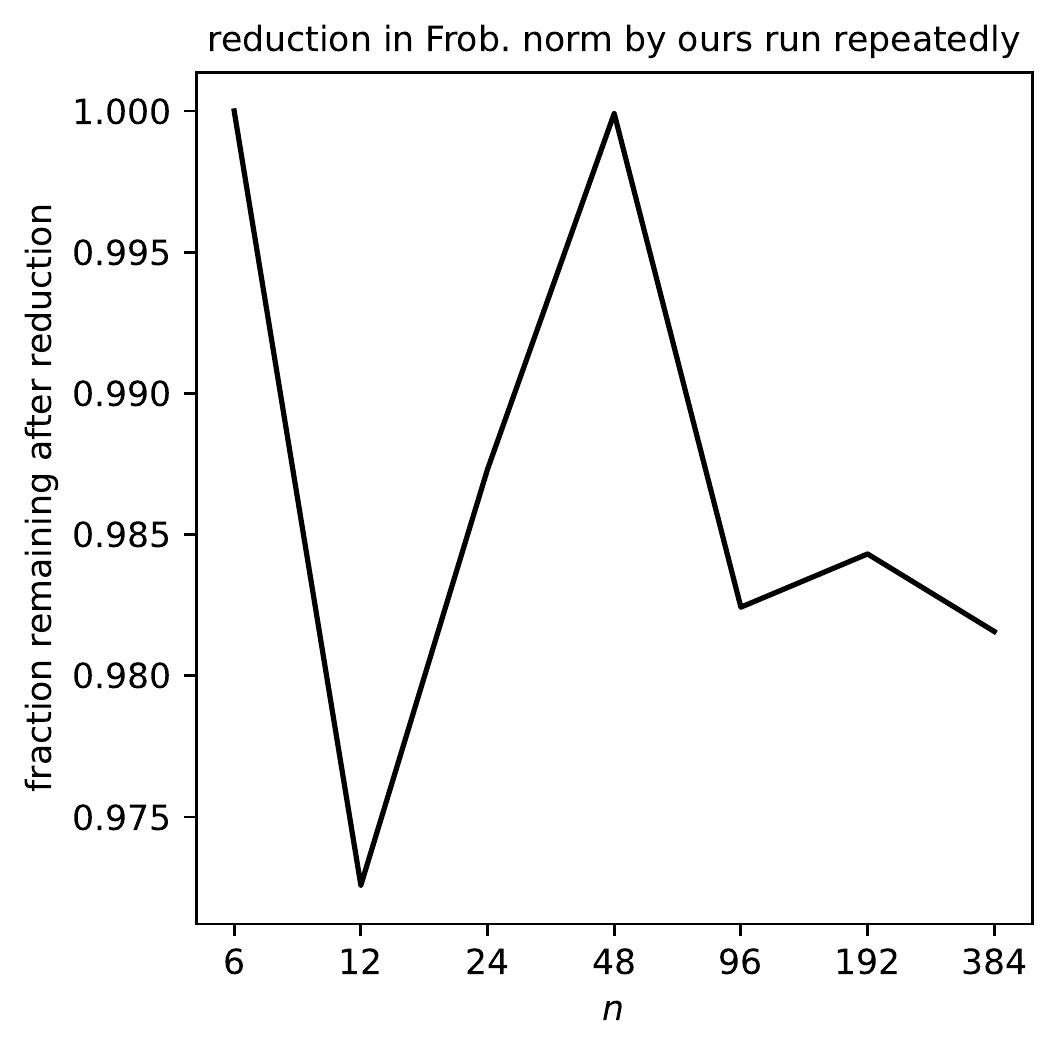}}

{\includegraphics[width=0.495\textwidth]{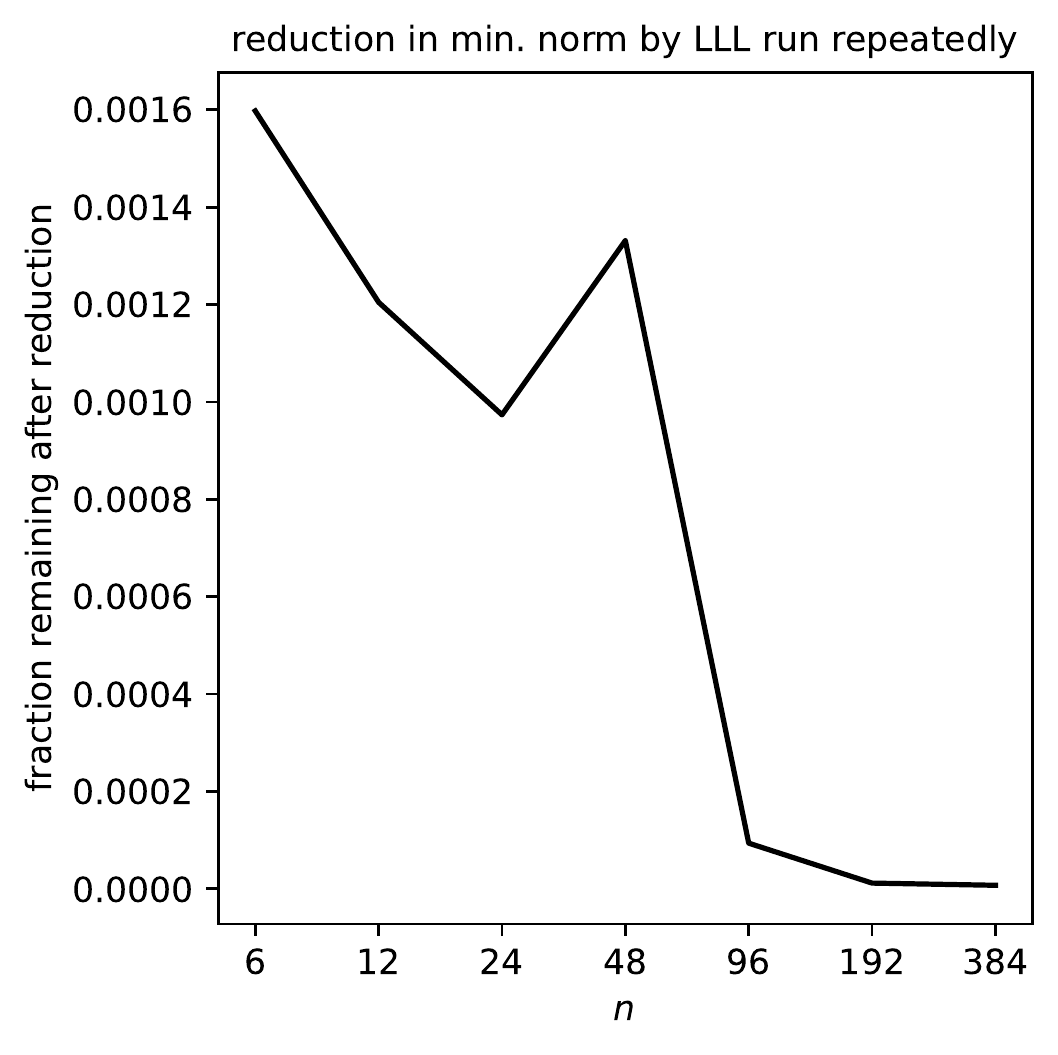}}
{\includegraphics[width=0.495\textwidth]{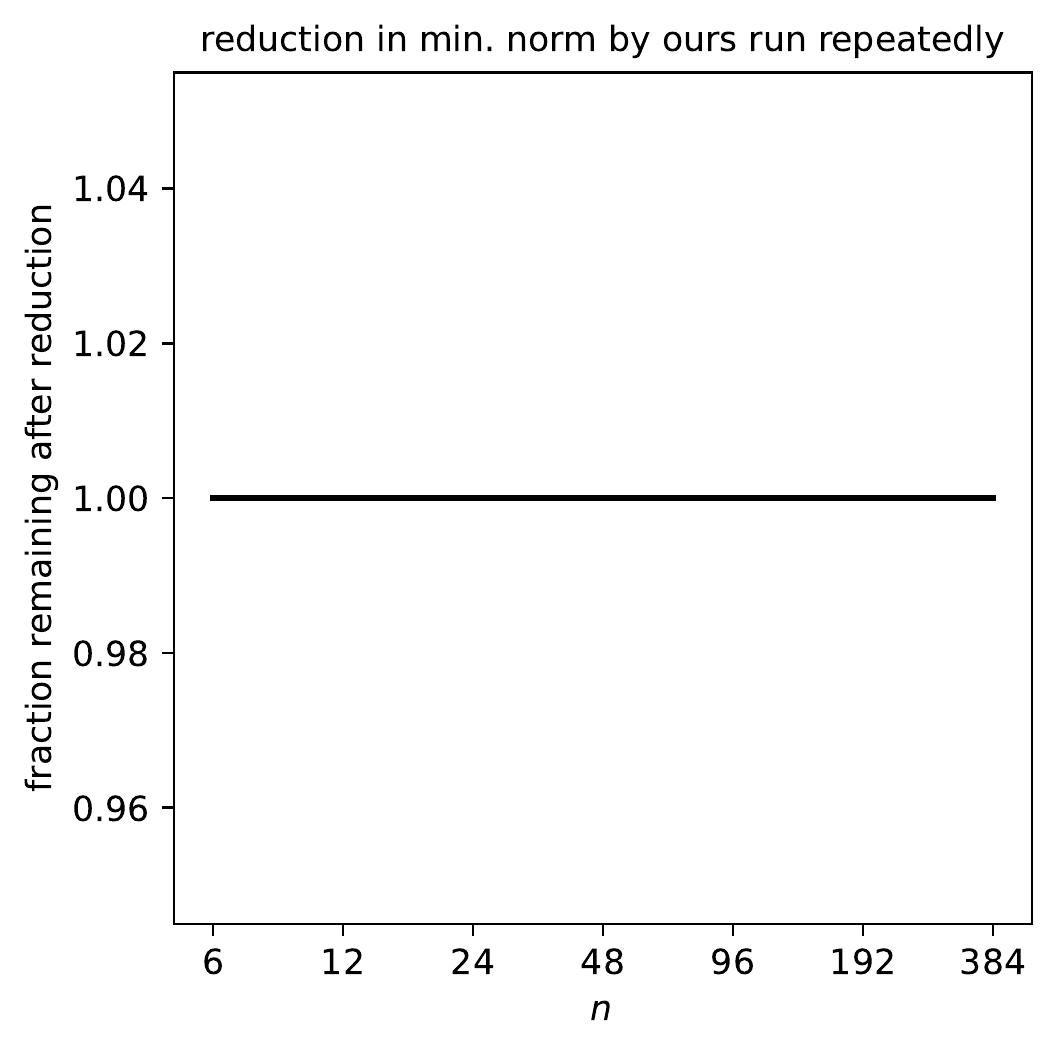}}

\end{centering}
\caption{$\delta = 1-10^{-1}$, $p = 2$, $q = 2^{31} - 1$}
\label{p2err1-1e-1-31}
\end{figure}

\begin{figure}
\begin{centering}
{\includegraphics[width=0.495\textwidth]{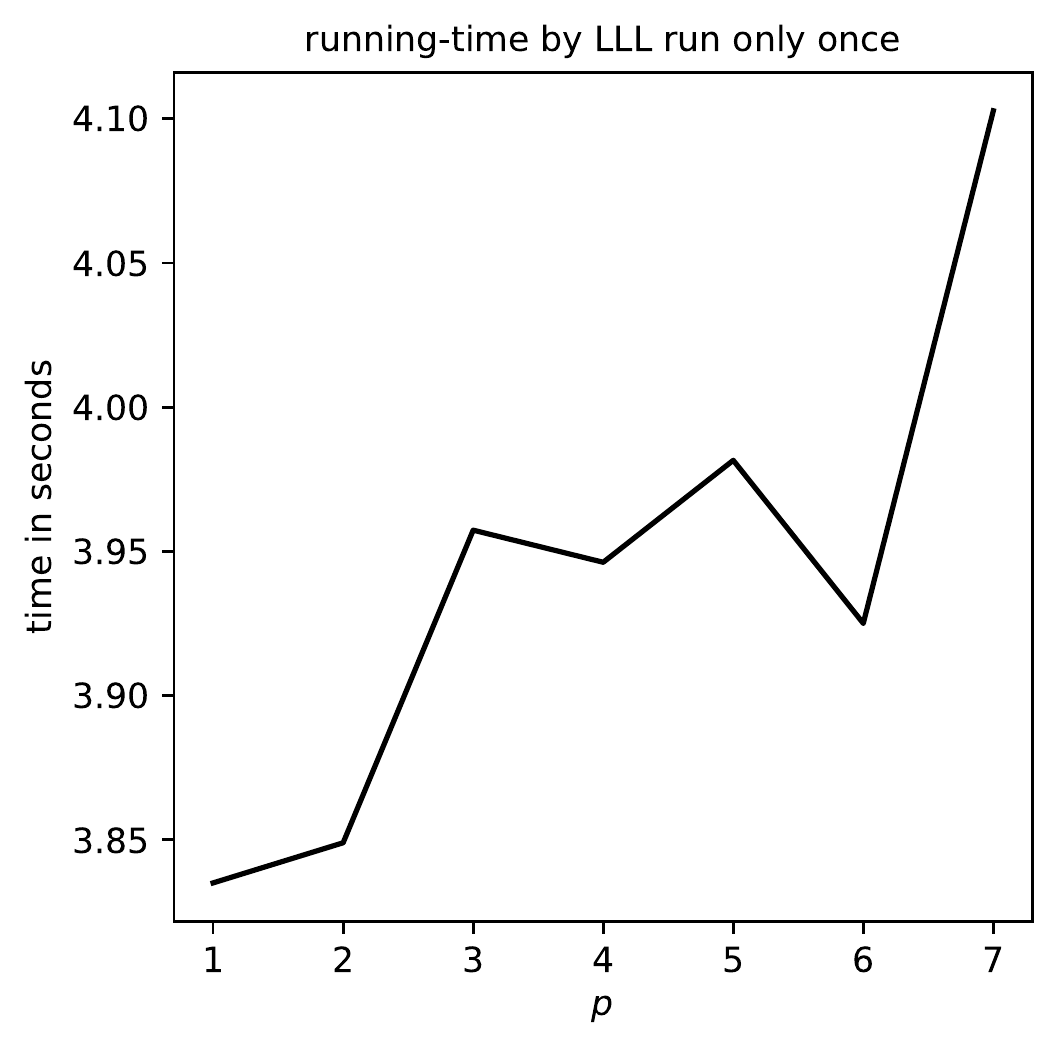}}
{\includegraphics[width=0.495\textwidth]{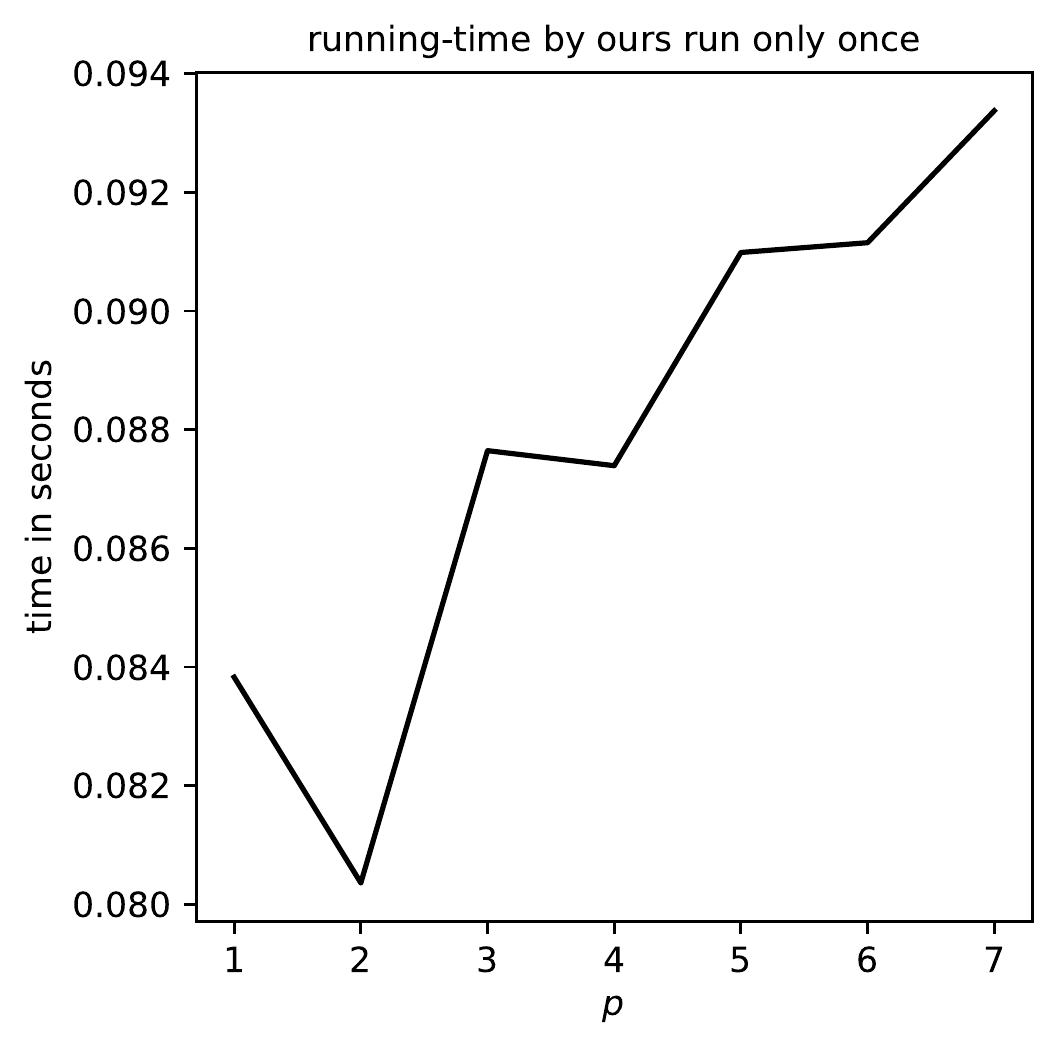}}

{\includegraphics[width=0.495\textwidth]{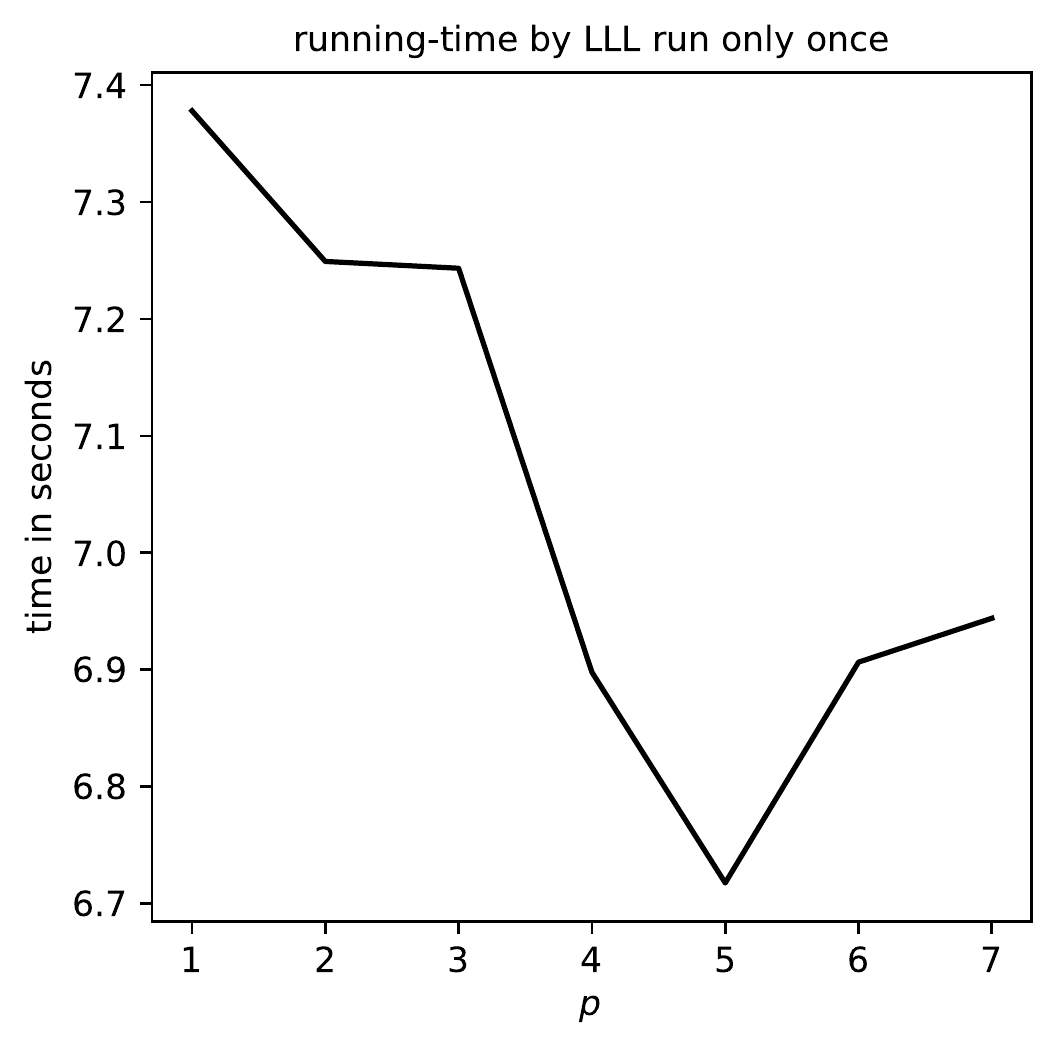}}
{\includegraphics[width=0.495\textwidth]{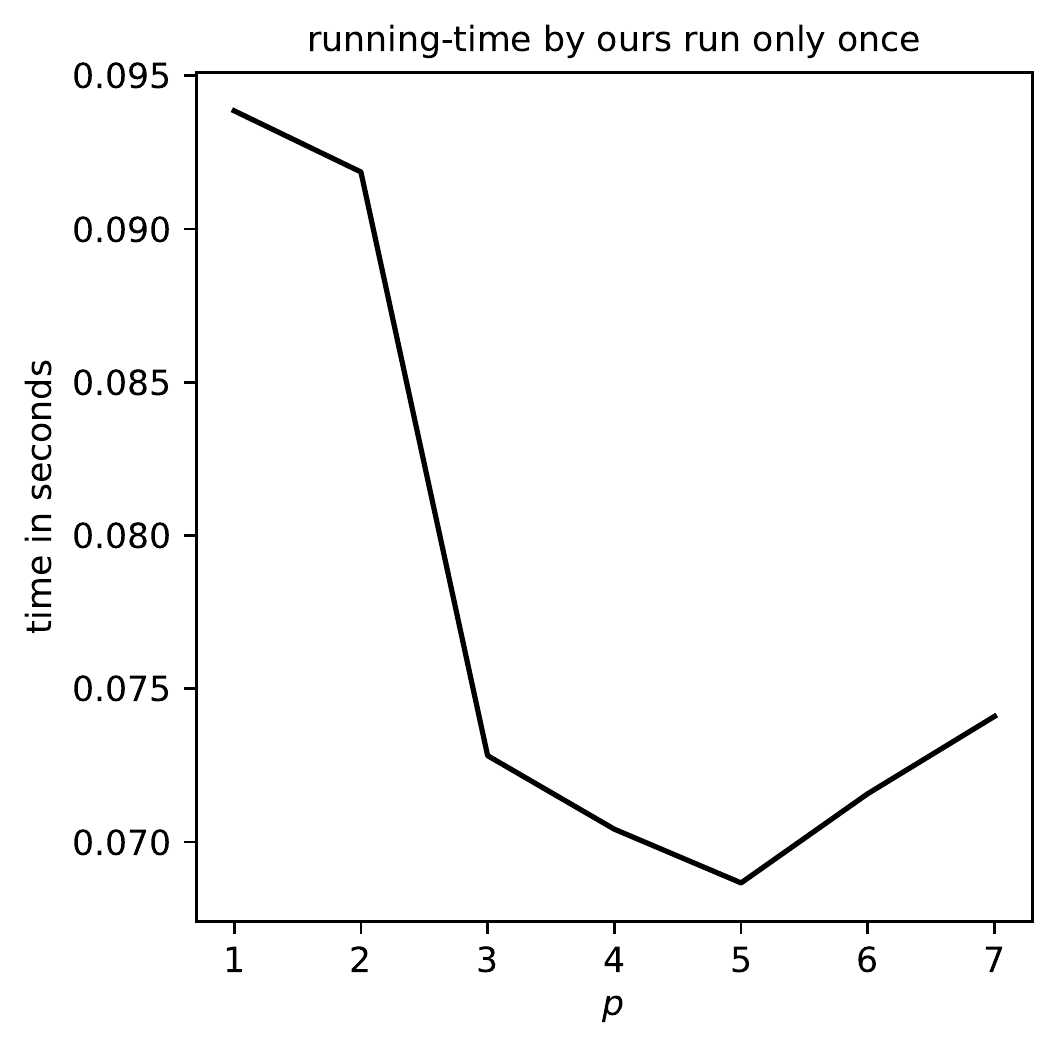}}

\end{centering}
\caption{$\delta = 1-10^{-15}$, $n = 192$;
         the upper plots are for $q = 2^{13} - 1$,
         the lower plots are for $q = 2^{31} - 1$ \dots\
         the vertical ranges of the plots on the left are very small,
         with the vertical variations displayed
         being statistically insignificant, wholly attributable to randomness
         in the computational environment.}
\label{pstime1-1e-15}
\end{figure}

\begin{figure}
\begin{centering}
{\includegraphics[width=0.495\textwidth]{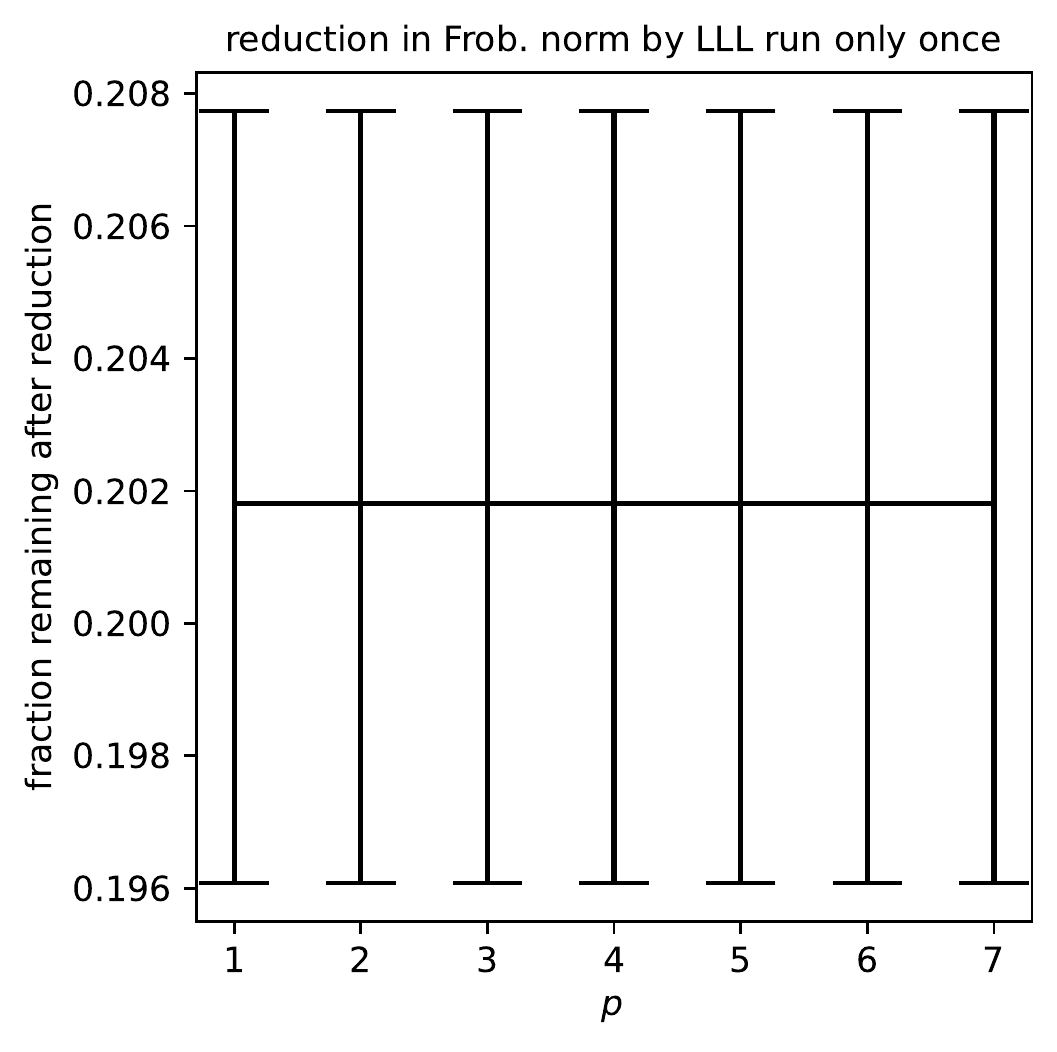}}
{\includegraphics[width=0.495\textwidth]{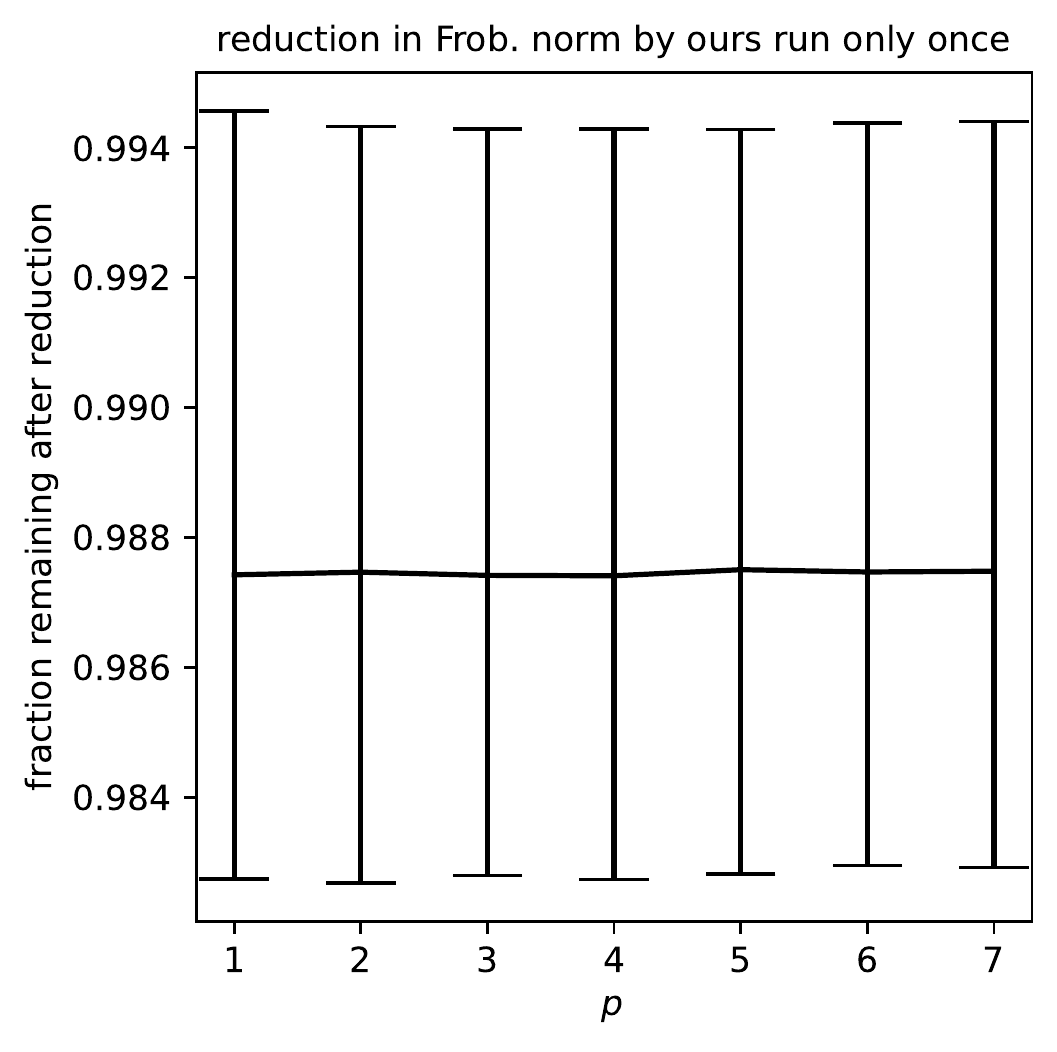}}

{\includegraphics[width=0.495\textwidth]{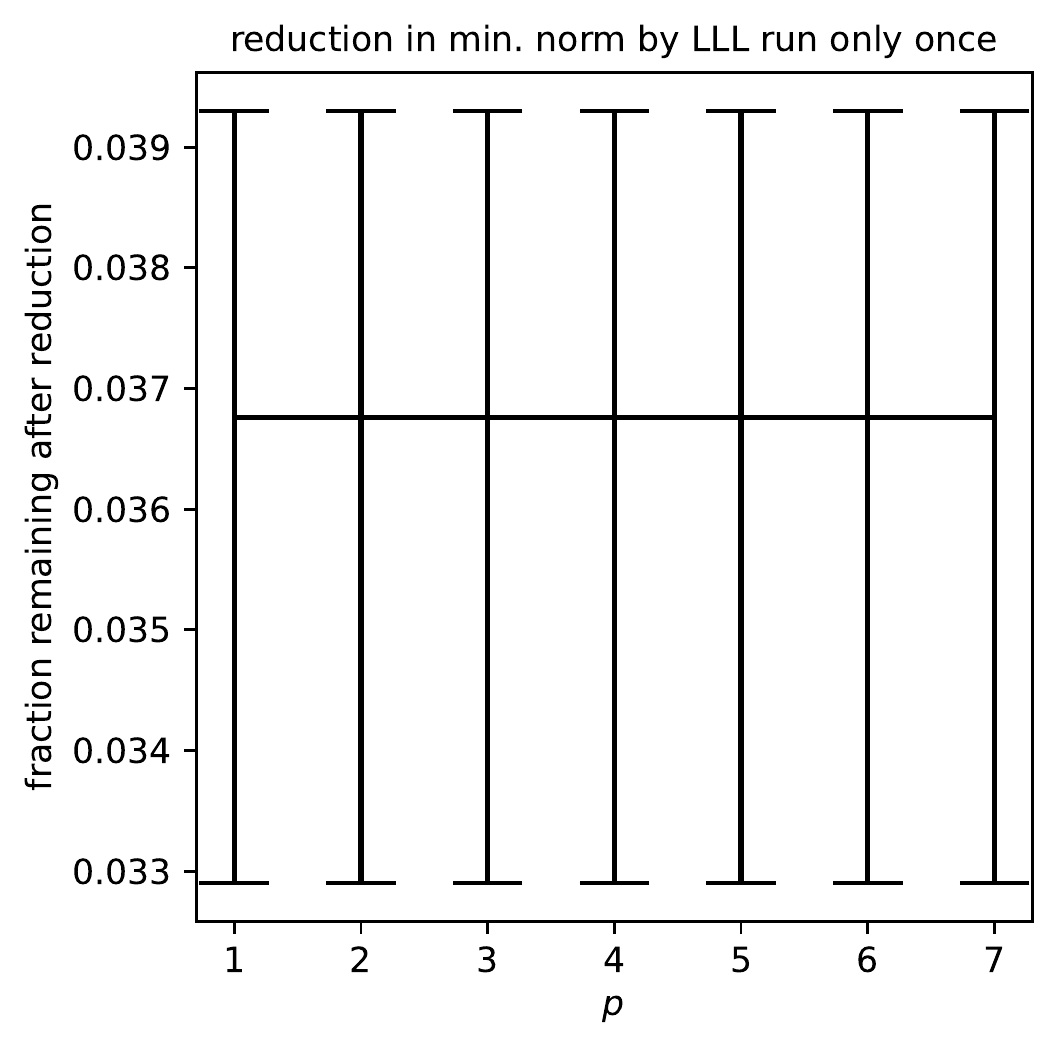}}
{\includegraphics[width=0.495\textwidth]{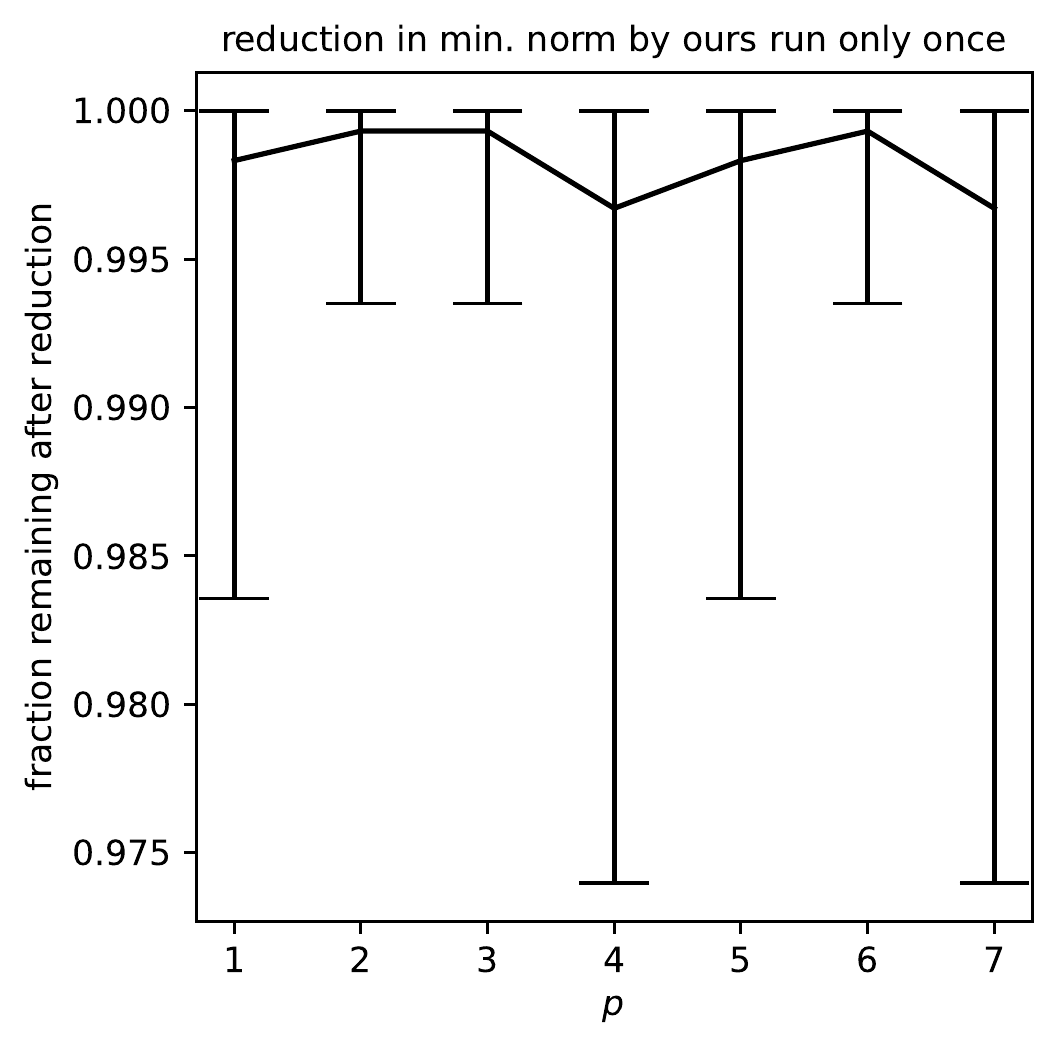}}

\end{centering}
\caption{$\delta = 1-10^{-15}$, $n = 192$, $q = 2^{13} - 1$}
\end{figure}

\begin{figure}
\begin{centering}
{\includegraphics[width=0.495\textwidth]{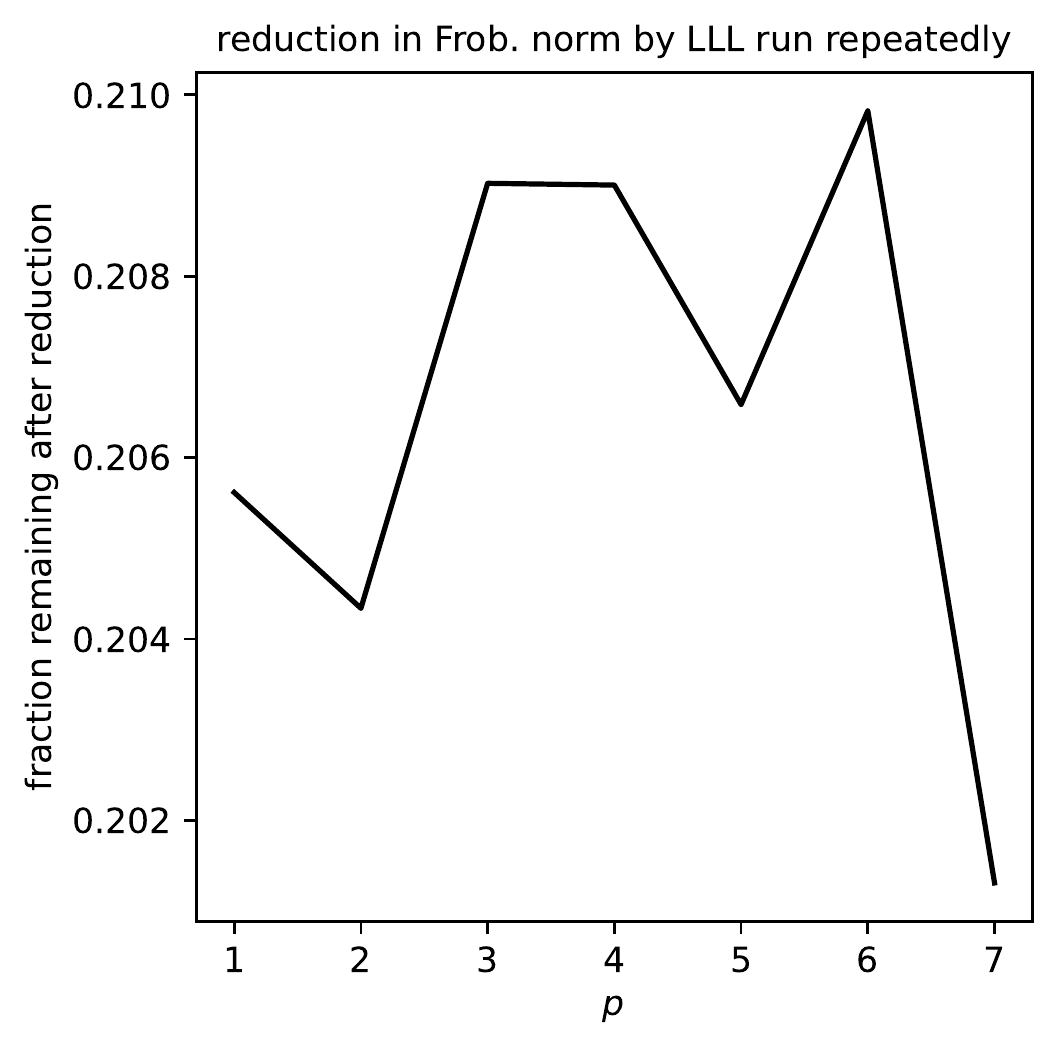}}
{\includegraphics[width=0.495\textwidth]{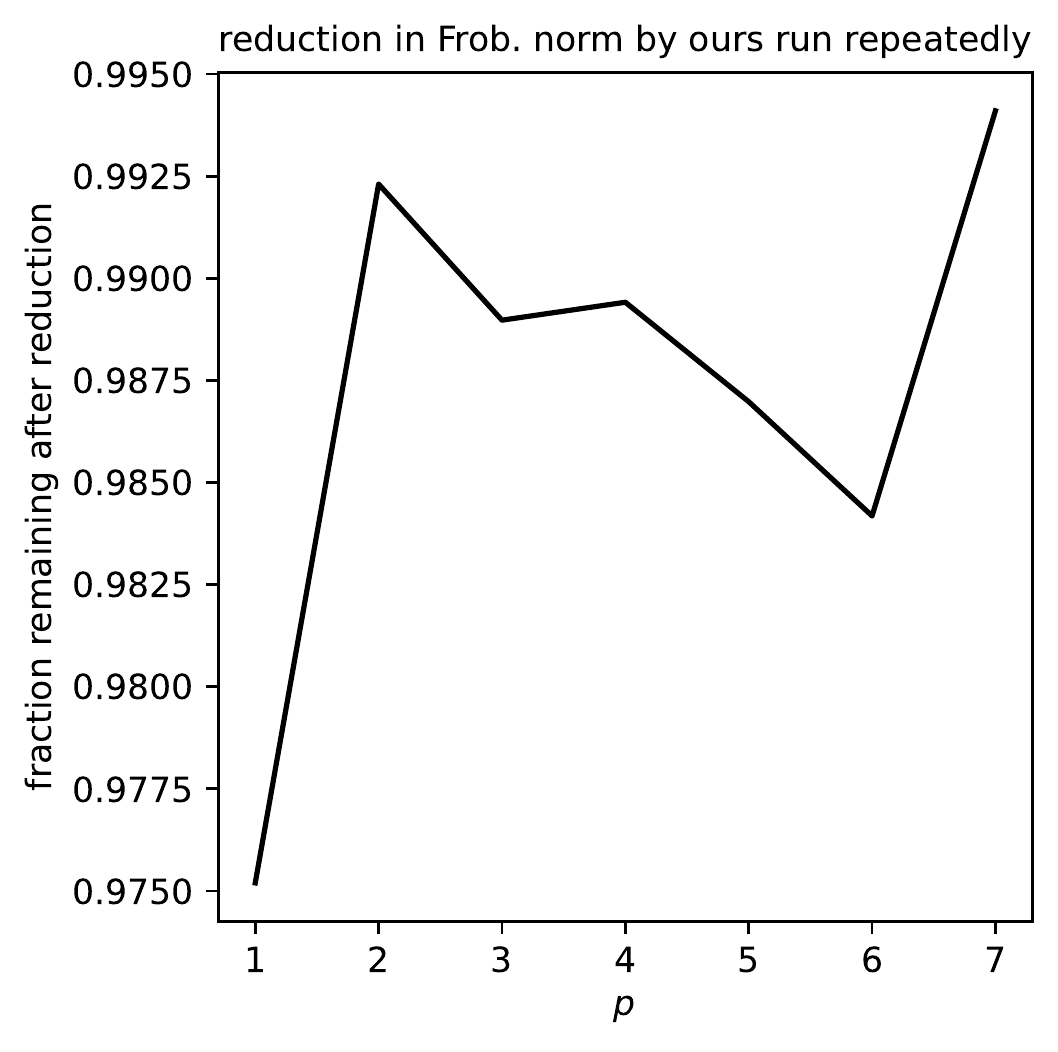}}

{\includegraphics[width=0.495\textwidth]{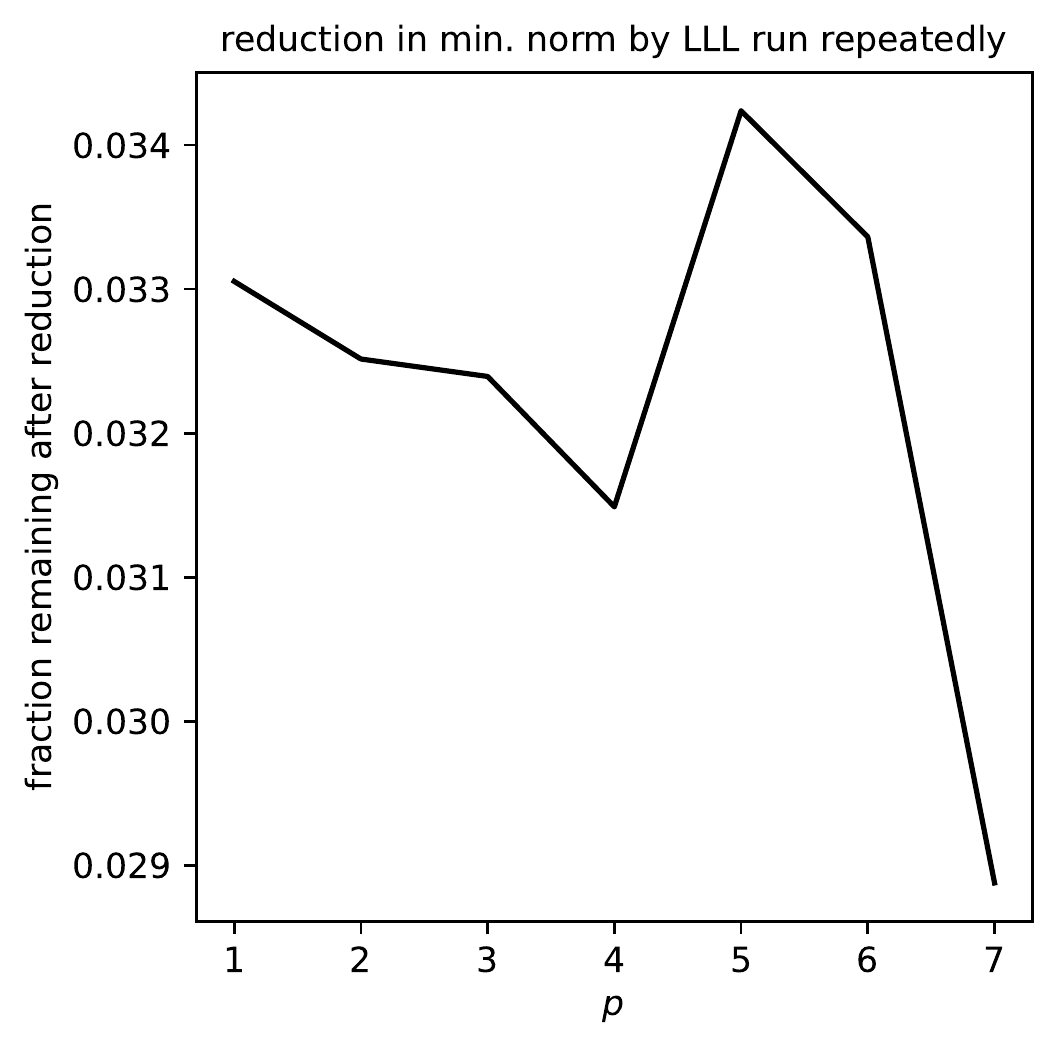}}
{\includegraphics[width=0.495\textwidth]{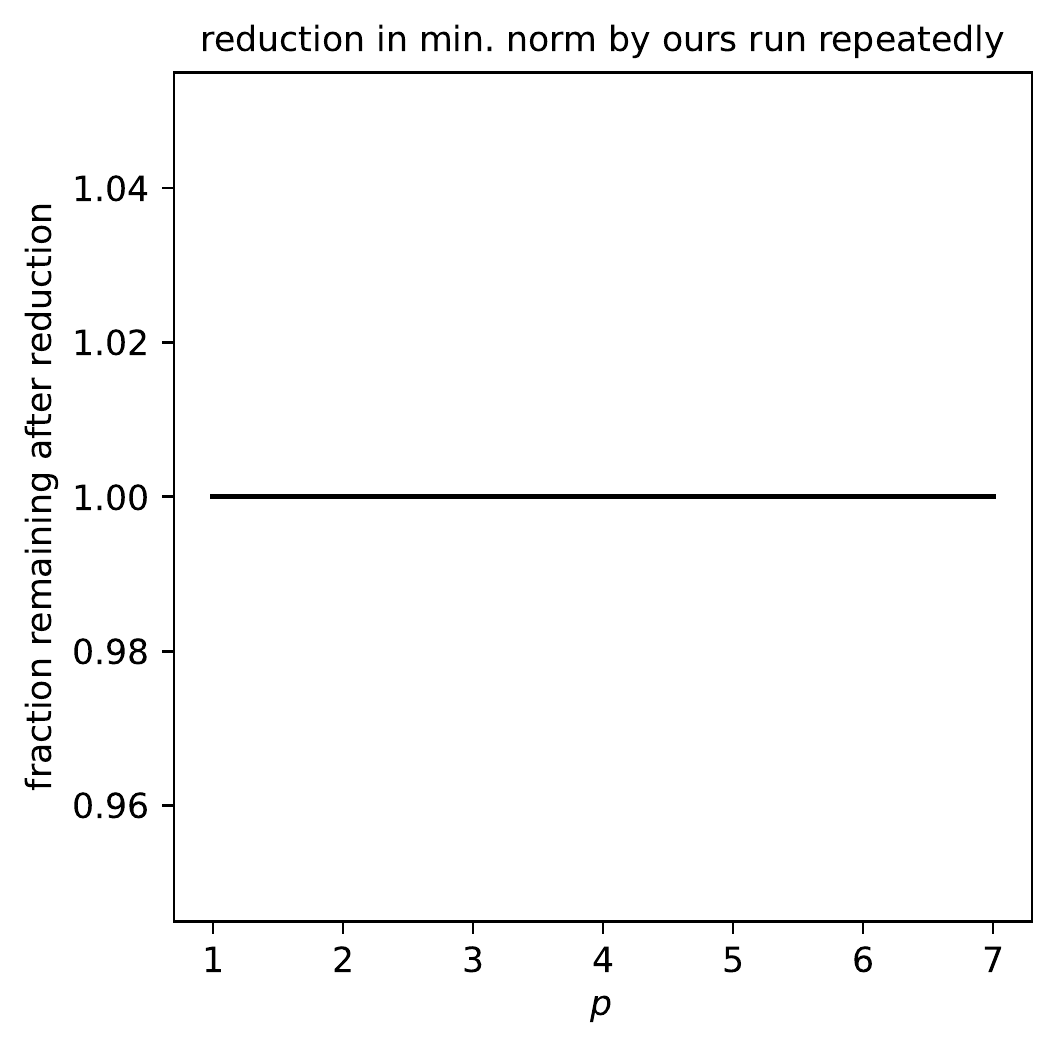}}

\end{centering}
\caption{$\delta = 1-10^{-15}$, $n = 192$, $q = 2^{13} - 1$}
\end{figure}

\begin{figure}
\begin{centering}
{\includegraphics[width=0.495\textwidth]{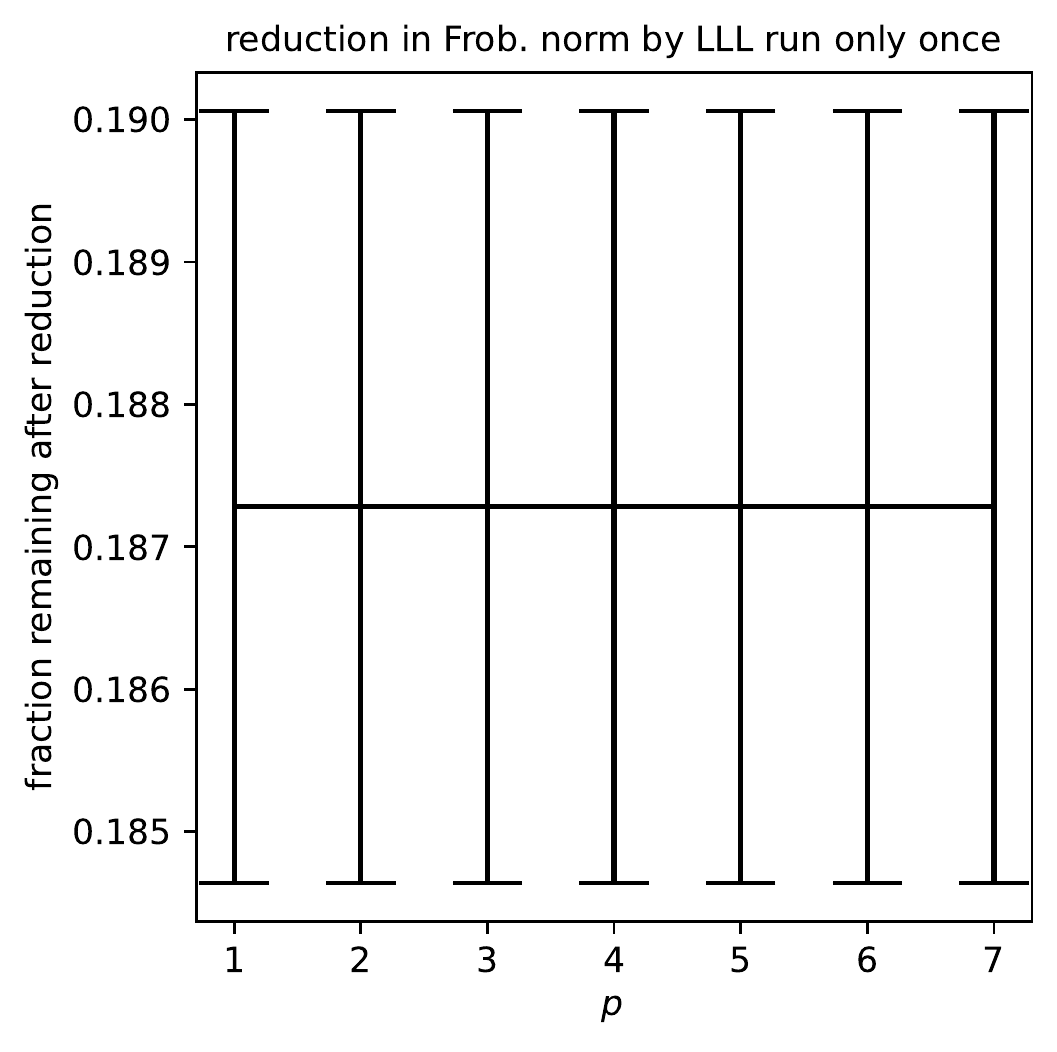}}
{\includegraphics[width=0.495\textwidth]{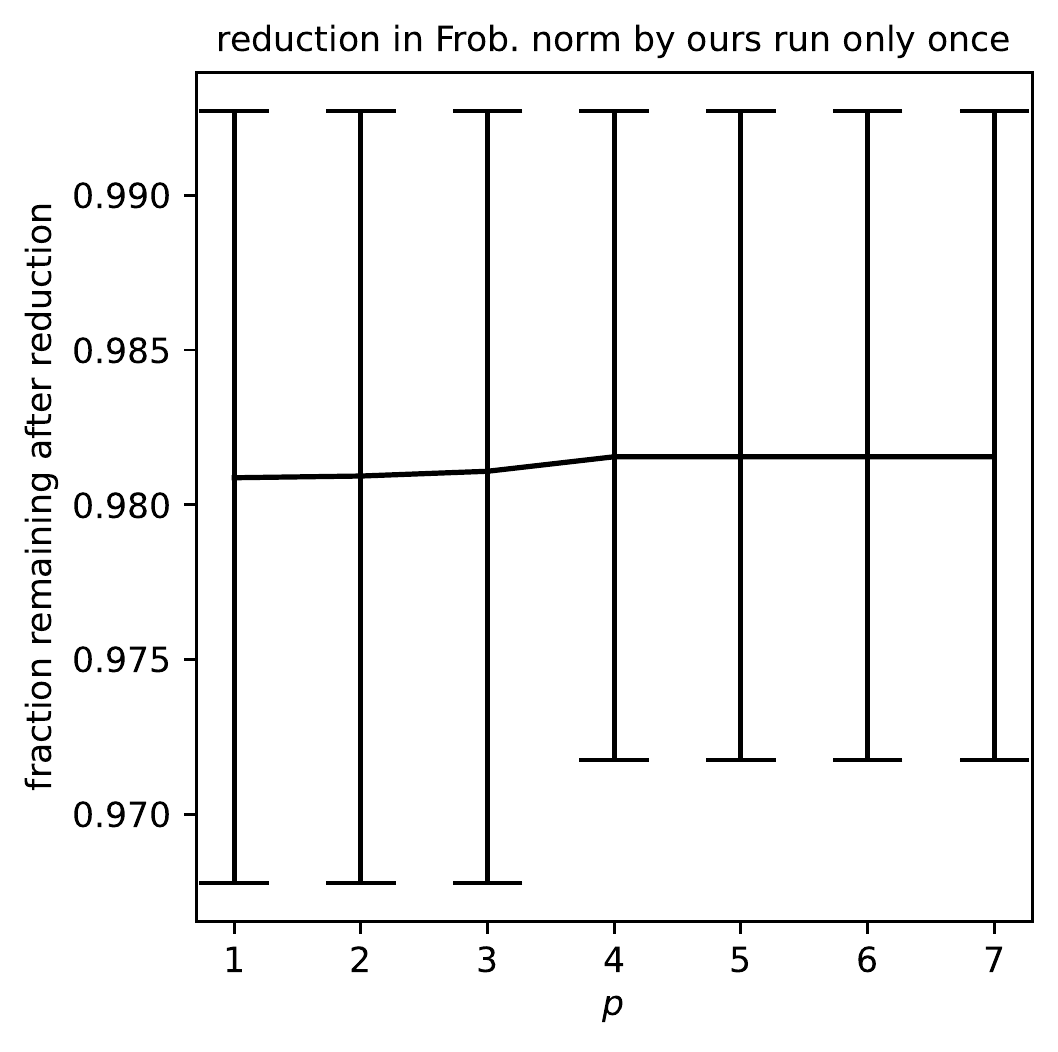}}

{\includegraphics[width=0.495\textwidth]{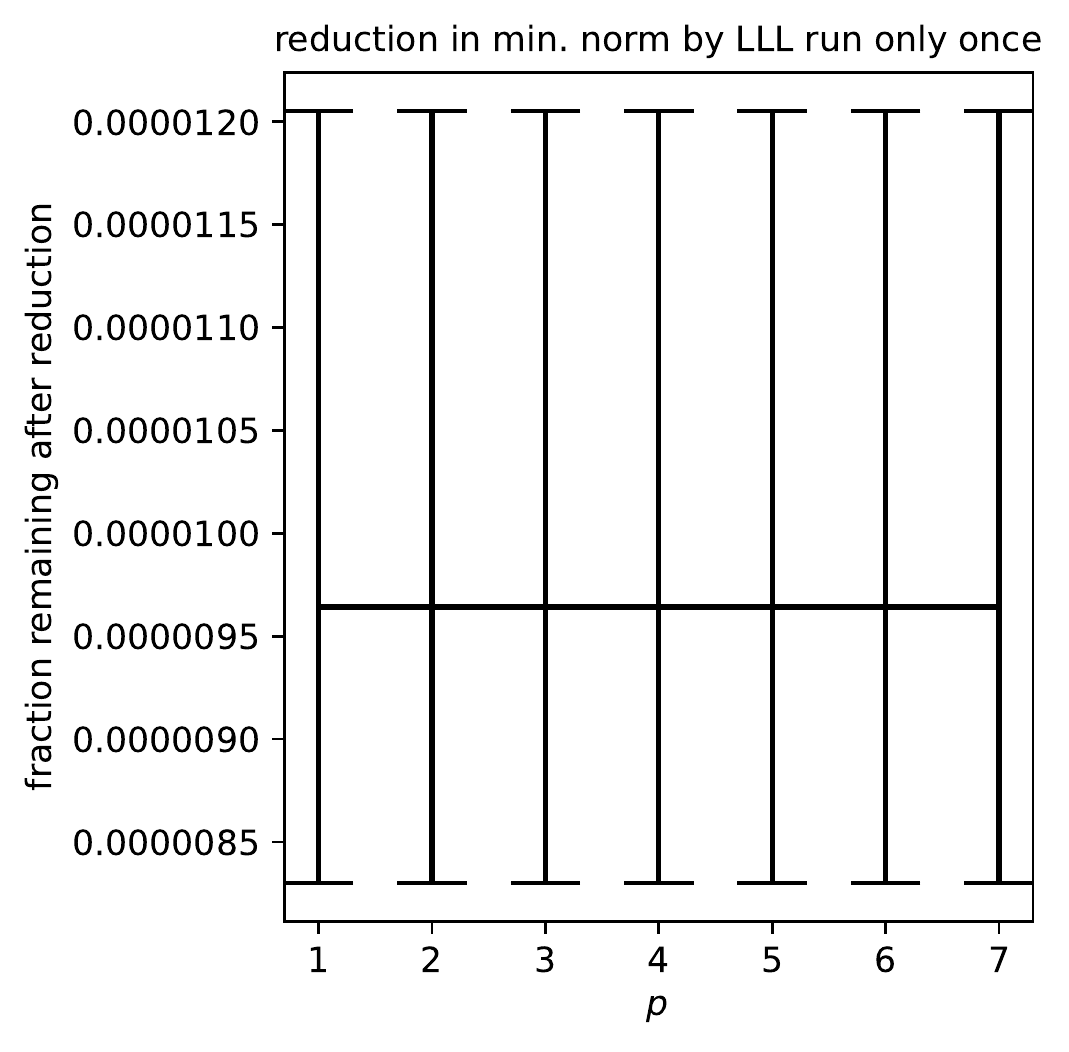}}
{\includegraphics[width=0.495\textwidth]{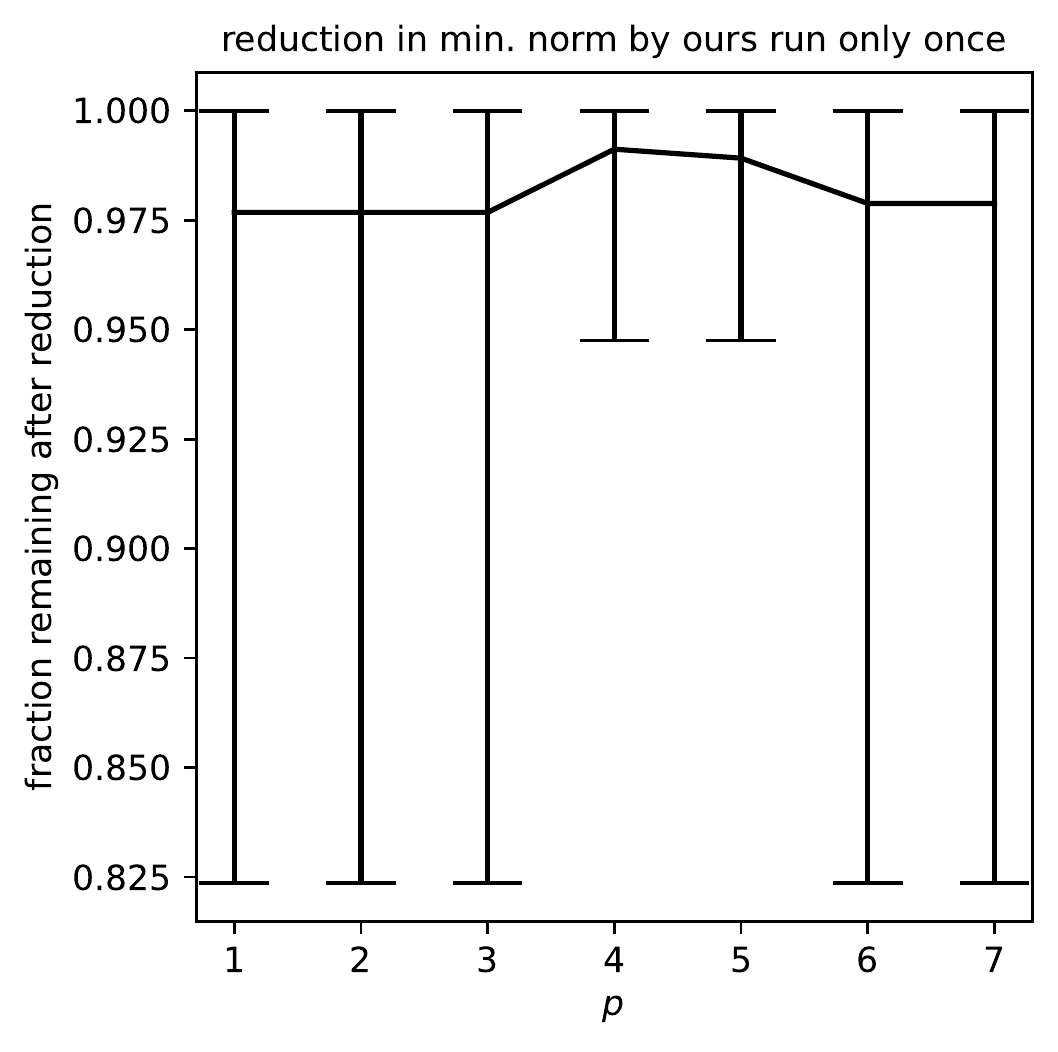}}

\end{centering}
\caption{$\delta = 1-10^{-15}$, $n = 192$, $q = 2^{31} - 1$}
\end{figure}

\begin{figure}
\begin{centering}
{\includegraphics[width=0.495\textwidth]{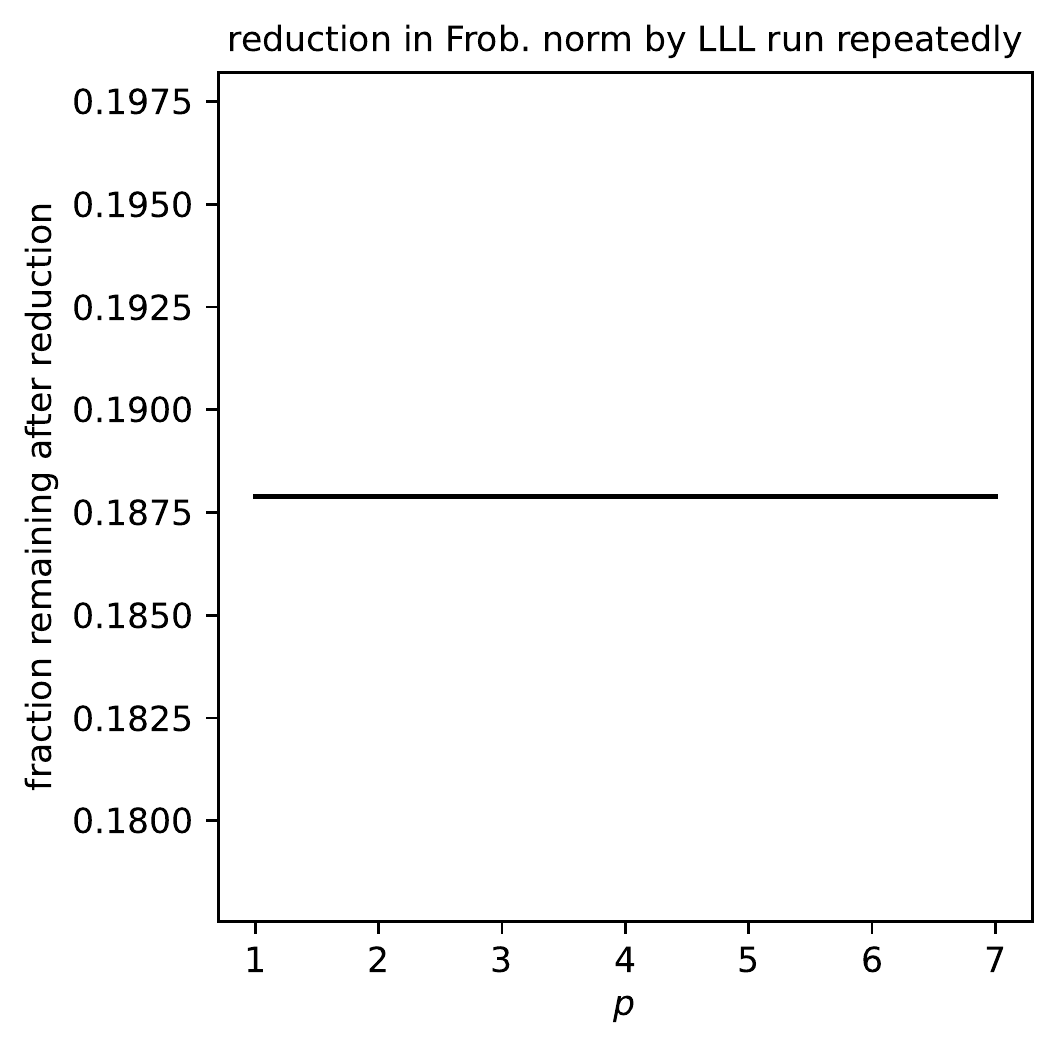}}
{\includegraphics[width=0.495\textwidth]{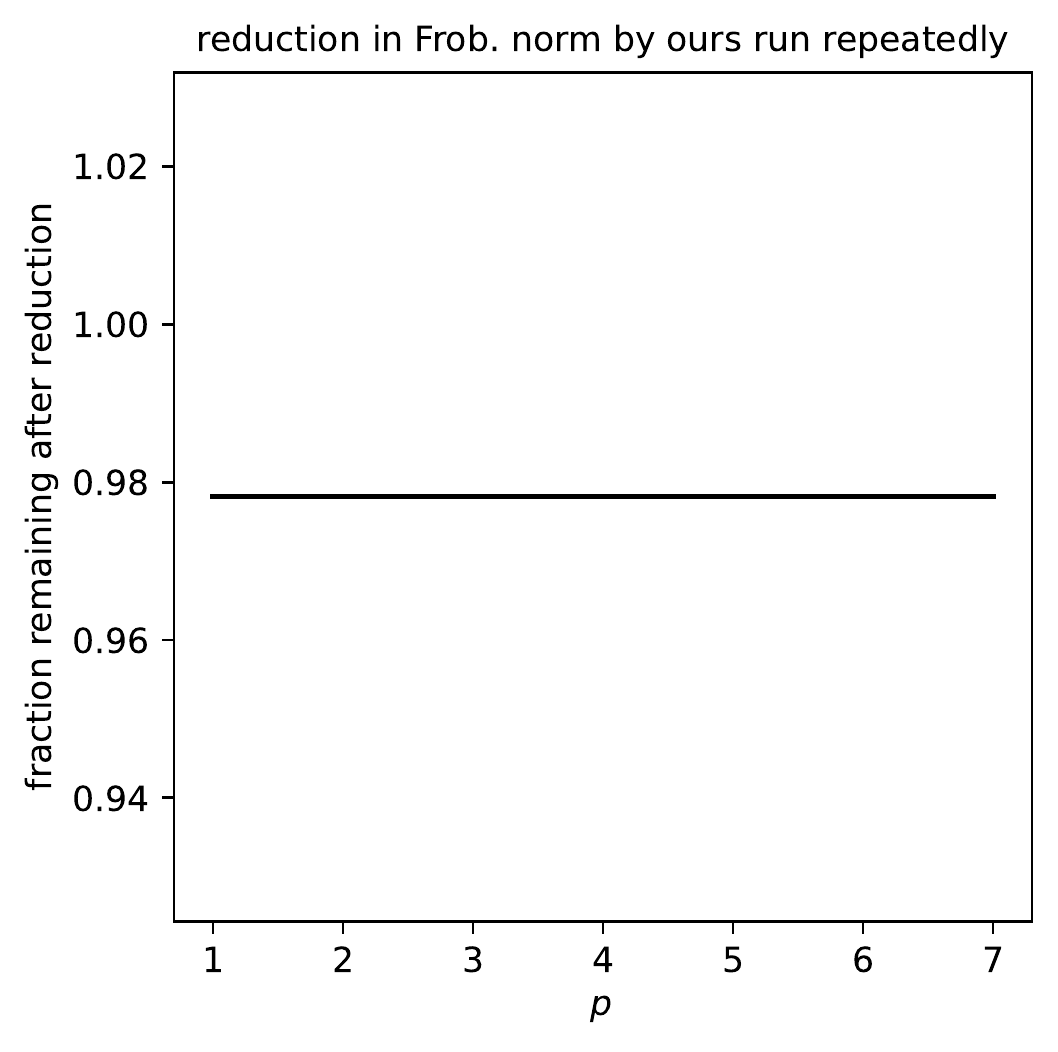}}

{\includegraphics[width=0.495\textwidth]{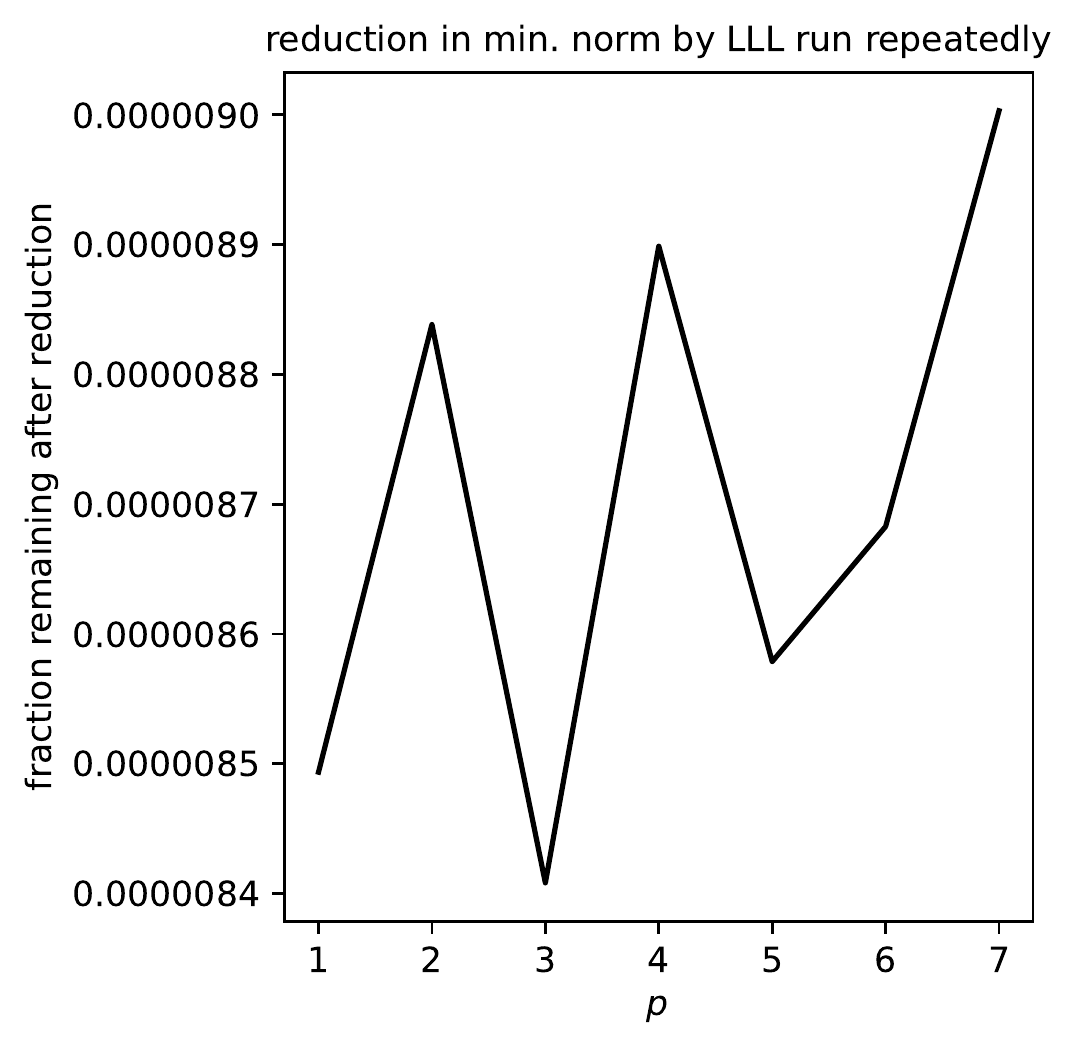}}
{\includegraphics[width=0.495\textwidth]{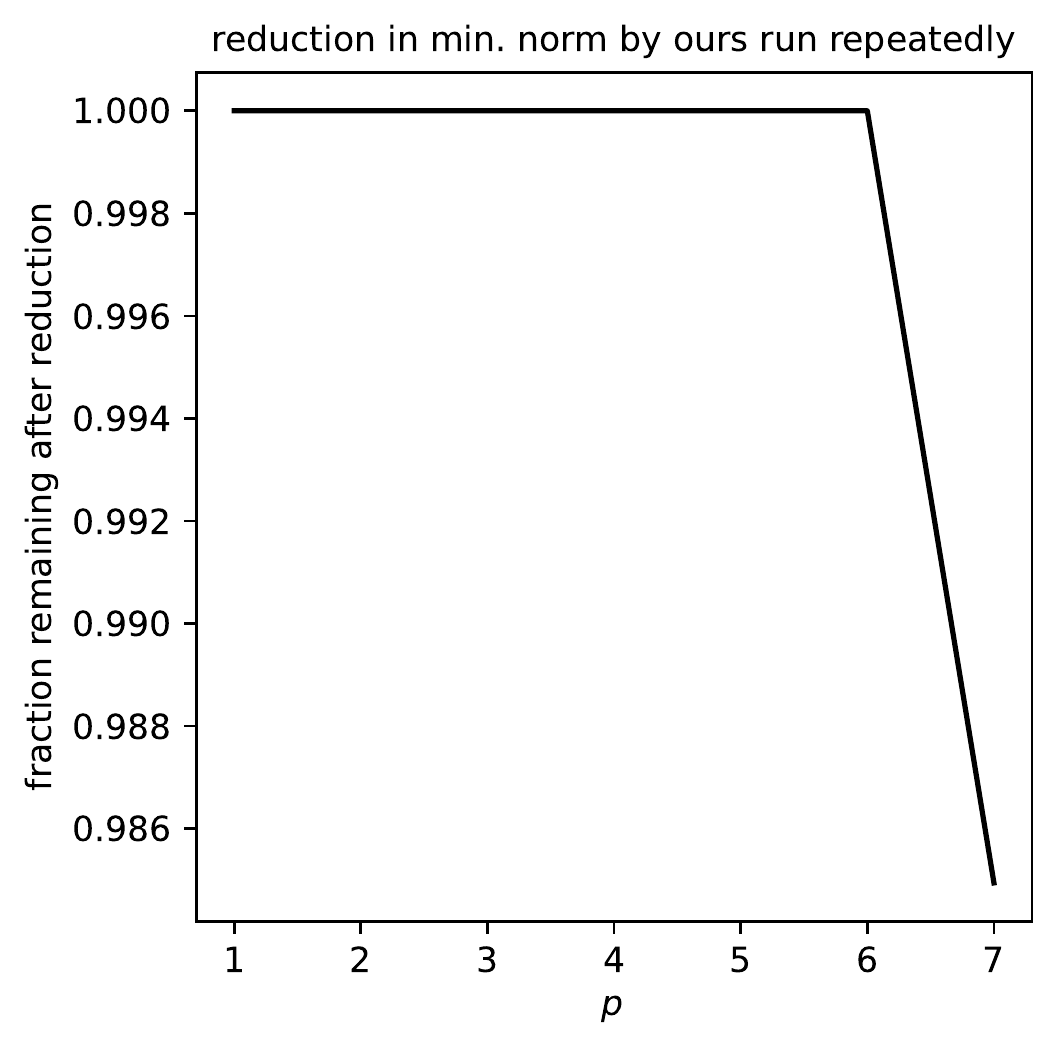}}

\end{centering}
\caption{$\delta = 1-10^{-15}$, $n = 192$, $q = 2^{31} - 1$}
\label{pserr1-1e-15-31}
\end{figure}

\begin{figure}
\begin{centering}
{\includegraphics[width=0.495\textwidth]{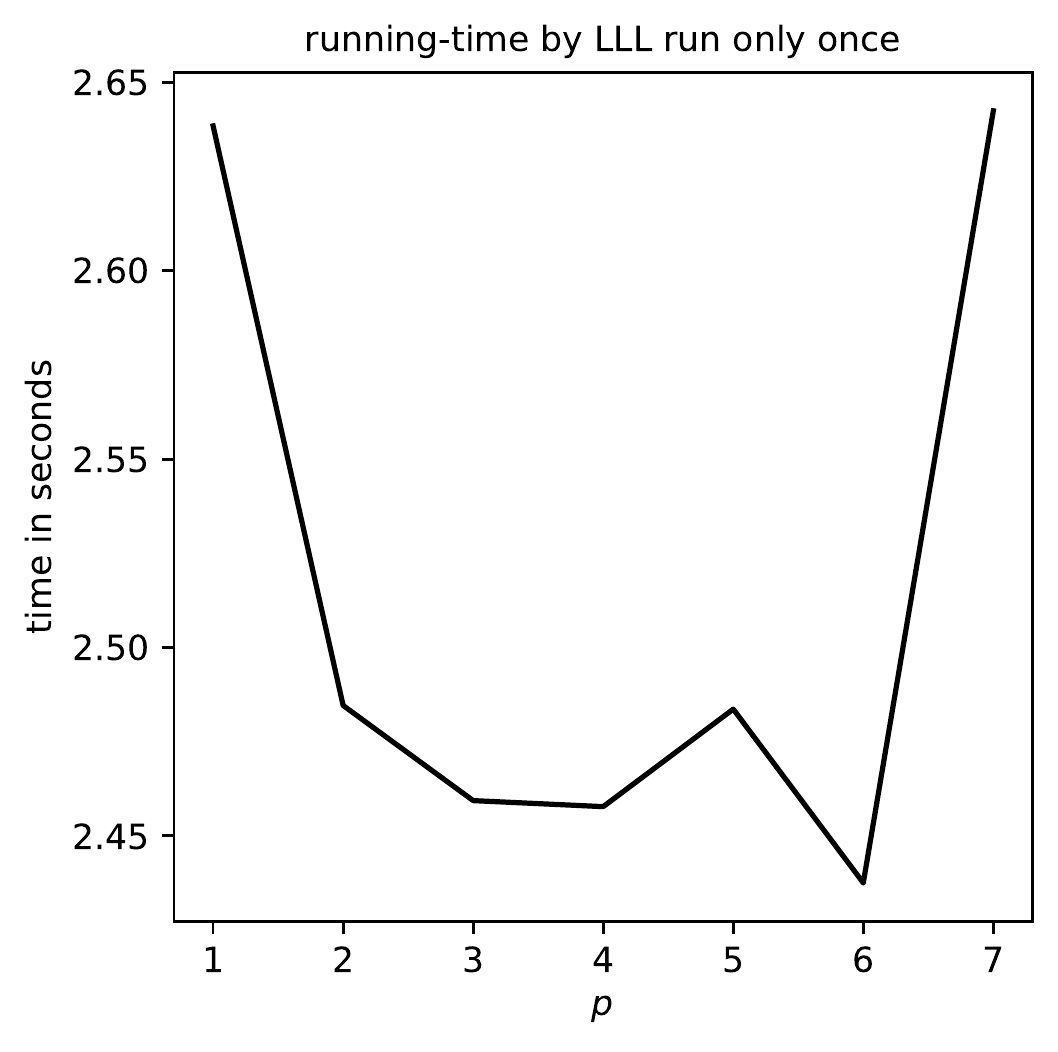}}
{\includegraphics[width=0.495\textwidth]{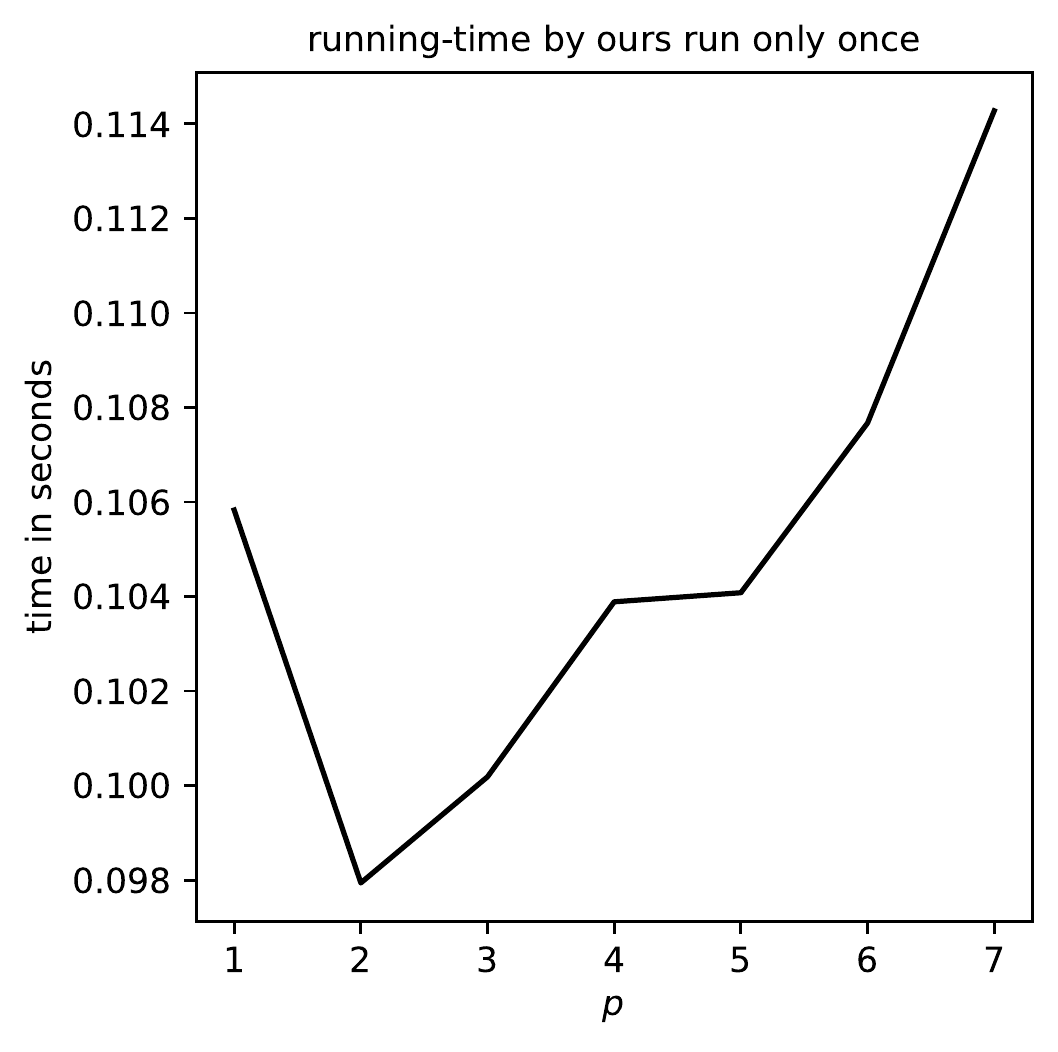}}

{\includegraphics[width=0.495\textwidth]{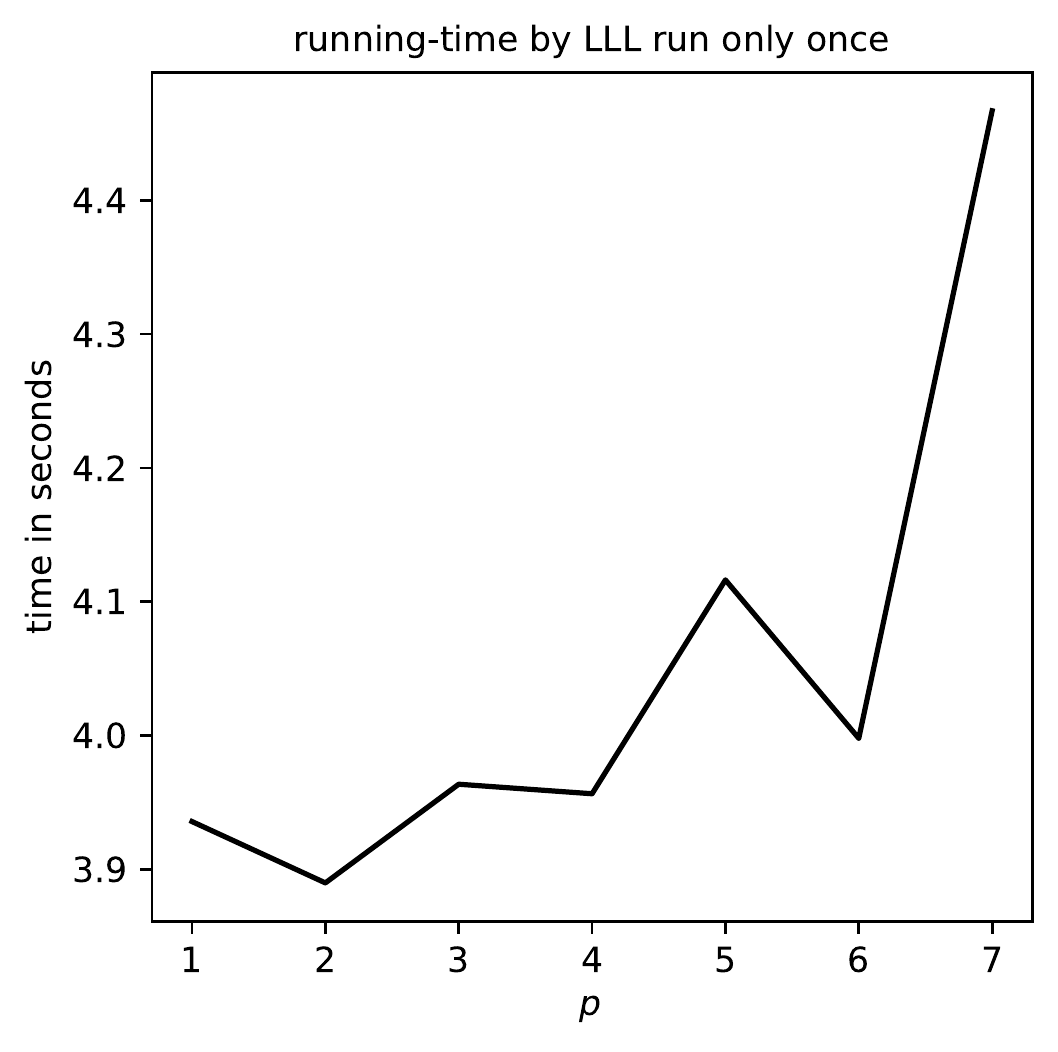}}
{\includegraphics[width=0.495\textwidth]{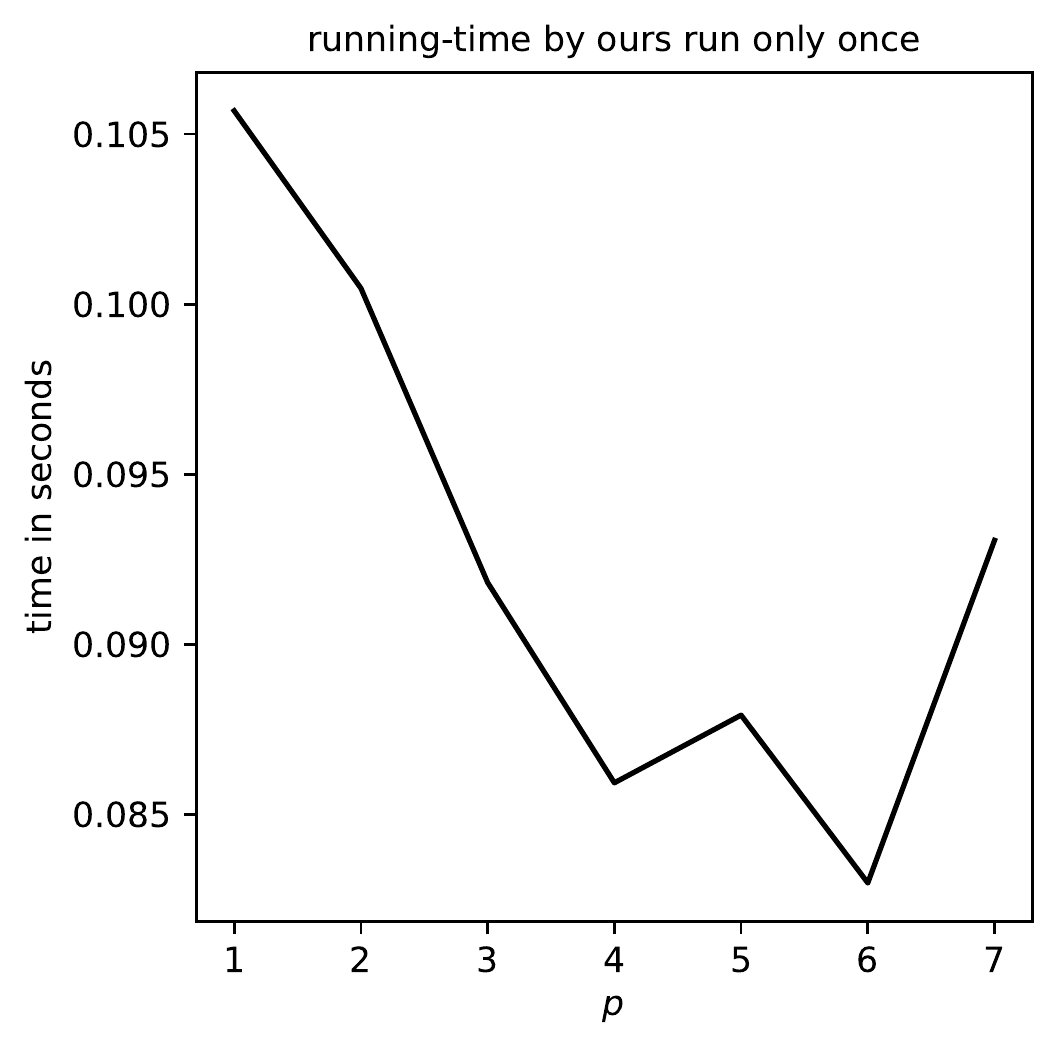}}

\end{centering}
\caption{$\delta = 1-10^{-1}$, $n = 192$;
         the upper plots are for $q = 2^{13} - 1$,
         the lower plots are for $q = 2^{31} - 1$ \dots\
         the vertical ranges of the plots on the left are very small,
         with the vertical variations displayed
         being statistically insignificant, wholly attributable to randomness
         in the computational environment.}
\label{pstime1-1e-1}
\end{figure}

\begin{figure}
\begin{centering}
{\includegraphics[width=0.495\textwidth]{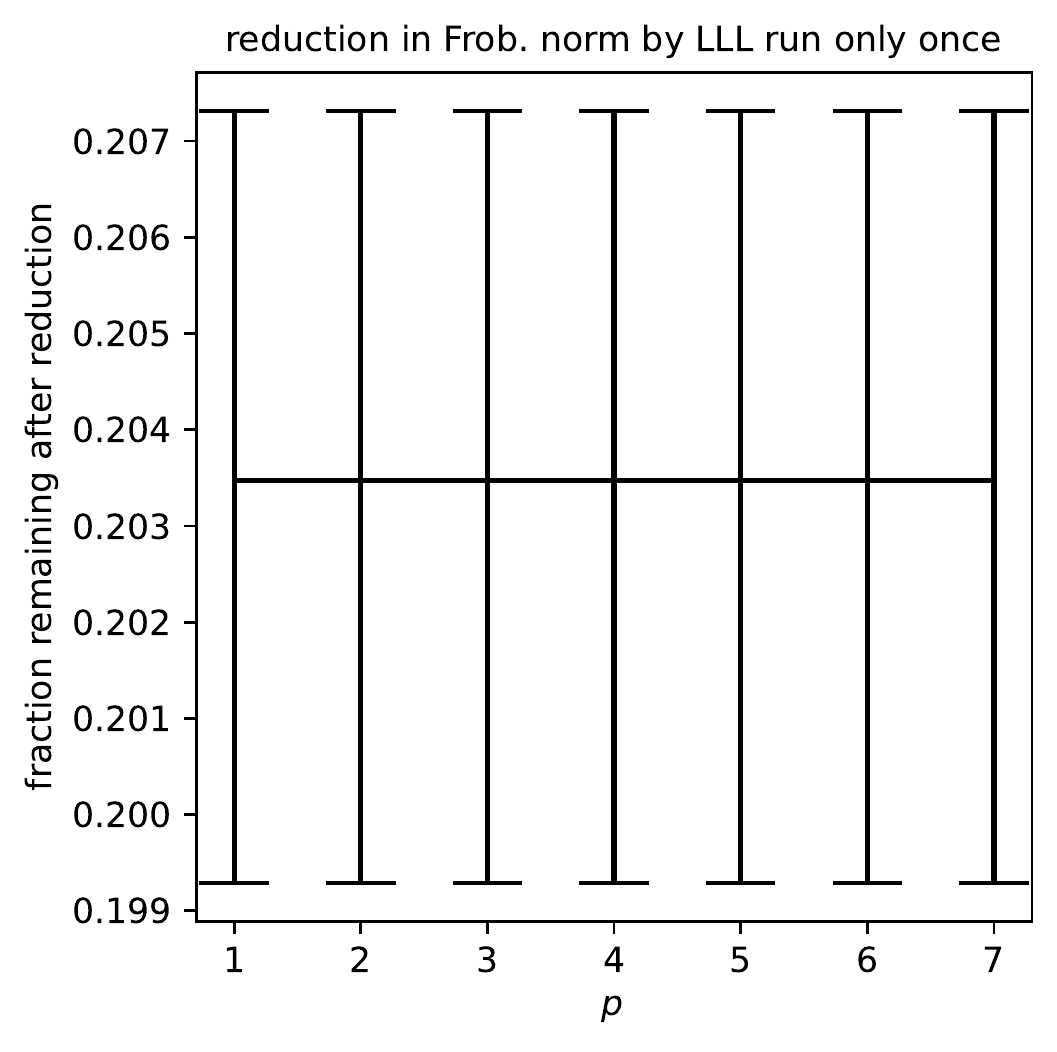}}
{\includegraphics[width=0.495\textwidth]{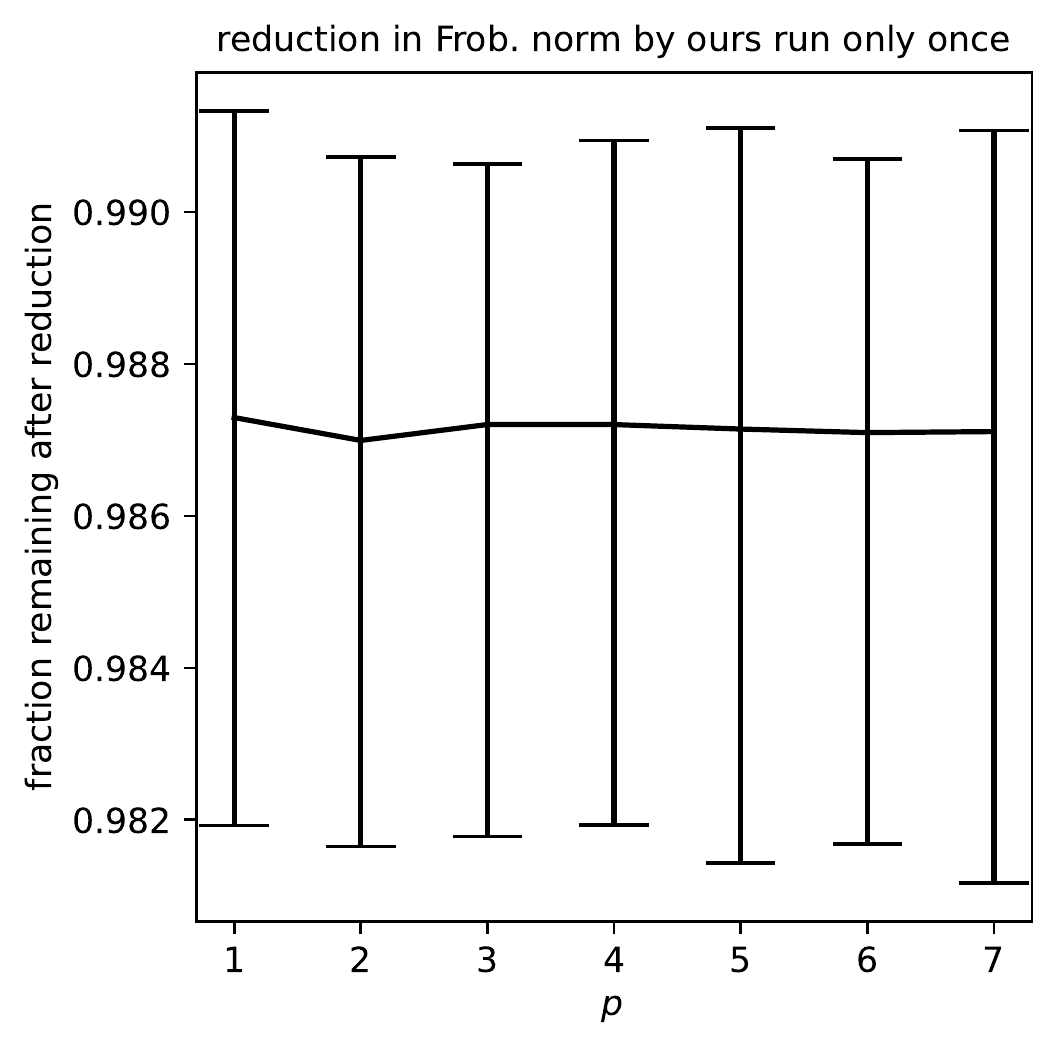}}

{\includegraphics[width=0.495\textwidth]{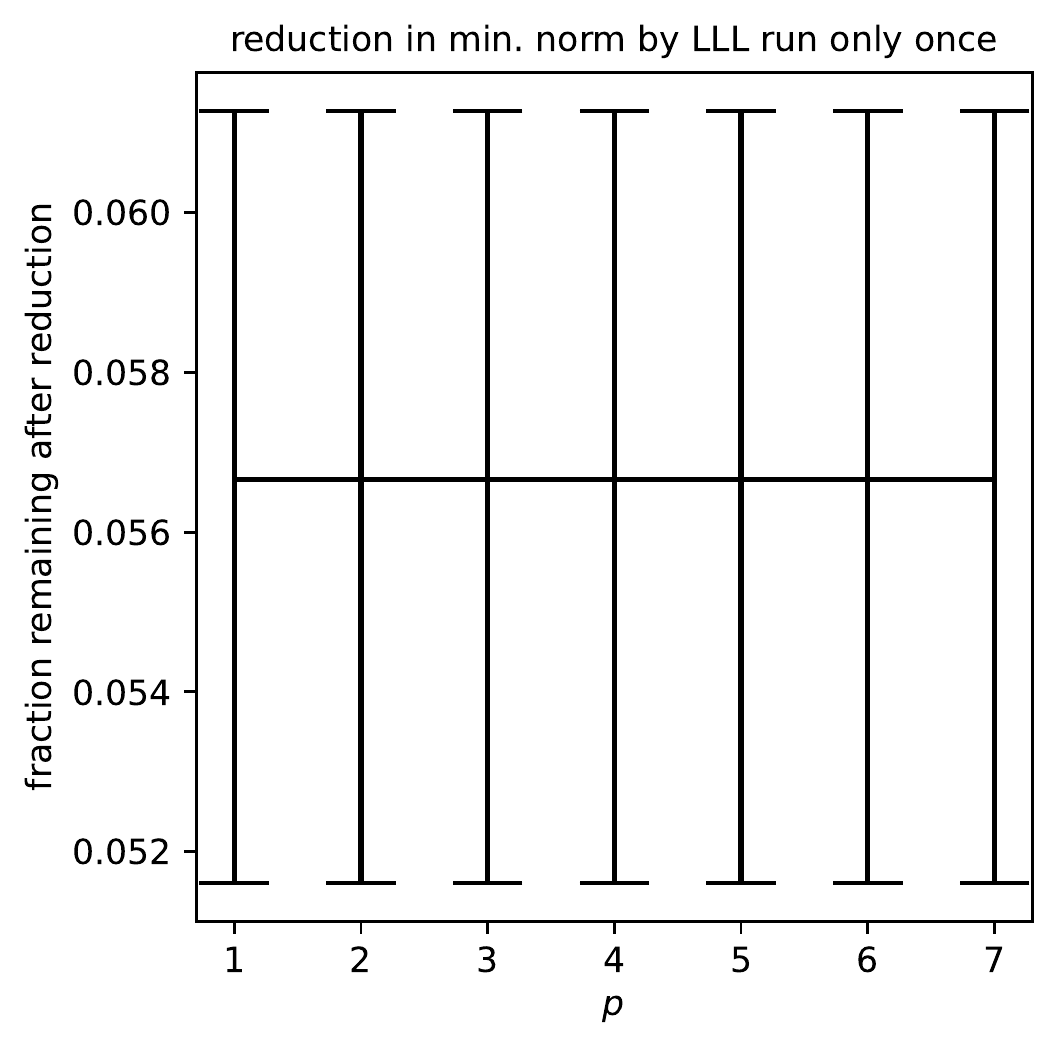}}
{\includegraphics[width=0.495\textwidth]{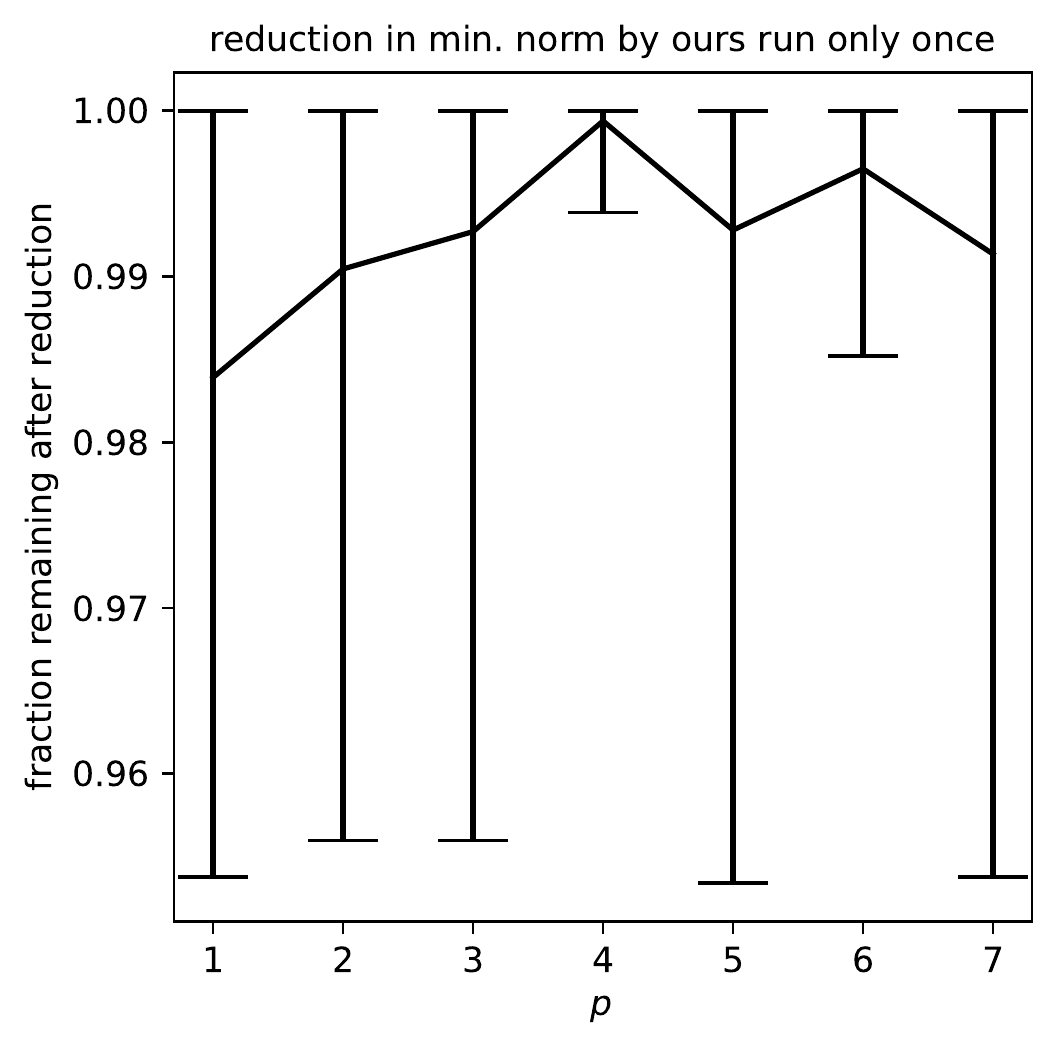}}

\end{centering}
\caption{$\delta = 1-10^{-1}$, $n = 192$, $q = 2^{13} - 1$}
\end{figure}

\begin{figure}
\begin{centering}
{\includegraphics[width=0.495\textwidth]{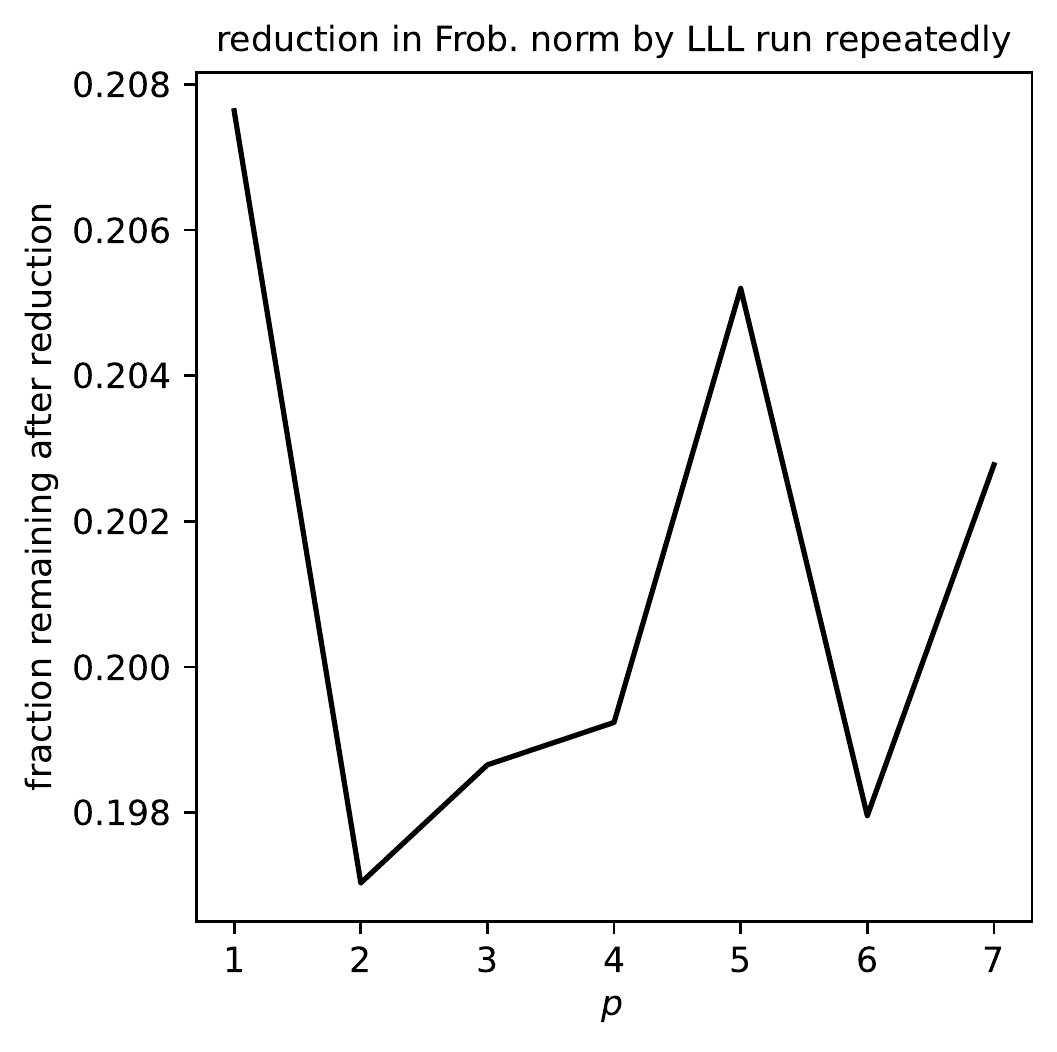}}
{\includegraphics[width=0.495\textwidth]{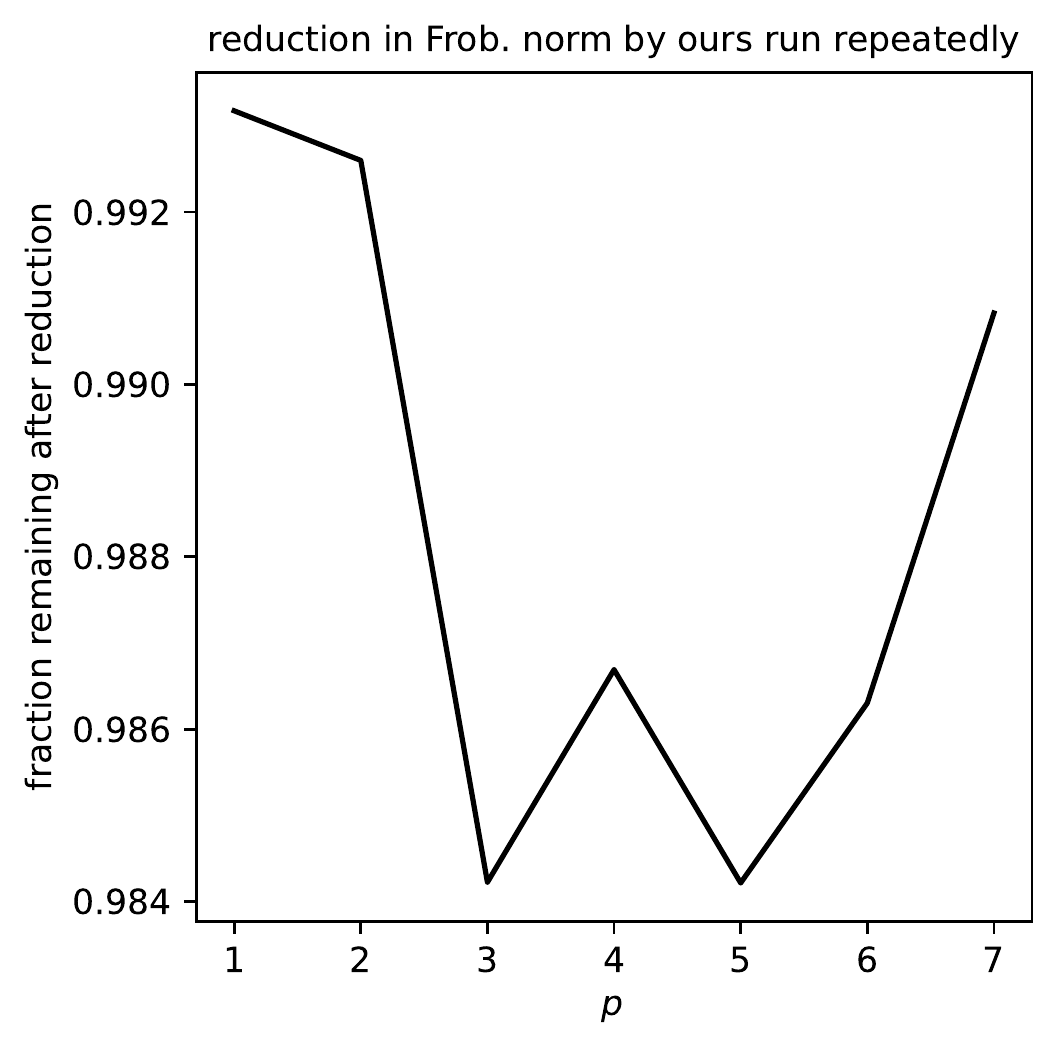}}

{\includegraphics[width=0.495\textwidth]{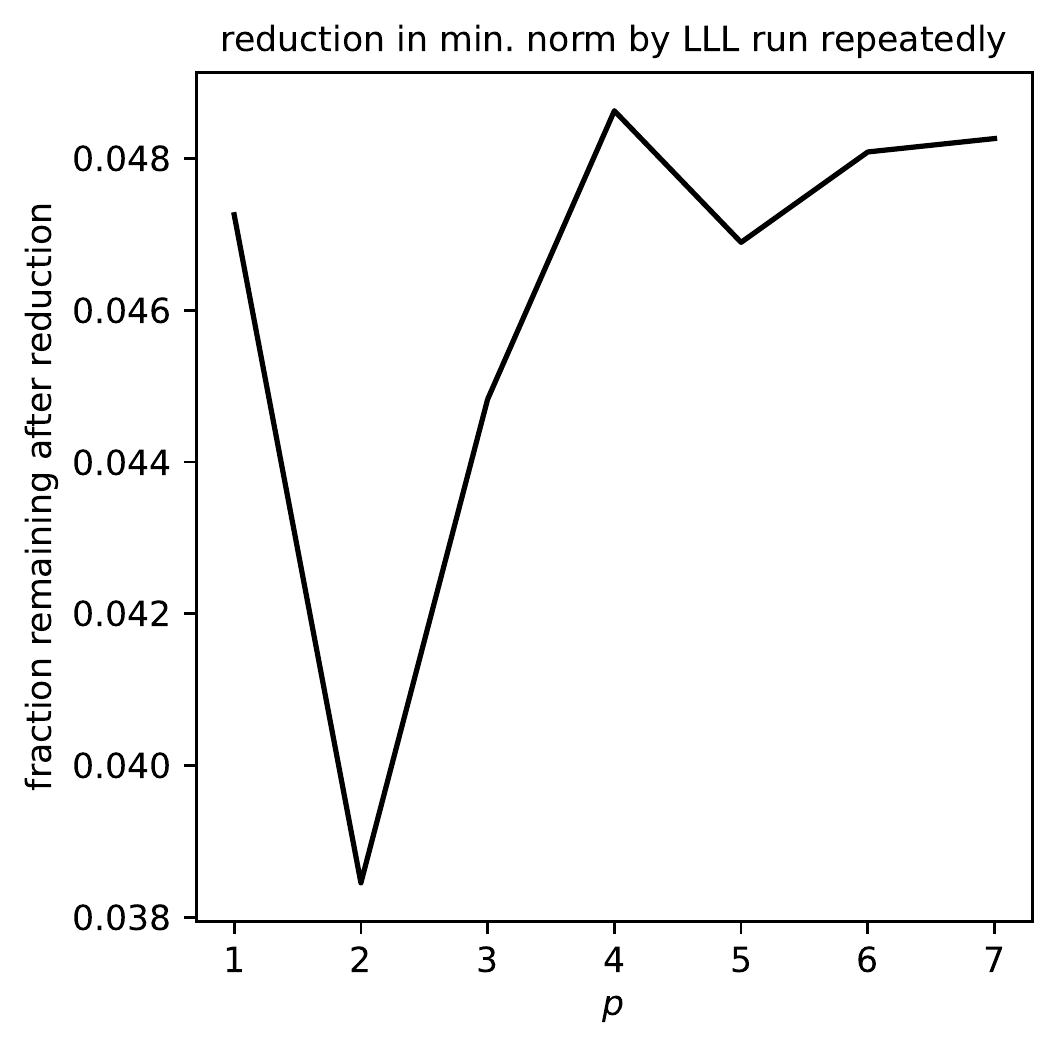}}
{\includegraphics[width=0.495\textwidth]{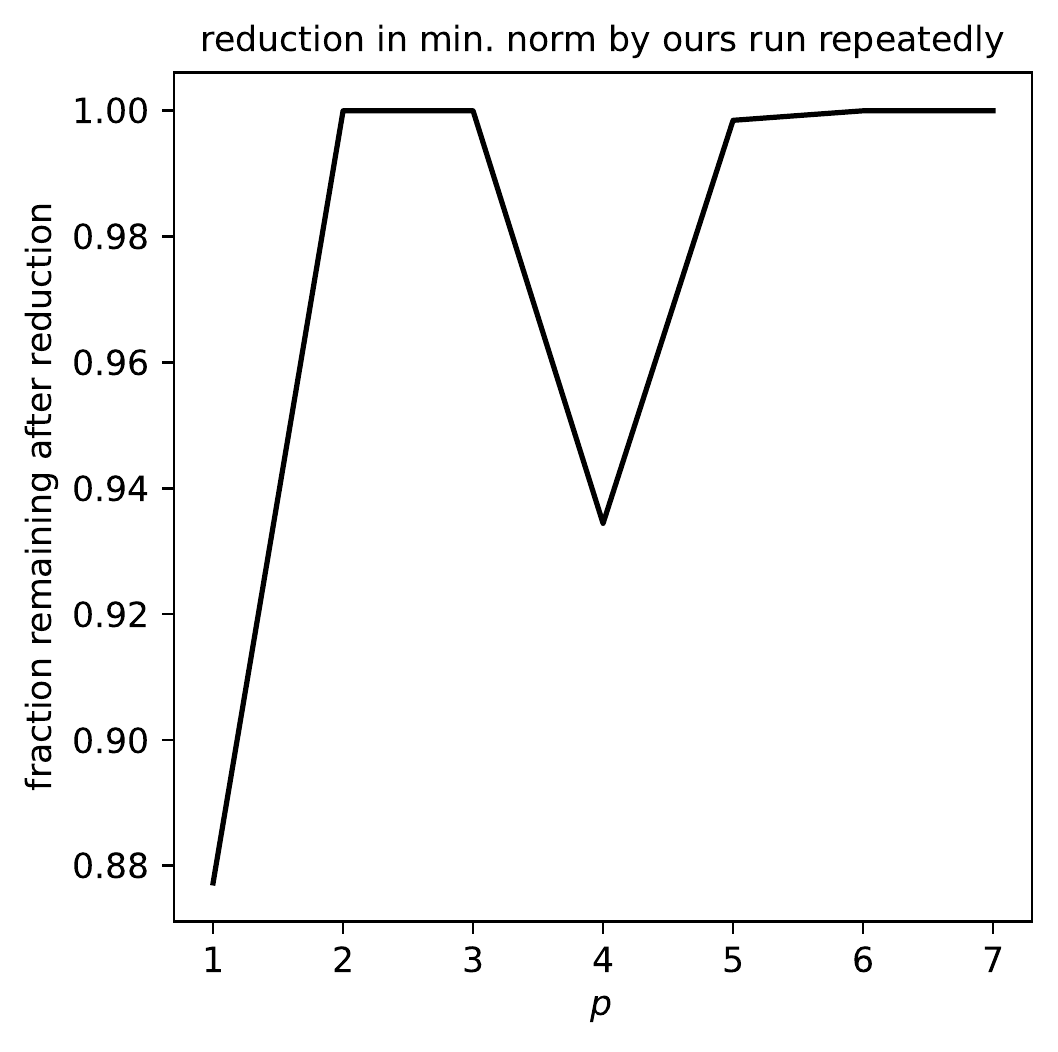}}

\end{centering}
\caption{$\delta = 1-10^{-1}$, $n = 192$, $q = 2^{13} - 1$}
\end{figure}

\begin{figure}
\begin{centering}
{\includegraphics[width=0.495\textwidth]{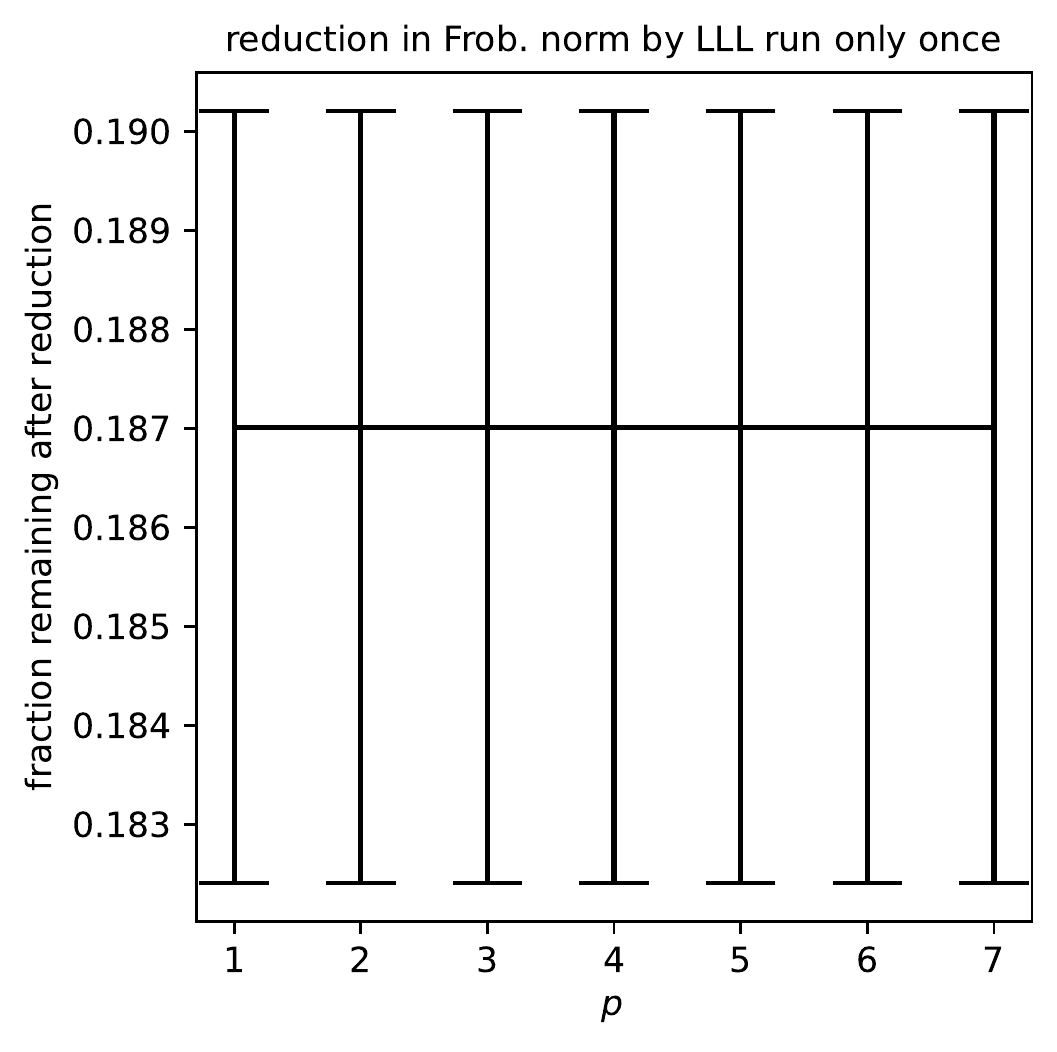}}
{\includegraphics[width=0.495\textwidth]{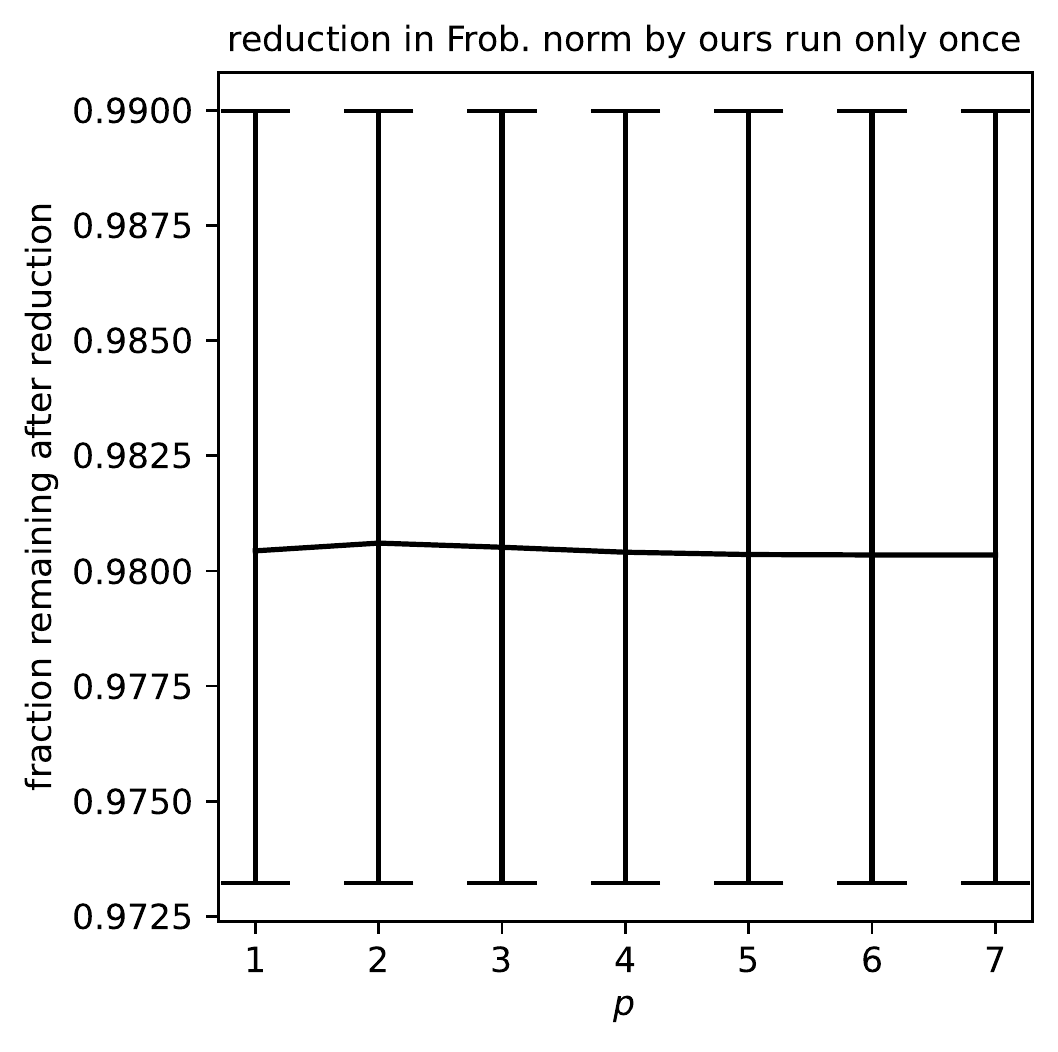}}

{\includegraphics[width=0.495\textwidth]{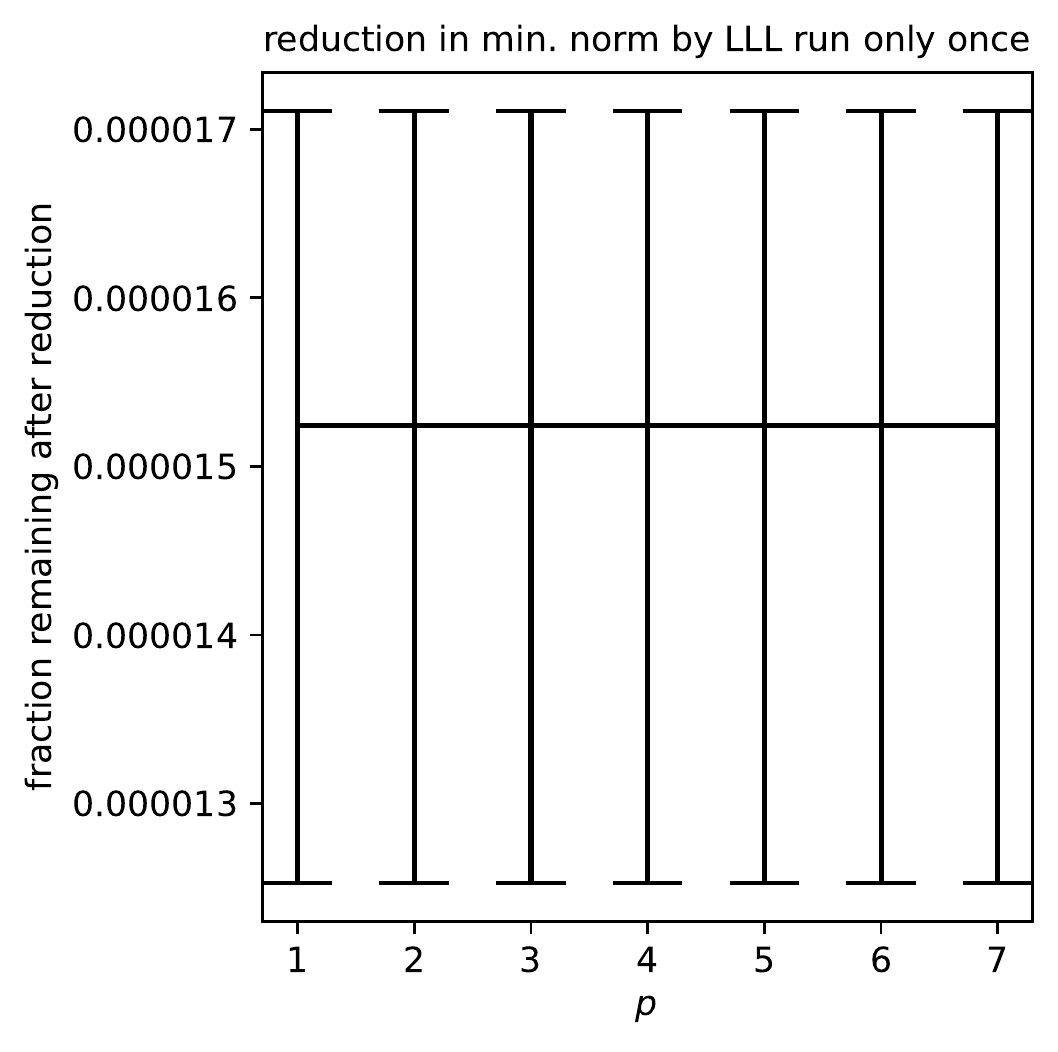}}
{\includegraphics[width=0.495\textwidth]{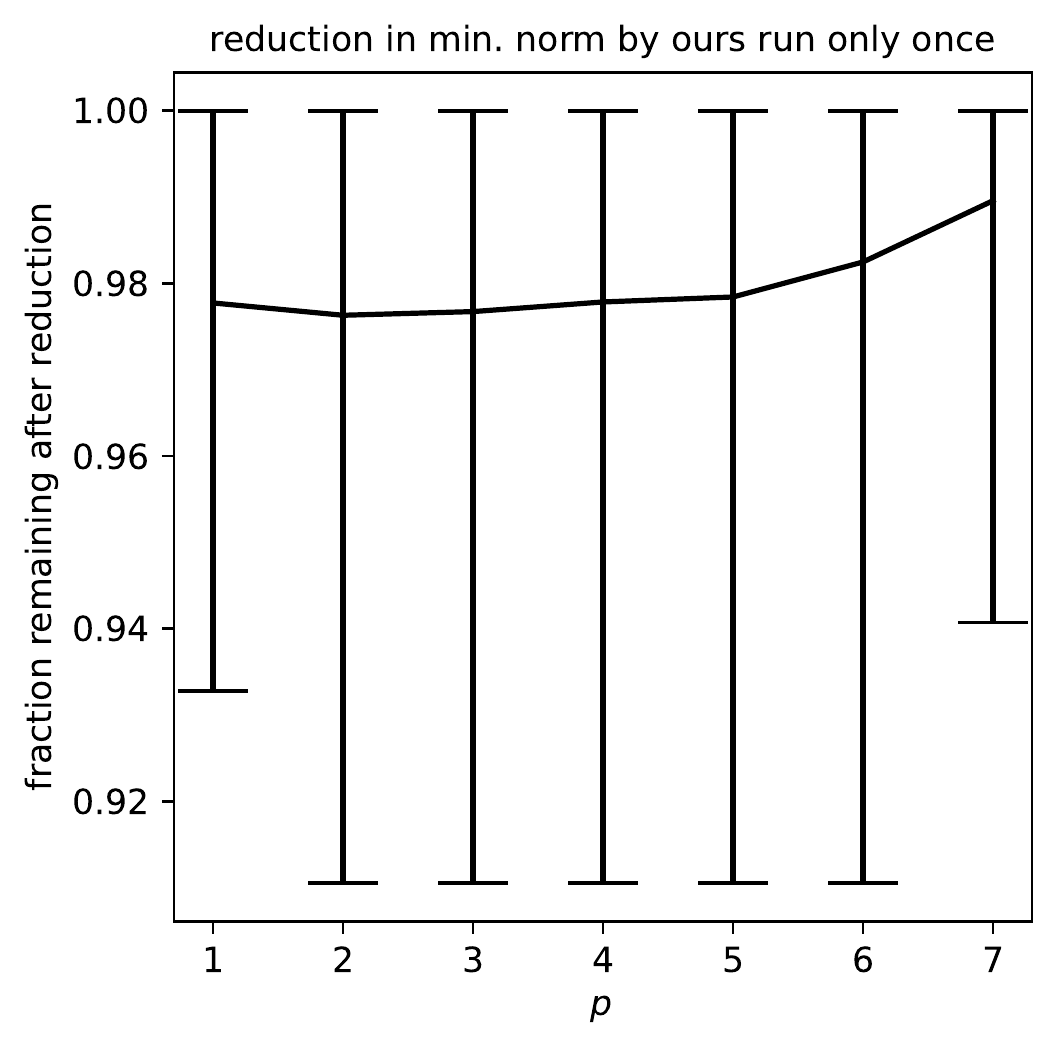}}

\end{centering}
\caption{$\delta = 1-10^{-1}$, $n = 192$, $q = 2^{31} - 1$}
\end{figure}

\begin{figure}
\begin{centering}
{\includegraphics[width=0.495\textwidth]{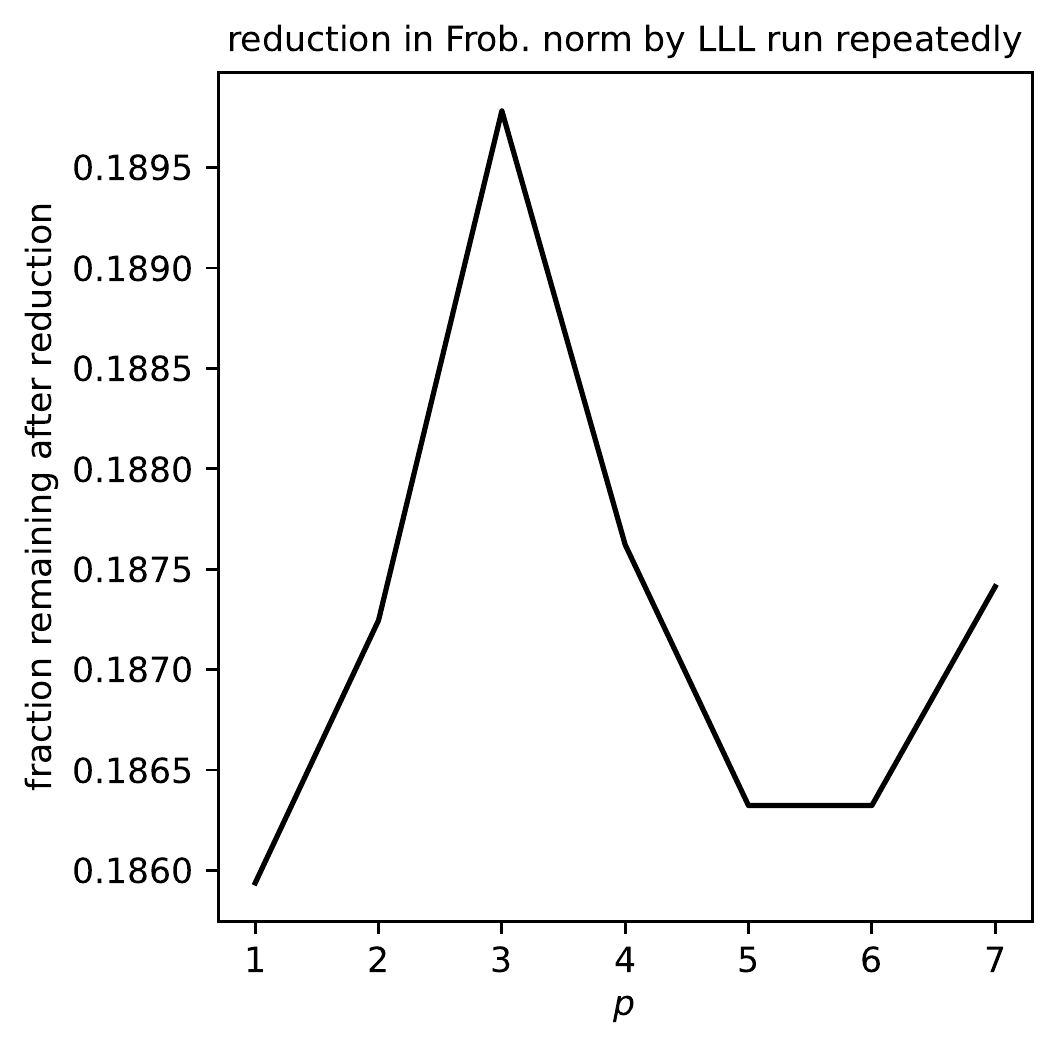}}
{\includegraphics[width=0.495\textwidth]{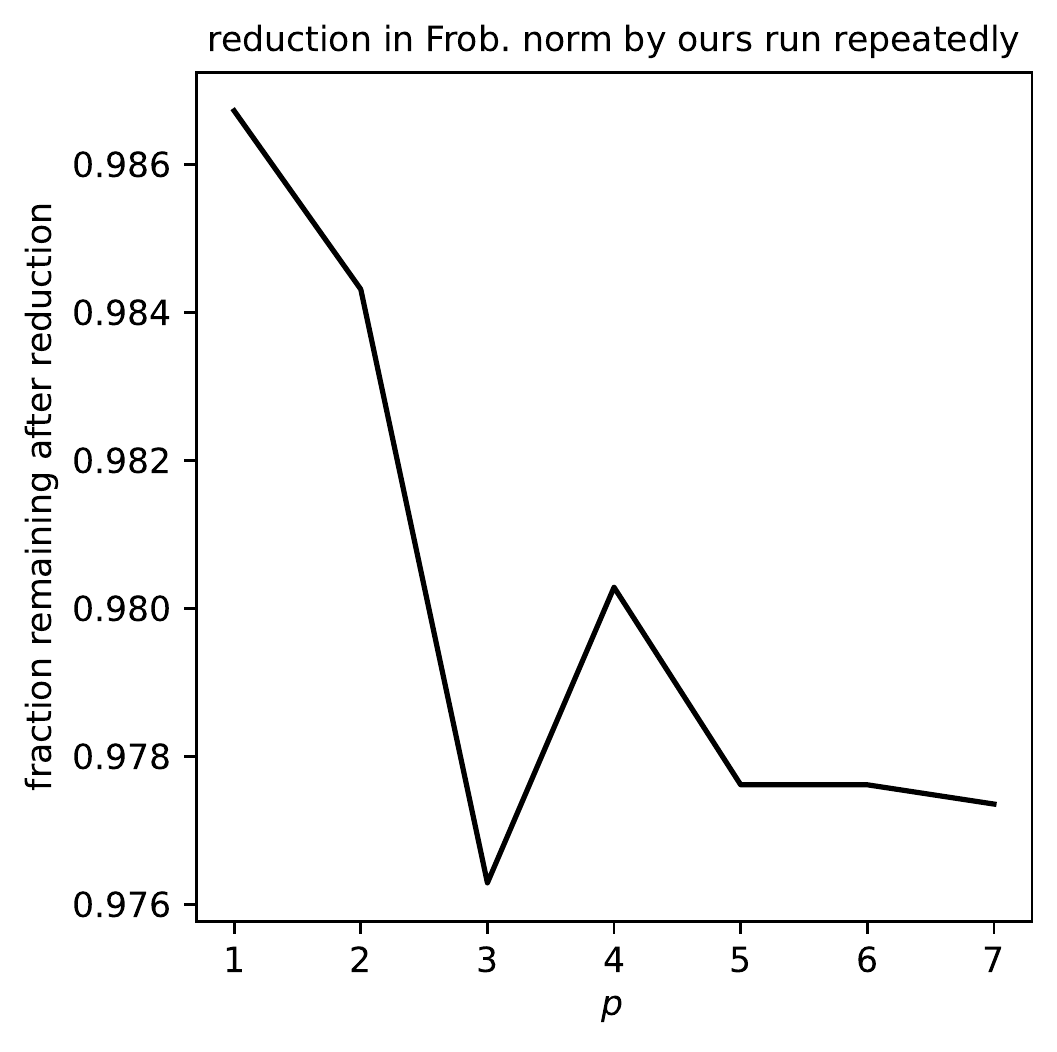}}

{\includegraphics[width=0.495\textwidth]{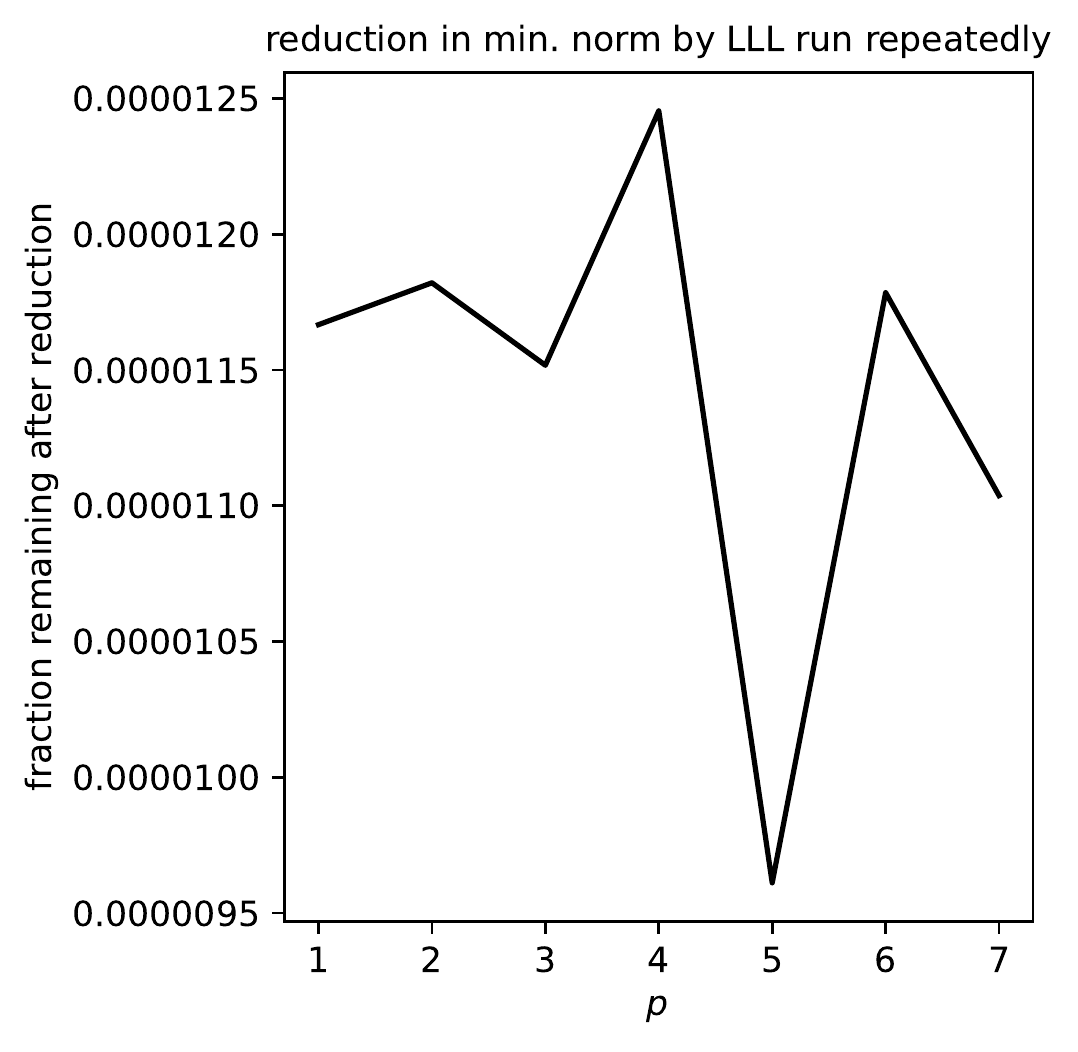}}
{\includegraphics[width=0.495\textwidth]{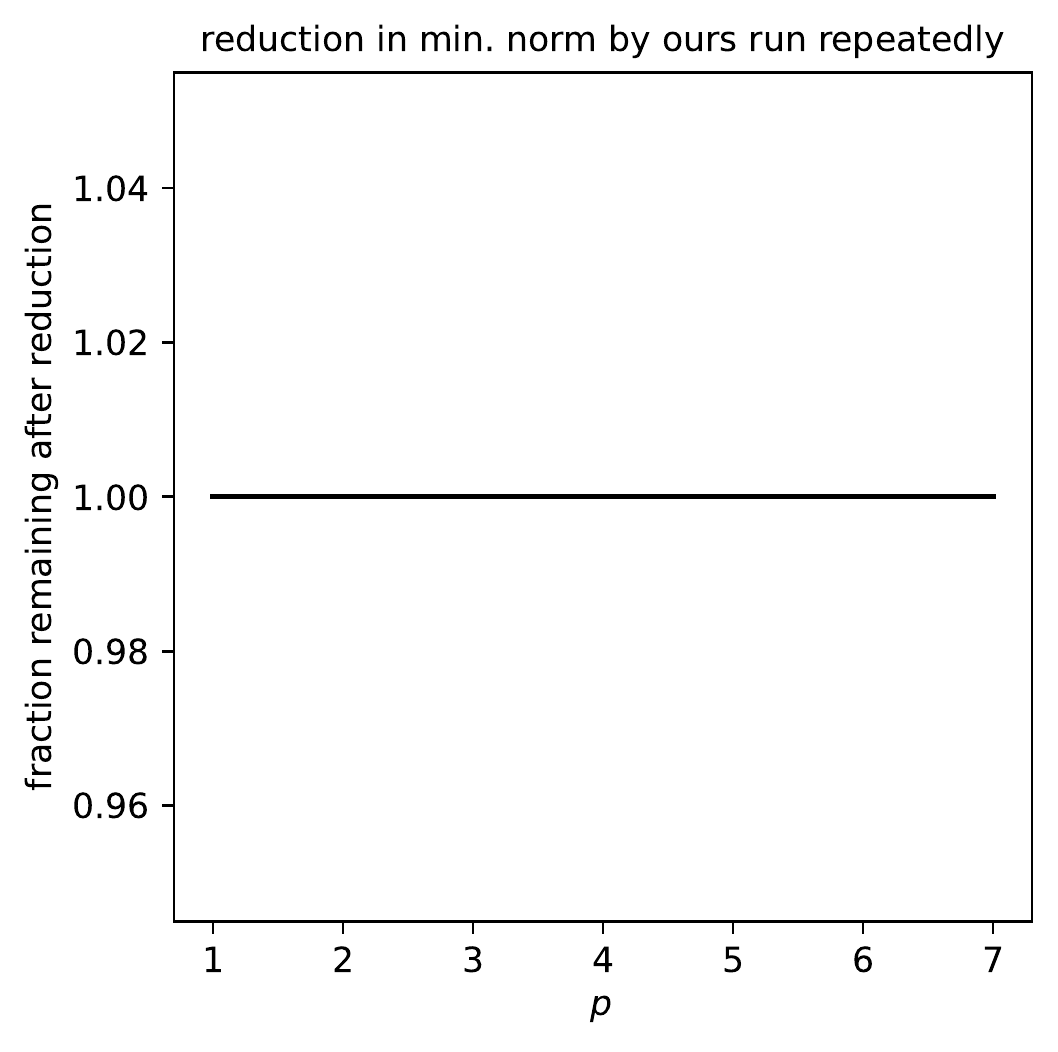}}

\end{centering}
\caption{$\delta = 1-10^{-1}$, $n = 192$, $q = 2^{31} - 1$}
\label{pserr1-1e-1-31}
\end{figure}

\clearpage

\bibliography{lattice}
\bibliographystyle{authordate1}

\end{document}